\documentclass[runningheads]{llncs}
\usepackage[T1]{fontenc}

\usepackage{tikz}
\usepackage{amsmath}
\usepackage{graphicx} 
\usepackage{amsmath,amsfonts,mathtools}
\usepackage{xspace}
\usepackage{etoolbox}
\usepackage{mdframed}
\usepackage{subfiles}
\usepackage{xr}

\usepackage{enumitem}
\usepackage[square,sort,comma,numbers]{natbib}
\usepackage{multirow}
\usepackage{multicol}
\mdfsetup{frametitlealignment=\center}
\usepackage{breqn}
\usepackage{url}
\usepackage{subcaption}
\usepackage{wrapfig}

\usepackage{hyperref}
\usepackage{makecell}
\usepackage[operators,sets, primitives, oracles]{cryptocode}
\usepackage{tabto}
\usepackage{adjustbox}
\usepackage{booktabs}
\newif\iffullversion
\fullversionfalse

\newcommand{\codeLink}{https://anonymous.4open.science/r/RoadsideCashSystem-4322}
\newcommand{\fullVersionLink}{https://anonymous.4open.science/r/RoadsideCashSystem-4322/paper.pdf}

\newcommand{\notecolor}{blue}
\renewcommand{\thenote}{\thesection.\arabic{note}}
\newcommand{\ac}[1]{\refstepcounter{note}{\bf \textcolor{\notecolor}{$\ll$AC~\thenote: {\sf #1}$\gg$}}}

\newcommand{\sys}{\textit{StreetCred}\xspace}

\newtheorem{defn}{Definition}
\newcommand{\myparagraph}[1]{\medskip\noindent\textbf{#1}:\xspace}

\newenvironment{myquote}%
{\list{}{\leftmargin=0.2in\rightmargin=0.2in}\item[] \em}%
{\endlist}

\newcommand{\kilobytes}{\ensuremath{\mathrm{KB}}\xspace}

\newcommand{\megabytes}{\ensuremath{\mathrm{MB}}\xspace}

\newcommand{\secref}[1]{\mbox{Sec.~\ref{#1}}\xspace}

\newcommand{\secsref}[2]{\mbox{Secs.~\ref{#1}--\ref{#2}}\xspace}

\newcommand{\figref}[1]{\mbox{Fig.~\ref{#1}}\xspace}

\newcommand{\tblref}[1]{\mbox{Table~\ref{#1}}\xspace}

\newcommand{\appref}[1]{\mbox{App.~\ref{#1}}\xspace}

\newcommand{\stepref}[1]{\mbox{Step~\ref{#1}}\xspace}
\newcommand{\stepsref}[2]{\mbox{Steps~\ref{#1}--\ref{#2}}\xspace}

\NewDocumentCommand{\nats}{o o}%
{%
	\IfNoValueTF{#1}%
	{\ensuremath{\mathbb{N}}\xspace}
	{\IfNoValueTF{#2}%
		{\ensuremath{[{#1}]}\xspace}
		{\ensuremath{[{#1},{#2}]}\xspace}
	}
}

\newcommand{\valNotation}[1]{\ensuremath{\expandafter\MakeLowercase{#1}}\xspace}
\newcommand{\setNotation}[1]{\ensuremath{\expandafter\mathcal{#1}}\xspace}
\newcommand{\rvNotation}[1]{\ensuremath{\expandafter\varmathbb{#1}}\xspace}
\newcommand{\fnNotation}[1]{\ensuremath{\expandafter\MakeLowercase{#1}}\xspace}
\newcommand{\algNotation}[1]{\ensuremath{\expandafter\mathtt{\MakeUppercase{#1}}}\xspace}
\newcommand{\constNotation}[1]{\ensuremath{\expandafter\mathsf{#1}}\xspace}
\newcommand{\polyNotation}[1]{\ensuremath{\expandafter\MakeLowercase{#1}}\xspace}
\newcommand{\vecNotation}[1]{\ensuremath{\expandafter\vec{#1}}\xspace}
\newcommand{\protocolNotation}[1]{\ensuremath{\pi}_{#1}\xspace}

\newcommand{\compactProtocol}{\ensuremath{\protocolNotation{\mathsf{comp}}}\xspace}
\newcommand{\noncompactProtocol}{\ensuremath{\protocolNotation{\mathsf{nc}}}\xspace}

\newcommand{\intsMod}[1]{\ensuremath{\mathbb{Z}_{#1}}\xspace}
\newcommand{\intsModStar}[1]{\ensuremath{\mathbb{Z}_{#1}^{*}}\xspace}
\newcommand{\grpOrd}{\ensuremath{p}\xspace}
\newcommand{\secParam}{\ensuremath{\lambda}\xspace}
\newcommand{\getsr}{\ensuremath{\stackrel{\scriptsize\$}{\leftarrow}}\xspace}

\newcommand{\integersModArg}[1]{\ensuremath{\mathbb{Z}_{#1}}\xspace}

\NewDocumentCommand{\grp}{g}{\ensuremath{\GG\IfValueT{#1}{_{#1}}}\xspace}

\NewDocumentCommand{\grpGen}{o}%
{%
	\IfNoValueTF{#1}{\ensuremath{g}\xspace}%
	{\ensuremath{g_{#1}}\xspace}%
}

\newcommand{\nonce}{\ensuremath{\mathsf{nonce}}\xspace}
\newcommand{\intentMsg}{\ensuremath{\mathsf{I}}\xspace}
\newcommand{\invalidList}{\ensuremath{\mathsf{InvList}}\xspace}
\newcommand{\abortWithdrawal}{\ensuremath{\mathsf{AbortWithdrawal}}\xspace}

\newcommand{\sigScheme}{\ensuremath{\protocolNotation{\mathsf{sgn}}}\xspace}

\newcommand{\dsign}[1]{\ensuremath{\sign_{#1}}\xspace}
	\newcommand{\vrfy}[1]{\ensuremath{\mathsf{Vrfy}_{#1}}\xspace}	
\newcommand{\signKeyGen}{\ensuremath{\mathsf{SKGen}}\xspace}	
\newcommand{\sigSchemeDefn}{\ensuremath{\sigScheme = \langle\signKeyGen, \dsign{}, \vrfy{} \rangle}\xspace}

\newcommand{\blindSigScheme}{\ensuremath{\protocolNotation{\mathsf{bsgn}}}\xspace}
\newcommand{\bkeygen}{\ensuremath{\mathsf{BKGen}}\xspace}
\NewDocumentCommand{\bblind}{o}{\ensuremath{\mathsf{BBlind}\IfNoValueF{#1}{_{#1}}}\xspace}
\NewDocumentCommand{\bsign}{g}{\ensuremath{\mathsf{BSign}\IfNoValueF{#1}{_{#1}}}\xspace}
\NewDocumentCommand{\bunblind}{o}{\ensuremath{\mathsf{BUnblind}\IfNoValueF{#1}{_{#1}}}\xspace}
\newcommand{\bvrfy}[1]{\ensuremath{\mathsf{BVrfy}_{#1}}\xspace}
\newcommand{\blindSigSchemeDefn}{\ensuremath{\blindSigScheme = \langle \bkeygen, \bblind, \bsign{}, \bunblind, \bvrfy{} \rangle}\xspace}

\newcommand{\blindSigSchemeZKP}{\ensuremath{\protocolNotation{\mathsf{zsgn}}}\xspace}
\newcommand{\zkeygen}{\ensuremath{\mathsf{ZkGen}}\xspace}
\NewDocumentCommand{\zblind}{o}{%
	\ensuremath{\mathsf{ZBlind}\IfNoValueF{#1}{_{#1}}}\xspace}
\NewDocumentCommand{\zsign}{o}{%
	\ensuremath{\mathsf{ZSign}\IfNoValueF{#1}{_{#1}}}\xspace}
\NewDocumentCommand{\zunblind}{o}{%
	\ensuremath{\mathsf{Ublind}\IfNoValueF{#1}{_{#1}}}\xspace}
\newcommand{\zvrfy}[1]{\ensuremath{\mathsf{ZVrfy}_{#1}}\xspace}
\NewDocumentCommand{\zprove}{g}{\ensuremath{\mathsf{ZProve}\IfValueT{#1}{_{#1}}}\xspace}
\newcommand{\blindSigSchemeZKPDefn}{\ensuremath{\blindSigSchemeZKP = \langle \zkeygen, \zsign{}, \zprove, \zvrfy{} \rangle}\xspace}

\newcommand{\bank}{\ensuremath{\mathcal{B}}\xspace}

\newcommand{\serialNum}{\ensuremath{R}\xspace}
\NewDocumentCommand{\dyPRF}{o}
{
	\IfNoValueTF{#1}
	{\ensuremath{\prf}\xspace}
	{\ensuremath{\prf_{#1}}\xspace}
}
\newcommand{\abort}{\ensuremath{\mathsf{Abort}}\xspace}
\newcommand{\dyPRFKeySpace}{\ensuremath{\integersModArg{\grpOrd}}\xspace}
\newcommand{\dyPRFDefn}{\ensuremath{\dyPRF: \dyPRFKeySpace \times \integersModArg{\grpOrd} \rightarrow \grp{}}\xspace}
\newcommand{\commScheme}{\ensuremath{\algNotation{Comm}}\xspace}
\newcommand{\commSchemeDefn}{\ensuremath{\commScheme: \integersModArg{\grpOrd}^{\numkeys + 1} \rightarrow \grp}\xspace}
\newcommand{\KDF}{\ensuremath{\hashFn'}\xspace}

\newcommand{\rsuWithdrawCoin}{\ensuremath{\textsc{ReqCoin}}\xspace}
\newcommand{\serverInit}{\ensuremath{\textsc{InitBank}}\xspace}
\newcommand{\serverVerifyTx}{\ensuremath{\textsc{UpdateTx}}\xspace}
\newcommand{\serverDblSpend}{\ensuremath{\textsc{DblSpend}}\xspace}
\newcommand{\serverRegisterVehicle}{\ensuremath{\textsc{RegUser}}\xspace}
\newcommand{\serverRegisterATM}{\ensuremath{\textsc{RegATM}}\xspace}
 
 \newcommand{\serverUpdateWithdrawal}{\ensuremath{\textsc{UpdateBal}}\xspace}

 \newcommand{\serverIssueCoin}{\ensuremath{\textsc{IssueCoin}}\xspace}
\newcommand{\rsuInit}{\ensuremath{\textsc{InitATM}}\xspace}
\newcommand{\rsuAssignCoin}{\ensuremath{\textsc{IssueVoucher}}\xspace}
\newcommand{\rsuVerifyCoin}{\ensuremath{\textsc{VerifyTx}}\xspace}

\newcommand{\carInit}{\ensuremath{\textsc{InitUser}}\xspace}
\newcommand{\carDepositCoin}{\ensuremath{\textsc{SpendCoin}}\xspace}
\newcommand{\carGetCoin}{\ensuremath{\textsc{Withdraw}}\xspace}
\NewDocumentCommand{\tx}{o}{\ensuremath{\textsc{tx\IfValueT{#1}{_{#1}}}}\xspace}
\NewDocumentCommand{\txAlt}{o}{\ensuremath{\textsc{tx\IfValueT{#1}{_{#1}}}'}\xspace}

\newcommand{\txID}{\ensuremath{t_{id}}\xspace}
\newcommand{\txIDAlt}{\ensuremath{t_{id}'}\xspace}
\newcommand{\carIssueReceipt}{\ensuremath{\algNotation{IssueReceipt}}}
\newcommand{\genericCoin}{\ensuremath{\mathsf{C}}\xspace}
\newcommand{\genericVoucher}{\ensuremath{\mathsf{V}}\xspace}
\newcommand{\genericTx}{\ensuremath{\mathsf{T}}\xspace}
\newcommand{\genericTxAlt}{\ensuremath{\mathsf{T}'}\xspace}

\newcommand{\rsuKeyStore}{\ensuremath{\mathsf{KS}}\xspace}

\newcommand{\rsuGetKey}{\ensuremath{\mathsf{Get}}\xspace}

\newcommand{\txCoinLabel}{\ensuremath{\coinExt}\xspace}

\newcommand{\coinStore}{\ensuremath{\mathsf{CS}}\xspace}
\newcommand{\vehicleIDList}{\ensuremath{\mathsf{IDS}}\xspace}
\newcommand{\numHonestRSU}{\ensuremath{\numRSUs_{H}}\xspace}	
\newcommand{\advState}{\ensuremath{\phi}\xspace}

\NewDocumentCommand{\genericSig}{o}{\ensuremath{\sigma\IfValueT{#1}{_{#1}}}\xspace}

\newcommand{\hashedCoin}{\ensuremath{K}\xspace}
\newcommand{\signedCoin}{\genericSig[c]\xspace}

\NewDocumentCommand{\vPubKey}{g}{\ensuremath{pk\IfValueT{#1}{_{#1}}}\xspace}
\NewDocumentCommand{\vPrivKey}{g}{\ensuremath{sk\IfValueT{#1}{_{#1}}}\xspace}
\NewDocumentCommand{\vSigPubKey}{g}{\ensuremath{pk\IfValueT{#1}{_{#1}}'}\xspace}
\NewDocumentCommand{\vSigPrivKey}{g}{\ensuremath{sk\IfValueT{#1}{_{#1}}'}\xspace}

\NewDocumentCommand{\vPRFKey}{g}{\ensuremath{k\IfValueT{#1}{_{#1}}}\xspace}
\NewDocumentCommand{\vKeyComm}{g}{\ensuremath{p\IfValueT{#1}{_{#1}}}\xspace}
\NewDocumentCommand{\vPRFTx}{g}{\ensuremath{U\IfValueT{#1}{_{#1}}}\xspace}

\NewDocumentCommand{\rsuPubKey}{g}{\ensuremath{pk^{A}\IfValueT{#1}{_{#1}}}\xspace}
\NewDocumentCommand{\rsuPrivKey}{g}{\ensuremath{sk^{A}\IfValueT{#1}{_{#1}}}\xspace}
\NewDocumentCommand{\rsuSigPubKey}{g}{\ensuremath{pk^{A'}\IfValueT{#1}{_{#1}}}\xspace}
\NewDocumentCommand{\rsuSigPrivKey}{g}{\ensuremath{sk^{A'}\IfValueT{#1}{_{#1}}}\xspace}

\newcommand{\sPubKey}{\ensuremath{\mathit{pk}_{\mathsf{sig}}^{\server}}\xspace}

\newcommand{\sPrivKey}{\ensuremath{\mathit{sk}_{\mathsf{sig}}^{\server}}\xspace}

\newcommand{\sPubKeyBlind}{\ensuremath{\mathit{pk}_{b}^{\server}}\xspace}
\newcommand{\sPrivKeyBlind}{\ensuremath{\mathit{sk}_{b}^{\server}}\xspace}
\newcommand{\sPubKeyBlindZKP}{\ensuremath{\mathit{pk}_{z}^{\server}}\xspace}
\newcommand{\sPrivKeyBlindZKP}{\ensuremath{\mathit{sk}_{z}^{\server}}\xspace}
\newcommand{\sPubKeyBlindZKPAlt}{\ensuremath{\mathit{pk}_{z}^{\server'}}\xspace}
\newcommand{\sPrivKeyBlindZKPAlt}{\ensuremath{\mathit{sk}_{z}^{\server'}}\xspace}

\newcommand{\rsuCert}[1]{\ensuremath{\mathsf{cert}^{A}_{#1}}\xspace}

\NewDocumentCommand{\genCarID}{g}{
	\IfValueTF{#1}
	{\ensuremath{\crHash(\pubKey{i#1})\xspace}}
	{\ensuremath{\crHash(\pubKey{\vID})}}
	}
\newcommand{\genCarIDAlt}{\ensuremath{\crHash(\pubKey{\vID'})}\xspace}

\NewDocumentCommand{\vID}{o}{%
        \ensuremath{i\IfNoValueF{#1}{_{#1}}}\xspace}
\newcommand{\vIDAlt}{\ensuremath{i'}\xspace}
\newcommand{\vIDAltAlt}{\ensuremath{i''}\xspace}
\newcommand{\vIDDep}{\ensuremath{k}\xspace}

\NewDocumentCommand{\genCar}{g}{
\IfValueTF{#1}
{\ensuremath{\carLabel_{#1}}\xspace}
{\ensuremath{\carLabel_{\vID}}\xspace}
}

\NewDocumentCommand{\user}{g}{
\IfValueTF{#1}
{\ensuremath{\carLabel_{#1}}\xspace}
{\ensuremath{\carLabel_{\vID}}\xspace}
}

\newcommand{\auxInfo}{\ensuremath{\mathsf{info}}\xspace}

\newcommand{\rsuID}{\ensuremath{j}\xspace}
\newcommand{\rsuIDAlt}{\ensuremath{j'}\xspace}
\newcommand{\rsuIDAltAlt}{\ensuremath{j''}\xspace}

\NewDocumentCommand{\genRSU}{g}{\ensuremath{\mathit{\rsu}\IfValueTF{#1}{_{#1}}{_{\rsuID}}}\xspace}

\newcommand{\ecash}{e-cash\xspace}
\newcommand{\Ecash}{E-cash\xspace}
\newcommand{\cashSchemeName}{hierarchical cash\xspace}

\newcommand{\genRSUIDAlt}{\ensuremath{j'}}

\NewDocumentCommand{\genRSUID}{o}{%
        \ensuremath{j\IfNoValueF{#1}{_{#1}}}\xspace}

\newcommand{\genDepRSUID}{\ensuremath{m}}
\newcommand{\genDepRSU}{\ensuremath{\carLabel_{\genDepRSUID}}\xspace}
\newcommand{\coinExt}{\ensuremath{C}\xspace}
\newcommand{\carSig}{\ensuremath{\phi}\xspace}

\newcommand{\repCoins}{coins\xspace}

\newcommand{\returnVar}{\ensuremath{rr}\xspace}
\newcommand{\returnVarAlt}{\ensuremath{rr'}\xspace}

\NewDocumentCommand{\repCoinAbbrv}{o}{\ensuremath{\mathsf{RCoin}\IfValueT{#1}{_{#1}}}\xspace}
\NewDocumentCommand{\repCoinsAbbrv}{o}{\ensuremath{\mathsf{Coin}\IfValueT{#1}{_{#1}}}\xspace}
\newcommand{\repCoinScheme}{\ensuremath{\Pi_{\mathsf{coin}}}\xspace}

\NewDocumentCommand{\signedCoinComm}{o}
{
\IfNoValueTF{#1}
{\ensuremath{\genericSig[\coinPRFKey]}\xspace}
{\ensuremath{\genericSig[\coinPRFKey, #1]}\xspace}
}

\newcommand{\setOfValidReceipts}{\ensuremath{\mathcal{R}}\xspace}

\newcommand{\frameAdversaryState}{\ensuremath{\phi}\xspace}

\newcommand{\setOfWithdrawnCoins}{\ensuremath{\setNotation{W}}\xspace}

\newcommand{\bankKeyPair}{\ensuremath{\langle \sPubKey, \sPrivKey \rangle}\xspace}
\newcommand{\genericRSUIDKeyPair}{\ensuremath{\langle \rsuPubKey{\rsuID}, \rsuPrivKey{\rsuID} \rangle}\xspace}
\newcommand{\genericRSUSigKeyPair}{\ensuremath{\langle \rsuSigPubKey{\rsuID}, \rsuSigPrivKey{\rsuID} \rangle}\xspace}
\newcommand{\genericUserIDKeyPair}{\ensuremath{\langle \vPubKey{\vID}, \vPrivKey{\vID} \rangle}\xspace}
\newcommand{\genericUserSigKeyPair}{\ensuremath{\langle \vPubKey{\vID}', \vPrivKey{\vID}' \rangle}\xspace}

\newcommand{\adversaryInputKeyPairs}{\ensuremath{\begin{array}{@{}l@{}} \langle \vPubKey{\vID} \rangle_{\vID \in \nats[1, \numHonestVehicles]},  \langle \vPrivKey{\vID} \rangle_{\vID \in \nats[\numHonestVehicles + 1 , \numVehicles]}, \\ \langle \rsuPubKey{\rsuID} \rangle_{\rsuID \in \nats[1, \numHonestRSU]}, \langle \rsuPrivKey{\rsuID} \rangle_{\rsuID \in \nats[\numHonestRSU + 1, \numRSUs]}\end{array}}\xspace}

\NewDocumentCommand{\signedComm}{o o o}
{
	\ifstrequal{#1}{1}
	{\IfNoValueTF{#2} {\ensuremath{\sigma_{s}}\xspace} {\IfNoValueTF{#3} {\ensuremath{\sigma_{s,#2}}\xspace} {\ensuremath{\sigma_{s,#2,#3}}\xspace}}} 
	{\IfNoValueTF{#2} {\ensuremath{\sigma_{t}}\xspace} {\IfNoValueTF{#3} {\ensuremath{\sigma_{t,#2}}\xspace} {\ensuremath{\sigma_{t,#2,#3}}\xspace}}} 
}

\NewDocumentCommand{\genericVoucherDefn}{o}{
	\IfNoValueTF{#1}
	{\ensuremath{\langle}
        \ensuremath{\prfEval[1][\coinID],}
         \ensuremath{\prfEval[2][\coinID],} \ensuremath{\coinID{},}
         \ensuremath{\coinRandVehicle,} \ensuremath{\coinRand,}
         \ensuremath{\zkpPRF[\coinID]}, \ensuremath{\auxInfo \rangle} \xspace}		
	{\ensuremath{\langle}
	 \ensuremath{\prfEval[1][\coinID],}
	\ensuremath{\prfEval[2][\coinID],} \ensuremath{\coinID{},}
	\ensuremath{\coinRandVehicle,} \ensuremath{\coinRand,}
	\ensuremath{\zkpPRF[\coinID]}, \ensuremath{\auxInfo \rangle} \xspace}
}

\newcommand{\genericVoucherAlt}{\genericVoucher'\xspace}

\NewDocumentCommand{\genericCoinDefn}{o}
{
	\IfNoValueTF{#1}
	{
		\ensuremath{\langle \keyCommitment[1], \signedComm[1] \rangle\xspace}
	}
	{
		\ensuremath{\langle \keyCommitment[1][#1], \signedComm[1][#1]\rangle\xspace}
	}
}

\NewDocumentCommand{\genericTxDefn}{o}
{
	\IfNoValueTF{#1}
	{
		\ensuremath{\langle \txRand, \dblSpendMark{\txID},  \dblSpendCT{\txID}, \txID \rangle\xspace}
	}
	{
		\ensuremath{\langle \langle \vPRFTx{\txID}, \zkpPRF[\txID], \vKeyComm{#1}, \zkpSigVehicle[#1] \rangle \xspace}
	}
}

\NewDocumentCommand{\genericTxAltDefn}{o}
{
	\IfNoValueTF{#1}
	{
		\ensuremath{\langle \txRandAlt, \dblSpendMark{\txIDAlt},  \dblSpendCT{\txIDAlt}, 
		\txIDAlt 
		\rangle\xspace}
	}
	{
		\ensuremath{\langle \langle \vPRFTx{\txID}, \zkpPRF[\txID], \vKeyComm{#1}, 
		\zkpSigVehicle[#1] \rangle \xspace}
	}
}

\NewDocumentCommand{\genericTxSigDefn}{o}
{
	\IfNoValueTF{#1}
	{
		\ensuremath{\langle \vPRFTx{\txID}, \zkpPRF[\txID], \vKeyComm{\vID}, \zkpSigVehicle[\vID] \rangle\xspace}
	}
	{
		\ensuremath{\langle \vPRFTx{\txID}, \zkpPRF[\txID], \vKeyComm{#1}, \zkpSigVehicle[#1] \ranglexspace}
	}
}

\newcommand{\vrfPubKey}{\ensuremath{pk_{v}}}
\newcommand{\vrfPrivKey}{\ensuremath{sk_{v}}}

\newcommand{\vrfSeed}{\ensuremath{\genericRand{}}\xspace}

\NewDocumentCommand{\repCoinDefn}{o}{
	\IfNoValueTF{#1}
	{\ensuremath{\repCoinAbbrv \gets \langle \keyCommitment[1],}
         \ensuremath{\keyCommitment[2],} \ensuremath{\prfEval[1],}
         \ensuremath{\prfEval[2],} \ensuremath{\coinID{},}
         \ensuremath{\coinRandVehicle,} \ensuremath{\coinRand,}
         \ensuremath{\zkpSig[1],} \ensuremath{\zkpSig[2],}
         \ensuremath{\zkpPRF,} \ensuremath{\auxInfo \rangle} \xspace}		
	{\ensuremath{\repCoinAbbrv \gets \langle \keyCommitment[1][#1],}
         \ensuremath{\keyCommitment[2][#1],} \ensuremath{\coinID{},}
         \ensuremath{\prfEval[1][\coinID{}],}
         \ensuremath{\prfEval[2][\coinID{}],}
         \ensuremath{\coinRandVehicle,} \ensuremath{\coinRand,}
         \ensuremath{\zkpSig[1][#1],} \ensuremath{\zkpSig[2][#1],}
         \ensuremath{\zkpPRF[\coinID{}],} \ensuremath{\auxInfo \rangle}\xspace}
}

\newcommand{\symmKey}{\ensuremath{\kappa}\xspace}

\newcommand{\dblSpendMark}[1]{\ensuremath{X_{#1}}\xspace}
\newcommand{\dblSpendCT}[1]{\ensuremath{E_{#1}}\xspace}

\newcommand{\vCert}[1]{\ensuremath{\mathsf{cert}_{#1}}\xspace}
\NewDocumentCommand{\hashFn}{o}{\ensuremath{\fnNotation{H}\IfValueT{#1}{_{#1}}}\xspace}
\newcommand{\crHash}{\ensuremath{\hashFn}\xspace}
\newcommand{\roHash}{\ensuremath{\ro}\xspace}
\newcommand{\genericRand}[1]{\ensuremath{r_{#1}}\xspace}

\newcommand{\coinRand}{\ensuremath{r_c}\xspace}

\newcommand{\coinRandVehicle}{\ensuremath{r_v}\xspace}
\newcommand{\txRand}{\ensuremath{r_{t}}\xspace}
\newcommand{\coinRandAlt}{\ensuremath{r_c'}\xspace}

\newcommand{\txRandAlt}{\ensuremath{r_{t}'}\xspace}

\newcommand{\keyGen}{\ensuremath{\algNotation{K}}\xspace}
\NewDocumentCommand{\encrypt}{g}{%
	\ensuremath{\algNotation{E}\IfNoValueF{#1}{_{#1}}}\xspace}
\NewDocumentCommand{\decrypt}{g}{%
	\ensuremath{\algNotation{D}\IfNoValueF{#1}{_{#1}}}\xspace}

\NewDocumentCommand{\pubKey}{g}{%
	\ensuremath{\mathit{pk}\IfNoValueF{#1}{_{#1}}}\xspace}
\NewDocumentCommand{\privKey}{g}{%
	\ensuremath{\mathit{sk}\IfNoValueF{#1}{_{#1}}}\xspace}
\NewDocumentCommand{\pubSignKey}{g}{%
	\ensuremath{\mathit{pk}^{U}\IfNoValueF{#1}{_{#1}}}\xspace}
\NewDocumentCommand{\privSignKey}{g}{%
	\ensuremath{\mathit{sk}^{U}\IfNoValueF{#1}{_{#1}}}\xspace}

\NewDocumentCommand{\plaintextSpace}{o}{%
	\ensuremath{\setNotation{P}\IfNoValueF{#1}{_{#1}}}\xspace}
\NewDocumentCommand{\ciphertextSpace}{o}{%
	\ensuremath{\setNotation{C}\IfNoValueF{#1}{_{#1}}}\xspace}
\NewDocumentCommand{\plaintext}{o}{%
	\ensuremath{\valNotation{m}\IfNoValueF{#1}{_{#1}}}\xspace}
\NewDocumentCommand{\ciphertext}{o}{%
	\ensuremath{\valNotation{c}\IfNoValueF{#1}{_{#1}}}\xspace}
\NewDocumentCommand{\ciphertextAlt}{o}{%
	\ensuremath{\valNotation{c}'\IfNoValueF{#1}{_{#1}}}\xspace}
\NewDocumentCommand{\msg}{o}{%
	\ensuremath{\valNotation{m}\IfNoValueF{#1}{_{#1}}}\xspace}

\newcommand{\Experiment}[2]{\ensuremath{\textsc{Expt}^{#1}_{#2}}\xspace}
\newcommand{\Advantage}[2]{\ensuremath{\mathbf{Adv}^{#1}_{#2}}\xspace}
\newcommand{\hidingBit}{\ensuremath{\valNotation{b}}\xspace}
\newcommand{\hidingGuess}{\ensuremath{\hat{\valNotation{b}}}\xspace}
\newcommand{\hidingAdversary}{\ensuremath{\algNotation{A}_h}\xspace}
\renewcommand{\oracle}{\ensuremath{\algNotation{O}}\xspace}
\NewDocumentCommand{\hidingOracle}{g g g}{%
	\ensuremath{\algNotation{O}\IfNoValueF{#1}{({#1},{#2},{#3})}}\xspace}
\newcommand{\bindingAdversary}{\ensuremath{\algNotation{A}}\xspace}
\newcommand{\ufcmaAdversary}{\ensuremath{\algNotation{A}_{\mathsf{ufcma}}}\xspace}
\newcommand{\ufcmaBlindAdversary}{\ensuremath{\algNotation{A}_{\mathsf{ufcmab}}}\xspace}

\NewDocumentCommand{\fairexchangeAdversary}{o}{\ensuremath{\algNotation{A}\IfNoValueF{#1}{_{#1}}}\xspace}
\newcommand{\fairexchangeLabel}{\scriptsize\textsf{FairExchange}\xspace}
\newcommand{\fairexchangeComb}{\ensuremath{\fairexchangeAdversary[1], \fairexchangeAdversary[2]}\xspace}
\newcommand{\fairexchangeAdv}{\ensuremath{\Advantage{\fairexchangeLabel}{\repCoinScheme}(\timeBound, \oracleQueries)}\xspace}
\newcommand{\frameAdv}{\ensuremath{\Advantage{\frameLabel}{\repCoinScheme}(\timeBound, \oracleQueries)}\xspace}

\newcommand{\prfAdversary}{\ensuremath{\algNotation{A}_{\mathsf{prf}}}\xspace}
\NewDocumentCommand{\blindSigAdversary}{o}{%
        \ensuremath{\algNotation{A}\IfNoValueF{#1}{_{#1}}}\xspace}
\NewDocumentCommand{\locPrivacyAdversary}{o}{%
        \ensuremath{\algNotation{A}\IfNoValueF{#1}{_{#1}}}\xspace}
\NewDocumentCommand{\anonAdversary}{o}{%
        \ensuremath{\algNotation{A}\IfNoValueF{#1}{_{#1}}}\xspace}
\NewDocumentCommand{\frameAdversary}{o}{%
        \ensuremath{\algNotation{A}\IfNoValueF{#1}{_{#1}}}\xspace}
\newcommand{\blindSigBit}{\ensuremath{\valNotation{b}}\xspace}
\newcommand{\blindSigGuess}{\ensuremath{\hat{\valNotation{b}}}\xspace}
\NewDocumentCommand{\blindSigOracle}{g g g g}{%
	\ensuremath{\algNotation{O}\IfNoValueF{#1}{_{#1}({#2},{#3},{#4})}}\xspace}
\newcommand{\indcpaBit}{\ensuremath{\valNotation{b}}\xspace}
\newcommand{\indcpaGuess}{\ensuremath{\hat{\valNotation{b}}}\xspace}
\newcommand{\indcpaAdversary}{\ensuremath{\algNotation{A}_{\mathsf{cpa}}}\xspace}
\NewDocumentCommand{\unforgeAdversary}{o}{\ensuremath{\algNotation{A}\IfValueT{#1}{_{#1}}}\xspace}
\NewDocumentCommand{\indcpaOracle}{g g g g}{%
	\ensuremath{\algNotation{O}\IfNoValueF{#1}{_{#1}({#2},{#3},{#4})}}\xspace}
\NewDocumentCommand{\dblSpendAdversary}{o}{\ensuremath{\algNotation{A}\IfValueT{#1}{_{#1}}}\xspace}
\newcommand{\dblSpendAdversaryState}{\ensuremath{\phi}\xspace}
\newcommand{\anonBit}{\ensuremath{\valNotation{b}}\xspace}
\newcommand{\anonGuess}{\ensuremath{\hat{\valNotation{b}}}\xspace}
\NewDocumentCommand{\anonLabel}{o}{%
	{\scriptsize\textsf{anon\IfNoValueF{#1}{-{#1}}}}\xspace}
\NewDocumentCommand{\prfLabel}{o}{%
	{\scriptsize\textsf{prf\IfNoValueF{#1}{-{#1}}}}\xspace}
\NewDocumentCommand{\indcpaLabel}{o}{%
	{\scriptsize\textsf{ind-cpa\IfNoValueF{#1}{-{#1}}}}\xspace}
\NewDocumentCommand{\hidingLabel}{o}{%
	{\scriptsize\textsf{hiding\IfNoValueF{#1}{-{#1}}}}\xspace}
\newcommand{\bindingLabel}{\scriptsize\textsf{binding}\xspace}
\newcommand{\ufcmaLabel}{\scriptsize\textsf{uf-cma}\xspace}
\newcommand{\ufcmaBlindLabel}{\scriptsize\textsf{uf-cma-blind}\xspace}
\NewDocumentCommand{\soundnessLabel}{o}{%
        {\scriptsize\textsf{ext\IfNoValueF{#1}{-{#1}}}}\xspace}
\NewDocumentCommand{\blindSigLabel}{o}{%
	{\scriptsize\textsf{b\IfNoValueF{#1}{-{#1}}}}\xspace}
\NewDocumentCommand{\locPrivacyLabel}{o}{%
	{\scriptsize\textsf{loc-priv\IfNoValueF{#1}{-{#1}}}}\xspace}
\newcommand{\unforgeLabel}{\scriptsize\textsf{mint}\xspace}
\newcommand{\unforgeVoucherLabel}{\scriptsize\textsf{ufvoucher}\xspace}
\newcommand{\frameLabel}{\scriptsize\textsf{frame}\xspace}
\newcommand{\dblSpendLabel}{\scriptsize\textsf{dblspnd}\xspace}
\newcommand{\locPrivacyBit}{\ensuremath{b}\xspace}
\newcommand{\locPrivacyGuess}{\ensuremath{\hat{b}}\xspace}
\newcommand{\locPrivacyAdvState}{\ensuremath{\phi}\xspace}
\newcommand{\anonAdversaryState}{\ensuremath{\phi}\xspace}
\newcommand{\dlogLabel}{\scriptsize\textsf{dl}\xspace}
\newcommand{\dlogAversary}{\ensuremath{\algNotation{A}_{\mathsf{dl}}}\xspace}

\newcommand{\timeBound}{\ensuremath{t}\xspace}
\newcommand{\oracleQueries}{\ensuremath{q}\xspace}
\newcommand{\neglProb}{\ensuremath{\fnNotation{negl}(\secParam)}\xspace}
\newcommand{\prob}[1]{\ensuremath{\mathbb{P}\mathopen{}\left({#1}\right)\mathclose{}}\xspace}

\newcommand{\commFn}{\ensuremath{\algNotation{comm}}\xspace}
\newcommand{\genericCommitment}{\ensuremath{com}\xspace}
\NewDocumentCommand{\genericFnFamily}{ o }{\ensuremath{\fnNotation{G}\IfNoValueF{#1}{_{#1}}}\xspace}
\newcommand{\genericFnKeyspace}{\ensuremath{\setNotation{K}}\xspace}
\newcommand{\genericFnKey}{\ensuremath{k}\xspace}
\newcommand{\genericFnDomain}{\ensuremath{\setNotation{D}}\xspace}

\newcommand{\genericFnRange}{\ensuremath{\setNotation{R}}\xspace}
\newcommand{\genericFn}{\ensuremath{\fnNotation{H}}\xspace}

\newcommand{\allFuncs}[1]{\ensuremath{\setNotation{F}({#1})}\xspace}
\newcommand{\prfGuess}{\ensuremath{\valNotation{\hat{b}}}\xspace}
\newcommand{\Adv}[1]{\ensuremath{\algNotation{A}_{#1}}\xspace}

\newcommand{\forwardOracleQueries}{\ensuremath{q}\xspace}

\NewDocumentCommand{\zkp}{ g  g }{\ensuremath{\Psi\IfNoValueF{#1}{_{#1 \IfNoValueF{#2}{, #2}}}}\xspace}
\NewDocumentCommand{\zkpGen}{ o }{\ensuremath{\mathsf{pGen}\IfNoValueF{#1}{_{#1}}}\xspace}
\newcommand{\zkpVerify}{\ensuremath{\mathsf{pVerify}}\xspace}
\newcommand{\zkpSim}{\ensuremath{\mathsf{pSim}}\xspace}
\newcommand{\zkpSimHash}{\ensuremath{\zkpSim.\mathsf{hash}}\xspace}
\newcommand{\zkpSimProve}{\ensuremath{\zkpSim.\mathsf{go}}\xspace}
\newcommand{\zkpExtract}{\ensuremath{\mathsf{pEx}}\xspace}
\newcommand{\zkpExtractHash}{\ensuremath{\zkpExtract.\mathsf{hash}}\xspace}
\newcommand{\zkpExtractGen}{\ensuremath{\zkpExtract.\mathsf{go}}\xspace}
\NewDocumentCommand{\zkpProverAdversary}{ o }{%
        \ensuremath{\algNotation{A}\IfNoValueF{#1}{_{#1}}}\xspace}

\NewDocumentCommand{\zkLabel}{o}{%
	\scriptsize\textsf{zk\IfNoValueF{#1}{-{#1}}}\xspace}
\newcommand{\lang}{\ensuremath{\setNotation{L}}\xspace}
\newcommand{\relation}{\ensuremath{\mathcal{R}_{\lang}}\xspace}
\newcommand{\statement}{\ensuremath{\valNotation{x}}\xspace}
\NewDocumentCommand{\witness}{ g }{\ensuremath{\valNotation{w}\IfNoValueF{#1}{_{#1}}}\xspace}
\newcommand{\zkpName}{\ensuremath{\Pi}\xspace}
\newcommand{\zkpVerifierAdversary}{\ensuremath{\algNotation{D}}\xspace}
\newcommand{\zkpVerifierAdversaryBit}{\ensuremath{\valNotation{b}'}\xspace}
\newcommand{\zkpVerifierAdversaryGuess}{\ensuremath{\hat{\valNotation{b}}'}\xspace}

\newcommand{\randomOracle}{\ensuremath{\fnNotation{h}}\xspace}

\newcommand{\randomOracles}{\ensuremath{\mathcal{H}}\xspace}
\newcommand{\coinSN}{\ensuremath{\mathsf{ctr}}\xspace}
\newcommand{\coinSNCommitment}{\ensuremath{J}\xspace}

\newcommand{\server}{\ensuremath{\bank}\xspace}

\NewDocumentCommand{\locPrivacySecDef}{g}{\textsf{priv}\IfValueT{#1}{[{#1}]}\xspace}

\NewDocumentCommand{\anonSecDef}{g}{\textsf{anon}\IfValueT{#1}{[{#1}]}\xspace}

\newcommand{\numVehicles}{\ensuremath{\eta}\xspace}
\newcommand{\numHonestVehicles}{\ensuremath{\numVehicles_{H}}\xspace}

\NewDocumentCommand{\unforgetxSecDef}{g}{\scriptsize \textsf{Unforge-Tx}\IfValueT{#1}{[{#1}]}\xspace}

\newcommand{\coinVoucherLink}{\ensuremath{\rightleftharpoons}\xspace}

\newcommand{\numRSUs}{\ensuremath{\gamma}\xspace}

\NewDocumentCommand{\genericPRFKey}{o}{\IfNoValueTF{#1}{\genericFnKey}{\ensuremath{\genericFnKey_{#1}}}\xspace}

\newcommand{\assign}{\ensuremath{\gets}\xspace}

\NewDocumentCommand{\rsuPRFKey}{o}
{
	\IfNoValueTF{#1}
	{\ensuremath{s}\xspace}
	{\ensuremath{s_{#1}}\xspace}
}

\NewDocumentCommand{\coinPRFKey}{o}
{
	\IfNoValueTF{#1}
	{\ensuremath{c}\xspace}
	{\ensuremath{c_{#1}}\xspace}
}

\NewDocumentCommand{\userPRFKey}{o}
{
	\IfNoValueTF{#1}
	{\ensuremath{s_{\vID}}\xspace}
	{\ensuremath{s_{#1}}\xspace}
}

\NewDocumentCommand{\prfKey}{o o o}
{
	\ifstrequal{#1}{1}
	{\IfNoValueTF{#2} {\ensuremath{a}\xspace} {\IfNoValueTF{#3} {\ensuremath{a_{#2}}\xspace} {\ensuremath{a_{#2,#3}}\xspace}}} 
	{\IfNoValueTF{#2} {\ensuremath{b}\xspace} {\IfNoValueTF{#3} {\ensuremath{b_{#2}}\xspace} {\ensuremath{b_{#2,#3}}\xspace}}} 
}

\NewDocumentCommand{\blindedPRFKey}{o o o}
{
	\ifstrequal{#1}{1}
	{\IfNoValueTF{#2} {\ensuremath{\hat{c}}\xspace} {\IfNoValueTF{#3} {\ensuremath{\hat{c}_{#2}}\xspace} {\ensuremath{\hat{c}_{#2,#3}}\xspace}}} 
	{\IfNoValueTF{#2} {\ensuremath{\hat{s}}\xspace} {\IfNoValueTF{#3} {\ensuremath{\hat{s}_{#2}}\xspace} {\ensuremath{\hat{s}_{#2,#3}}\xspace}}} 
}

\NewDocumentCommand{\blindedVPRFKey}{o}
{
{\ensuremath{\hat{\vPRFKey}_{#1}}\xspace}
}

\NewDocumentCommand{\prfEval}{o o}
{
	\ifstrequal{#1}{1}{\ensuremath{X\IfNoValueF{#2}{_{#2}}}\xspace}{\ensuremath{Z\IfNoValueF{#2}{_{#2}}}\xspace}
}
\NewDocumentCommand{\keyCommitment}{o o}
{
	\ifstrequal{#1}{1}{\ensuremath{A\IfNoValueF{#2}{_{#2}}}\xspace}{\ensuremath{B\IfNoValueF{#2}{_{#2}}}\xspace}
}

\newcommand{\numkeys}{\ensuremath{n}\xspace}

\NewDocumentCommand{\coinID}{g}{\ensuremath{R\IfValueT{#1}{_{#1}}}\xspace}

\NewDocumentCommand{\zkpSig}{o o}
{
	\ifstrequal{#1}{1}{{\ensuremath{\Psi_{s\IfValueT{#2}{, #2}}}\xspace}}{{\ensuremath{\Psi_{t\IfValueT{#2}{, #2}}}\xspace}}
}

\NewDocumentCommand{\zkpSigVehicle}{o}
{
	\ensuremath{\Psi_{p\IfValueT{#1}{_{#1}}}\xspace}
}
\NewDocumentCommand{\signedCommVehicleKey}{o}
{
	\ensuremath{\sigma_{\vPRFKey\IfValueT{#1}{_{#1}}}\xspace}
}

\NewDocumentCommand{\zkpPRF}{o}{\ensuremath{\Psi_{\IfValueT{#1}{#1}}}\xspace}

\newcommand{\keyIdx}{\ensuremath{\ell}\xspace}

\newcommand{\simOp}[1]{\ensuremath{\overline{#1}}\xspace}
\newcommand{\hybridArg}[1]{\ensuremath{\mathsf{Hybrid}_{#1}}\xspace}
\newcounter{ctr}
\newcommand{\thisHyb}{\ensuremath{\hybridArg{\thectr}}\xspace}
\newcommand{\prevHyb}{\ensuremath{\hybridArg{\addtocounter{ctr}{-1}\thectr\stepcounter{ctr}}}\xspace}
\newcommand{\nextHyb}{\noindent\ensuremath{\hybridArg{\stepcounter{ctr}\thectr}}\xspace}
\newcommand{\probDiff}[2]{\ensuremath{\left|\prob{\Experiment{#1}{\repCoinScheme}(\Adv{}) = 1} 
		- \prob{\Experiment{#2}{\repCoinScheme}(\Adv{}) = 1}\right|}\xspace}
\newcommand{\grpIdentity}{\ensuremath{\mathbf{1}}\xspace}

\newcommand{\CRLabel}{\ensuremath{CR}\xspace}
\newcommand{\hidingExp}{\ensuremath{\Experiment{\hidingLabel[\hidingBit]}{\commFn}(\hidingAdversary)}\xspace} 
\newcommand{\hidingAdv}{\ensuremath{\Advantage{\hidingLabel}{\commFn}(\timeBound,
 \oracleQueries)}\xspace}
\newcommand{\bindingAdv}{	\Advantage{\bindingLabel}{\commFn}(\timeBound, 
\oracleQueries)}
\newcommand{\unforgeAdv}{\ensuremath{\Advantage{\unforgeLabel}{\repCoinScheme}(\timeBound,
 \oracleQueries)}\xspace}
\NewDocumentCommand{\zkpVerifierExp}{o}{\ensuremath{\Experiment{\zkLabel\IfValueT{#1}{[#1]}}{\zkpName}(\zkpVerifierAdversary)}\xspace}
\newcommand{\zkpHidingAdv}{\ensuremath{\Advantage{\zkLabel}{\zkpName}(\timeBound, \oracleQueries)}\xspace}

\newcommand{\zkpSoundnessAdv}{\ensuremath{\Advantage{\soundnessLabel}{\zkpName}(\timeBound, \oracleQueries)}\xspace}

\newcommand{\ufcmaBlindAdv}{\ensuremath{\Advantage{\ufcmaBlindLabel}{\blindSigScheme}(\timeBound, \oracleQueries})\xspace}
\newcommand{\crAdversary}{\ensuremath{\Adv{CR}}\xspace}

\newcommand{\crAdv}{\ensuremath{\Advantage{\CRLabel}{\hashFn}(\timeBound, \oracleQueries)}\xspace}
\NewDocumentCommand{\PRFExp}{o}{\ensuremath{\Experiment{\prfLabel\IfValueT{#1}{-#1}}{\dyPRF}(\prfAdversary)}\xspace}
\newcommand{\PRFAdv}{\ensuremath{\Advantage{\prfLabel}{\dyPRF}(\timeBound, \oracleQueries)}\xspace}
\newcommand{\untraceAdv}{\ensuremath{\Advantage{\locPrivacyLabel}{\repCoinScheme}(\timeBound, \oracleQueries)}\xspace}

\NewDocumentCommand{\anonExp}{o}{\ensuremath{\Experiment{\anonLabel[#1]}{\repCoinScheme}(\anonAdversary[1], \anonAdversary[2])}\xspace}

\newcommand{\ufcmaAdv}{\ensuremath{\Advantage{\ufcmaLabel}{\sigScheme }(\timeBound, \oracleQueries)}\xspace}
\newcommand{\dlogExp}{\ensuremath{\Experiment{\dlogLabel}{\langle \grp, \grpGen \rangle}(\dlogAversary)}\xspace}
\newcommand{\dlogAdv}{\ensuremath{\Advantage{\dlogLabel}{\langle \grp,\grpGen \rangle}(\timeBound)}\xspace}

\newcommand{\dblSpendAdv}{\ensuremath{\Advantage{\dblSpendLabel}{\repCoinScheme}(\timeBound, \oracleQueries) }\xspace}

\NewDocumentCommand{\timeVar}{o}{\ensuremath{t\IfValueT{#1}{_{#1}}}\xspace}
\NewDocumentCommand{\timeVarAlt}{o}{\ensuremath{t'\IfValueT{#1}{_{#1}}}\xspace}
\NewDocumentCommand{\timeDiff}{o}{\ensuremath{\Delta\IfValueT{#1}{_{#1}}}\xspace}
\NewDocumentCommand{\numCoinDeps}{o}{\ensuremath{N\IfValueT{#1}{_{#1}}}\xspace}
\NewDocumentCommand{\numCoinDepsDelta}{o}{\ensuremath{N\IfValueT{#1}{_{#1}}[\timeVar, 
\timeVar + \timeDiff]}\xspace}
\NewDocumentCommand{\numCoinVar}{o}{\ensuremath{n\IfValueT{#1}{_{#1}}}\xspace}
\NewDocumentCommand{\poissonRate}{o o}{\ensuremath{\lambda_{#1}^{(#2)}}\xspace}

\newcommand{\mixProperty}{time-unlinkable deposits\xspace}

\newcommand{\privKeyCommitment}{\ensuremath{P}\xspace}
\newcommand{\rsuPrivKeyCommitment}{\ensuremath{Q}\xspace}
\newcommand{\userPRFKeyCommitment}{\ensuremath{S}\xspace}
\NewDocumentCommand{\genericFnInput}{g}{\ensuremath{x\IfValueT{#1}{_#1}}\xspace}

\createpseudocodeblock{pcb}{center , boxed, space=0.2cm}{}{}{}

\newcommand{\voucherPRFEval}{\ensuremath{X}\xspace}
\newcommand{\voucherDblSpendToken}{\ensuremath{Y}\xspace}
\newcommand{\txPRFEval}{\ensuremath{Z}\xspace}
\newcommand{\userCLSig}[1]{\ensuremath{\sigma_{#1}}\xspace}
\newcommand{\userReceipt}{\ensuremath{\mathsf{receipt}}\xspace}

\newcommand{\carLabel}{\ensuremath{\mathcal{U}}\xspace}
\newcommand{\rsu}{ATM\xspace}
\newcommand{\rsus}{ATMs\xspace}

\usepackage{enumitem}
\setlist[itemize]{nosep, itemsep=0pt, topsep=3pt, leftmargin=1em, labelwidth=*, align=left}
\setlist[enumerate]{nosep, itemsep=0pt, topsep=3pt, leftmargin=1em, labelwidth=*, align=left}

\usepackage{tikz}

\usepackage{cellspace}
\usepackage{makecell} 
\setcellgapes{1pt}
\usepackage{booktabs}

\newcommand{\mysection}[1]{\section{#1}}
\newcommand{\mysubsection}[1]{\subsection{#1}}

\newcommand\notsotiny{\@setfontsize\notsotiny{6}{7}}

\begin{document}

\date{}

\title{Automatic Teller Machines for Offline \Ecash}

\author{Anrin Chakraborti\inst{1} \and
Qingzhao Zhang\inst{2} \and
Jingjia Peng\inst{3} \and
Morley Mao\inst{3} \and
Michael K. Reiter\inst{4}}
\authorrunning{Chakraborti et al.}
%
\institute{University of Illinois Chicago, Chicago IL 60607, USA \and
University of Arizona, Tucson AZ 85721, USA \and
University of Michigan, Ann Arbor MI 48109, USA \and 
Duke University, Durham NC 27708, USA
}

   \maketitle

\begin{abstract}
Electronic cash (\ecash) is a digital alternative to physical currency that allows anonymous transactions between users and merchants. Typically, coins in an \ecash scheme are only dispensed through a central bank. A drawback of this approach is that the bank is always on the critical path during withdrawals, and if a reliable connection to the bank is temporarily unavailable, users may be unable to withdraw coins in a timely fashion. As with physical currency, there are benefits to supporting a decentralized infrastructure where withdrawals can be performed without involving the bank in the critical path. 

We propose the design of a new cryptographic bearer token that can be  dispensed by  \textit{automatic teller machines} (ATM) in a fully offline \ecash scheme. Such bearer tokens provide 
 anonymity, unforgeability and untraceability,  i.e., users cannot be tracked 
 by their spending activities or the locations of withdrawal. We formalize the requirements of an \ecash scheme with multiple issuers and propose an efficient design building on top of the compact \ecash protocol of \citet{camenisch2005compact}. Our construction leverages an unforgeable and \emph{doubly-anonymous} voucher that allows a one-time transfer of coins between an ATM and a user, while hiding their identities from parties not involved in the transaction. 

\end{abstract}

\mysection{Introduction}
\label{sec:intro}

Electronic cash (\ecash)~\cite{chaum1990untraceable} entitles a \textit{user} holding a \textit{bearer token} (coin) to certain privileges, 
which she exercises by presenting the token to
a \textit{gatekeeper} (merchant).  As such, to be useful, a party should be unable to forge bearer tokens for herself; instead, a user must get a
bearer token from a sanctioned \textit{issuer} (bank). Furthermore, coins should also ensure \textit{anonymity} during transactions, that is, 
they should not divulge any information regarding the user spending a coin, either to the bank or to the party that receives it in a transaction. Typically, there is only one issuer, namely the
bank, and most \ecash schemes cannot be trivially extended (unless the bank shares secrets) to support multiple issuers.   


Nonetheless, even in the context of \ecash, there are benefits to supporting multiple issuers and using them in a role similar to \emph{automatic teller machines} (ATMs) for physical currency. For example, to withdraw coins in a single-issuer system, a user must always connect to the bank, which makes the process vulnerable to network disruptions. As a viable alternative, we can envision a setup where ATMs act as physical proxies of the bank and facilitate fully \emph{offline} operations. The user-to-ATM connections are supported through fast short-range communication channels, e.g., bluetooth, while the ATM-to–bank communication is over a standard wireless/wired network.

Introducing multiple issuers, which we will refer to as ATMs henceforth, pose new technical challenges for \ecash designs. First, if ATMs are physical entities, then a coin bearing the ATM's identity, e.g., in the form of a signature, will violate the user's anonymity guarantees. Therefore, coins should be \textit{untraceable} to their issuers, as in case of real physical currency. Second, some of the ATMs can behave maliciously and the necessary safeguards need to be incorporated in the design; indeed instances of ATM hijacking are known in reality~\cite{atm_malware}. Third, the bank must be able to enforce rate-limiting policies on the ATMs and identify malicious behavior. Finally, it is desirable to support offline withdrawals if connectivity to the bank is temporarily disrupted.

Thus, this papers asks the following question: 

\begin{myquote}
Can we design \ecash schemes with multiple (potentially malicious) issuers under the control of an honest-but-curious bank
that guarantees unforgeability, untraceability and anonymity?
\end{myquote}

To answer this question affirmatively, we formalize the requirements of an \ecash scheme with multiple issuers and propose an efficient design. In our model, ATMs are stocked with coins \textit{independent of user requests}, and subsequently a coin(s) is transferred between an authorized ATM and a verified user during a withdrawal from the ATM. To support such (limited) coin transfers, we introduce the notion of an unforgeable \textit{voucher} that links a coin to the identities of the issuing ATM and the user who receives the coin. The voucher is included as part of the coin transaction and allows the bank to identify misbehaving users and ATMs. Crucially, the vouchers are \emph{doubly anonymous} in the sense that they do not reveal the user's and issuing ATM's identities to the bank unless the coin has been double-spent (or double-issued). This general framework can be instantiated with several existing \ecash schemes by integrating vouchers into their withdrawal and spending mechanisms. We provide an instantiation building on top of the compact \ecash protocol of \citet{camenisch2005compact}. 

\begin{figure}[t]
	\centering \includegraphics[width=\textwidth]{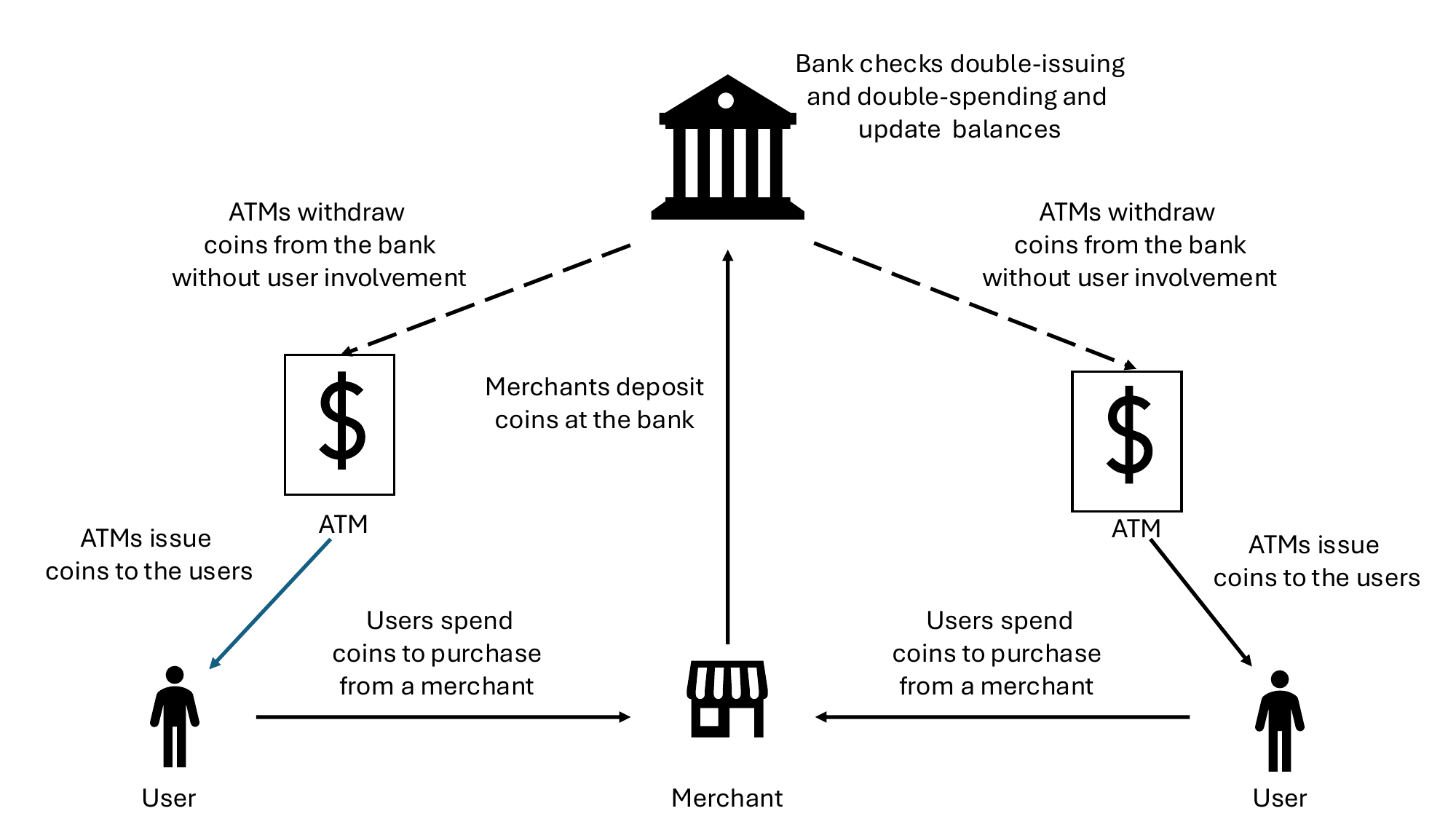} \caption{A description of the parties and operations in a multi-issuer \ecash scheme. The ATMs requests coins from the bank (the dashed line indicates that this process is performed in the background and periodically) which are then
	  issued to users. Coins are spent at
	  merchants using transactions. Once the merchant verifies a
	  transaction locally, it can deposit it at the bank. }  \label{fig:system_overview} \end{figure}

\subsection{Problem Definition}
\label{sec:problem_defn}

\myparagraph{Functionality}
An \ecash scheme with multiple issuers (ATMs) must support three types of operations: i) ATMs can \emph{withdraw} coins from the bank, ii) ATMs can \emph{issue} previously withdrawn coins to users on demand, and iii) users can \emph{spend} the coin acquired previously at a merchant (\figref{fig:system_overview}). Merchants deposit coins obtained in transactions to the bank for account credits. In our model, only the bank is authorized to mint coins; the ATMs are intermediaries who facilitate fast withdrawals. However, ATMs cannot accept coin deposits because they do not maintain account balances and/or have the ability to check for double-spending. Thus, deposits must always be to the bank. 

For security, informally speaking, we assume that some of the ATMs can behave maliciously and potentially \emph{double-issue} coins to honest users, and/or attempt to debit an honest user's account without issuing a commensurate number of coins. 
Similarly, malicious users can attempt to \emph{double-spend} coins. Thus, all operations between users and ATMs need verifiability; e.g., the merchant can verify that the coin spent in a transaction was minted by the bank and issued to the user by an authorized ATM. 
The bank is honest-but-curious in our model and detects both double-spending by users and double-issuing by ATMs.

\myparagraph{Existing definitions: transferable \ecash}
Before proceeding further to formally define a multi-issuer \ecash system, we remark that all of the functions we require are already provided by transferable \ecash schemes. Introduced first by \citet{okamoto1989disposable}, transferable \ecash allows an arbitrary number of anonymous coin transfers between users and merchants. Indeed any 
transferable \ecash scheme will support ATMs almost trivially, e.g., ATMs can blindly withdraw coins from the bank, and then transfer them to the users.

However, transferable \ecash is a strictly stronger primitive than multi-issuer \ecash. Specifically, as noted by \citet{canard2008anonymity}, any scheme supporting arbitrary coin transfers between users must satisfy a stronger notion of anonymity called \emph{coin transparency}: once a user transfers a coin to another user, she can no longer recognize this coin even if she receives it in a subsequent transfer from another user. This is necessary because without this property, a merchant can trace users by ```marking'' a coin obtained in a transaction and then transferring the coin to another user. Subsequently, if the same coin is used again in a transaction, it will allow the merchant to easily correlate the two transactions, violating anonymous-spending guarantees. 

Unfortunately, coin transparency comes with steep performance penalties: each transfer essentially randomizes the coin and the entire transcript pertaining to any previous transfer(s), thus ensuring that even if a coin reaches a party that has previously seen the same coin, he cannot recognize it. The transcript size is typically linear in the number of transfers, and it proves authenticity of the transfers while also allowing double-spending detection by the bank.  To support this randomization, transferable schemes either use randomizable non-interactive zero-knowledge proofs~\cite{bauer2021transferable} to generate the transcript, and/or malleable signatures~\cite{baldimtsi2015anonymous}. Both of these tools are computationally expensive, making the state-of-the-art transferable \ecash scheme~\cite{bauer2021transferable} concretely less efficient than any standard \ecash scheme, which can be typically constructed from cheaper primitives. Another major drawback of supporting arbitrary transfers is an inherently large coin size. For example, in the scheme of  \cite{bauer2021transferable}, a single coin comprises of more than 272 pairing group elements, and each transfer adds 104 new elements to the coin (see Table 1 of \cite{bauer2021transferable}). When the pairing group is instantiated with the BLS12-381 curve~\cite{barreto2005pairing}, a coin is roughly 19\kilobytes in size. 

In contrast to transferable \ecash, \emph{coin transparency is not necessary} for multi-issuer e-cash since, unlike in transferable schemes, after a user spends a coin, the merchant can only deposit the coin to the bank. This alleviates fears that the merchant will ``mark'' the coin and correlate any subsequent transactions. As such, we do need to randomize the coin (and any related transcript), and more efficient constructions matching the performance of standard \ecash designs are possible. Concretely speaking, the size of a coin in our scheme is 1\kilobytes, more than 18$\times$ smaller than the coin size in \citet{bauer2021transferable}, and \emph{all} operations are highly efficient, taking less than 100ms even on resource-constrained setups. While the performance of \citet{bauer2021transferable} is not reported, we estimate higher computational costs, since the scheme relies on the Groth-Sahai proof framework~\cite{groth2008efficient} for the randomizable non-interactive zero-knowledge proof system. We have no such requirement and only rely on standard Sigma proofs.

\myparagraph{New requirements} In addition to the model relaxation, a multi-issuer scheme also enforces new requirements that are not directly addressed by a transferable \ecash scheme. We details these requirements below.

\begin{enumerate}
\item \textbf{Fair exchange:} Since some of the ATMs and users are malicious, we need to ensure that a \emph{fair exchange} of coins takes place before an account is debited at the bank. That is, we must ensure that an ATM cannot falsely effectuate an update to a user's balance at the bank, unless it has issued a commensurate number of coins to the user, and conversely, a user cannot falsely accuse an ATM of reneging on a promise to issue coins, despite receiving the coin from the ATM. It is known that fair exchange with malicious parties is impossible without a trusted third party~\cite{pagnia1999impossibility,sandholm2002possibility}. Thus, a multi-issuer scheme must incorporate new mechanisms for such fair exchange by involving the bank as a TTP in case of disputes, without violating anonymity.
 
\item \textbf{Compactness:} For storage efficiency, we want to ensure that coin withdrawals by the ATM can be performed offline. This will make the issuing process more efficient, as the bank needs to be consulted only for checking the user's balance. However, to support this feature, the ATMs would need to store a sufficient stock of coins to serve user requests as they arrive. Thus, optimizing coin storage becomes critical if the ATMs run on resource-constrained devices. One way to optimize storage is by making the scheme \emph{compact}~\cite{camenisch2005compact}, i.e., the cost of storing $\numkeys$ coins scales sub-linearly in $\numkeys$. This feature is missing from the state-of-the-art transferable \ecash scheme.  

\end{enumerate}

\myparagraph{New definitions} 
In light of the discussion above, we define the security of a multi-issuer e-cash scheme by extending the formal definitions of unforgeability and anonymity of coins -- used in standard \ecash schemes -- to tolerate the presence of multiple issuers using a game-based model~\cite{bellare2004code}. Since some of the ATMs and users can be malicious, our security definitions account for both double issuing and double spending. If the bank detects either double spending or double issuing, then there is a mechanism to identify the offending party and enforce penalties. In addition, we define two new properties:

\begin{itemize}
\item \textbf{Untraceability:} When a user obtains a coin from one of the ATMs, the corresponding transcript of issue does not reveal the identity of the issuer to either the merchant where the coin is spent, or to the bank. A single-issuer \ecash scheme does not need such untraceability guarantees because it is implicit that \emph{all} coins must have been directly withdrawn from the bank. 
\item \textbf{Fair exchange:} Informally, this property requires that if either a user or an ATM reneges on its promise during a coin issue --- as we will discuss later, a user provides a receipt upon receiving a coin from an ATM --- the coin cannot be later spent without violating anonymity of the guilty party. 
\end{itemize}


\subsection{Construction Overview}
\label{sec:construction_overview}
Towards a solution for a multi-issuer \ecash scheme, we can start with a standard \ecash scheme such that the ATMs withdraw coins from the bank and store them locally. When a user requests a coin from an ATM, the ATM invokes the spending procedure of the \ecash scheme, with the user acting as the merchant, to issue a coin(s). Then, the user may spend the coin at a merchant and provide the transcript pertaining to the coin issue, i.e., the proof of ATM-to-user transaction, to demonstrate that it has acquired the coin from a valid ATM. The coins will be unforgeable and preserve anonymity during spending. It is worth noting that ensuring the validity of the coin is not enough because the merchant must also verify that the user presenting the coin is also the owner of the coin. Otherwise, a malicious ATM may be able to \emph{double-issue} coins.

The problem with this construction is that if issuing a coin simply follows the procedure of a standard transaction, and the users provides the corresponding transcript to the merchant/bank for verification, then the user does not remain anonymous. Indeed, in \ecash, the identity of a merchant (the user in case of a ATM-to-user transaction) reporting a transaction is part of the transcript \emph{in the clear}, so as to enable the bank to credit the merchant's account. 

\myparagraph{Doubly anonymous vouchers}
We address this problem by introducing an unforgeable and doubly anonymous \emph{voucher} that records a transaction of a valid coin between a spending party (an ATM) and receiving party (a user) while hiding both of their identities from any party that is not involved in the transaction, including the bank. More specifically, the voucher + coin allows any party to verify that the coin was minted by the bank, and was issued to the user claiming ownership, without revealing identities of either the user or the ATM that issued the coin.   In an \ecash scheme, such a voucher serves no purpose since at least the identity of the receiving party must be known to credit an account. However, a doubly anonymous voucher will allow for a \emph{single} transfer of a coin between an ATM and a user, while also enabling the user to later spend the coin at a merchant. We remark that such a voucher can be computed \emph{only} by the party who withdrew the coin --- in our construction, only an ATM can compute a voucher --- and so vouchers cannot be used for arbitrary number of transfers, as in a transferable \ecash scheme. Therefore, a doubly-anonymous voucher is a strictly weaker primitive than a randomizable \emph{tag} proposed in previous work on transferable \ecash~\cite{baldimtsi2015anonymous}. 

\myparagraph{Instantiation with compact \ecash}
Introducing a doubly anonymous and unforgeable voucher provides a blueprint to turn any single-issuer \ecash scheme into an \ecash scheme with multiple issuers. We demonstrate feasibility of this idea using the compact \ecash scheme of \citet{camenisch2005compact}.  Informally, their protocol can be described \emph{without compactness} as follows. 

\begin{itemize}

\item \textbf{Withdrawal:} Each coin is a key $\genericFnKey$ for a pseudorandom function $\dyPRF$ uniformly-sampled by the user, and blindly signed by the bank using a CL signature~\cite{camenisch2003signature}.

\item \textbf{Spending:}
\begin{enumerate}
\item The user spends the coin by computing a \emph{verification token}  $\voucherPRFEval \assign \dyPRF[\genericFnKey](\coinID)$, where $\coinID$ is a unique coin-specific identifier (defined later). 

\item  Double spending is detected using a \emph{double-spending token} of the form $\txPRFEval \assign \pubKey{\vID} \cdot \dyPRF[\genericFnKey](\coinID)^{\txRand}$ where $\txRand \gets \hashFn(\pubKey{\genDepRSUID}, \dots)$ is derived from the merchant's public key $\pubKey{\genDepRSUID}$ and other transaction-specific information. Here, $\pubKey{\vID}$ is the spending user's public key.

\item The user proves in zero-knowledge that it knows a signature on $\genericFnKey$ under the bank's private key. The user also proves that the verification token and the double-spending token have been correctly computed. Thus, each transaction reported to the bank includes $\voucherPRFEval$, $\txPRFEval$, $\txRand$ and the zero-knowledge proofs. The coin is not revealed in the clear as part of the transcript.
\end{enumerate}

\item \textbf{Deposit \& double-spending:}
If there are two transactions spending the same coin, which means that both transactions have matching values of the verification token \voucherPRFEval but different double-spending tokens $\txPRFEval$ and $\txPRFEval'$, then the bank obtains i) $\txPRFEval \gets \pubKey{\vID} \cdot \dyPRF[\genericFnKey](\coinID)^{\txRand}$, ii) $\txPRFEval' \gets \pubKey{\vID} \cdot \dyPRF[\genericFnKey](\coinID)^{\txRandAlt}$, iii) $\txRand$ and iv) $\txRandAlt$. Using these values the bank derives the misbehaving user's identity 

\begin{align*}
& \dyPRF[\genericFnKey](\coinID) = \left(\frac{\txPRFEval}{\txPRFEval'}\right)^{\txRandAlt - \txRand}; ~\pubKey{\vID} =  \frac{\txPRFEval}{\dyPRF[\genericFnKey](\coinID)^{\txRand}}
\end{align*}

\end{itemize}

\myparagraph{Design overview}
We adapt this design to a multi-issuer system. At a high-level, our scheme can be viewed as a  two-layered \ecash scheme where each spent coin is part of two transactions: the first transaction is when the coin is issued to a user by an ATM, and the second transaction is when it is spent by the user at a merchant. The caveat is that both these transaction must be cryptographically linked to each other; i.e., the receiver in the first transaction is the spender in the second transaction. Since we cannot reveal the receiver's (and spender's) identity, we use the unforgeable and doubly anonymous voucher to tie the two transactions together. In the following, we informally the describe this idea.

We begin by describing the keys in the possession of the users and the ATMs. Each user has: i) a \emph{identity key} pair $\langle \pubKey{\vID}, \privKey{\vID} \rangle$, ii) a \emph{spending} key that is a fixed, uniformly sampled PRF key $\userPRFKey$, and iii) a key pair for a digital signature scheme. The private identity key and the PRF key are blindly signed by the bank using a CL-signature. 
Similarly, each ATM has: i) a identity key pair $\langle \rsuPubKey{\rsuID}, \rsuPrivKey{\rsuID} \rangle$, and ii) a key pair for a digital signature scheme. The private identity key is signed blindly by the bank using a CL-signature. ATMs withdraw coins from the bank following  Step 1 of the compact \ecash scheme described above; i.e., each coin is a PRF key $\genericFnKey$ that is uniformly sampled by the ATM and blindly signed by the bank. When an ATM issues the coin corresponding to \genericFnKey to a user, the ATM additionally creates a voucher and provides it to the user as proof of ownership.

\begin{enumerate}
\item The voucher includes a unique identifier $\voucherPRFEval \assign \dyPRF[\genericFnKey](\privKeyCommitment)$ where $\privKeyCommitment$ is a commitment to the user's private identity key $\privKey{\vID}$. The voucher also includes \privKeyCommitment. 

\item The voucher includes a \emph{double-issuing token} that embeds the identity of the ATM $\voucherDblSpendToken \assign \rsuPubKey{\rsuID} \cdot \dyPRF[\genericFnKey](\dots)^{\privKeyCommitment}$. Here, $\rsuPubKey{\rsuID}$ is the issuing ATM's public key. (The inputs to the PRF computation is a constant and will be defined later.) 

\item The voucher includes non-interactive zero-knowledge proofs to show that \voucherPRFEval and \voucherDblSpendToken have been correctly computed under a PRF key signed by the bank. 
\end{enumerate} 

The voucher is doubly anonymous because it does not include the issuing ATM's or user's identity in the clear. Unforgeability is ensured because the issuing party must prove knowledge of the bank's signature on the key $\genericFnKey$. Similarly, the voucher is bound to the user because of the computationally binding commitment on the user's private key that ties all the components of the voucher together.
Moreover, if an ATM issues the same coin twice, then the bank will be able to recover $\rsuPubKey{\rsuID}$ from the double-issuing token, assuming that with overwhelming probability, the commitments in two separate withdrawals will not be the same.  As we will see, in order to spend the coin, the user must prove that it knows the opening of the commitment.

\subsection{Design Optimizations}
\label{sec:design_opt}

\myparagraph{Fair exchange}
We ensure an \emph{optimistic} fair exchange of coins between users and ATMs by 
using the bank as a trusted third party~\cite{asokan1997optimistic,asokan1998optimistic,micali1999verifiable,burk1990value}. Optimistic as defined by previous work~\cite{burk1990value,asokan1997optimistic} in this context means that the bank is contacted only when there is a dispute during an exchange

The idea is that after a user is issued a coin, it provides a signed receipt to the ATM that indicates that it has withdrawn a coin(s) from that ATM; importantly, the receipt does not include any information that identifies the coins. The receipt is subsequently provided by the ATM to the bank to debit the user's account. However, during this exchange, it is possible that one of the parties reneges on its promise: the user may not provide a receipt after receiving a coin, or conversely, the ATM may not provide the coin to the user after receiving the receipt. Either way, this is problematic: if the user does not provide a receipt, then it essentially gets a coin for ``free'' because the ATM cannot prove to the bank that the user has withdrawn the coin. And, if the ATM misbehaves, then it may issue the coin to another (malicious) user, but debit the honest user's account instead.

To mitigate this problem, we facilitate an optimistic fair exchange as follows. 

\begin{enumerate}
\item The ATM first provides a signed (under its private key) \emph{promise} to the user, which includes a i) computationally hiding and binding commitment to the coin, ii) the voucher linked to the coin in the clear, and iii) a randomly chosen nonce that uniquely identifies the withdrawal. 
\item After verifying the signature, the user signs a receipt (under its private key) which includes i) the public key of the user, ii) the public key of the ATM, and iii) the nonce, and provides it to the ATM. 
\item After verifying the receipt, the ATM opens the commitment to the coin. The user verifies the opening and ensures that the coin is consistent with all the information signed by the ATM in Step 1. 
\item If the user does not receive the coin from the ATM after sending the receipt, or the coin is inconsistent with all the other information, the user informs the bank about the ATM's misbehavior using an \abortWithdrawal message. In that message, the user includes the coin commitment, the voucher, the nonce and the ATM's signature on all of the information obtained in Step 1.
\item If the bank receives an \abortWithdrawal message from a user, it stores all the information in a log for future investigation. 
Moreover, the bank does \emph{not} debit the user's account, \emph{even if a contradictory receipt is received from the ATM signifying that the user's balance should be debited}. 
\end{enumerate}

This mechanism ensures fair exchange as follows. Suppose first that a misbehaving user does not provide the receipt after receiving the first message from the ATM. In that case, the user does not get the coin, unless it is able to derive the coin from the commitment, which happens with negligible probability. Of course, a user may try to misuse the mechanism by sending an \abortWithdrawal message even after it receives the coin. But this is not a problem, because since the message contains the signed commitment to the coin, the bank can later identify if the coin is spent by the same user, and catch the user's lie.

Conversely, suppose that an ATM misbehaves by not opening the commitment after receiving the receipt, and the user sends the \abortWithdrawal message along with the voucher and the coin commitment. The idea is that since the voucher ties the coin to the user, and includes the double-issuing token, if the ATM tries to double-issue the coin to a different user later, then upon receiving the coin after it is spent, the bank will be able to identify the ATM for double-issuing. Thus, a (committed) coin that is reported as part of \abortWithdrawal effectively becomes void because it cannot be spent by a misbehaving user or issued by a misbehaving ATM. We remark that since both the user's and ATM's identities are available to the bank, this mechanism cannot be misused for a denial of service since : i) the bank will identify if too many reports are submitted by the same user, and ii) the ATM will exhaust all its coins as a reported coin becomes unusable, and the bank can deny further withdrawals by the ATM. 

\myparagraph{Spending \& deposits}
A user spends a coin by providing both the coin and the voucher to the merchant. Recall that the commitment to the user's private key $\privKeyCommitment$ binds the coin to the voucher, and the merchant verifies that the coin was indeed issued to the user using the commitment. Specifically, to spend a coin, 

\begin{enumerate}
\item The user with public key $\pubKey{\vID}$ computes the double-spending token $\txPRFEval \assign \pubKey{\vID} \cdot \dyPRF[\userPRFKey](\privKeyCommitment)^{\txRand}$ using its signed spending key $\userPRFKey$ (where \txRand is derived from the merchant's public key $\pubKey{\genDepRSUID}$ and other information specific to the transaction).
\item The user proves in zero-knowledge the bank's signature on the opening of $\privKeyCommitment$, and that the public key $\pubKey{\vID}$ in the double-spending token  is linked to the private key in the opening of the commitment. The user also proves in zero-knowledge the bank's signature on $\userPRFKey$. The transaction with the merchant includes the coin, the voucher, \txPRFEval, and the zero-knowledge proofs.   
\end{enumerate}

As we will describe later, the PRF and the commitment scheme can be instantiated such that all the zero-knowledge proofs can be implemented with standard Sigma protocols. Once the merchant verifies the coin and the voucher, it deposits all the information to the bank. The bank checks for double spending and double issuing, following a procedure similar to the compact \ecash protocol. The bank also check if there is a matching commitment to the coin in any of the \abortWithdrawal messages it has received in the past. If so, the bank identifies the misbehaving user/ATM.

\myparagraph{Compactness}
The scheme can be made compact by leveraging a technique due to \citet{camenisch2005compact} whereby the same PRF key $\genericFnKey$ is used multiple times to issue new coins. Each coin has a unique serial number $\coinSN \in \nats[1][\numkeys]$, and both the verification token $\voucherPRFEval \assign \dyPRF[\genericFnKey](\coinSN)$ and double-issuing token $\voucherDblSpendToken \assign \rsuPubKey{\rsuID} \cdot \dyPRF[\genericFnKey](\coinSN)^{\privKeyCommitment}$ are tied together using $\coinSN$. The issuing ATM computes a non-interactive zero-knowledge range proof showing that the same value of $\coinSN \in \nats[1][\numkeys]$ was used for both evaluations. We describe the scheme in detail in \secref{}. 


\myparagraph{Fully offline withdrawals:}
Our scheme supports fully offline withdrawals such that an ATM does not need to check a user's balance before issuing coins. This is particularly helpful if the connection between the ATMs and the bank is prone to disruptions, or the round-trip latency is high. To support this feature, we share with the ATMs a set of accounts that do not have enough balance to support the maximum amount that can be withdrawn from an ATM; typically this amount is much smaller than an average user's balance. The set is shared in a concise fashion by periodically broadcasting to the ATMs a Bloom filter with these account numbers.  When a user requests a withdrawal, the ATM checks from the Bloom filter whether the user has sufficient balance. We note that while the Bloom filter reveals the accounts with low balance to the ATMs, in practice this information (or parts of it) may be  available to the ATMs anyway during withdrawals. Sharing the Bloom filter does not violate any of the security guarantees of an \ecash scheme (anonymity during spending and unforgeability). 
The ATM updates the bank about the withdrawal using the users' receipts as proof, either periodically or under stable network connection.  

A potential risk of fully offline withdrawals is that the user may attempt to overdraw her account by visiting multiple ATMs in between the times that the user withdraws from an ATM and her account is debited by the bank. However, in this case, the user will be eventually identified by the signed receipts, and the bank can enforce appropriate penalties. Nonetheless, we also discuss mitigation strategies to reduce the risk of such behavior in \secref{sec:offline_withdraw}.  

\myparagraph{Collusion \& enforcing penalties}
As with most \emph{offline} \ecash schemes, our scheme also detects user/ATM misbehavior \emph{after the fact}, and thus we expect that the bank will enforce appropriate penalties. We do not stipulate how such penalties are decided, but we note that in real-world deployments, economic penalties are typical, e.g., overdraft charges when a user overdraws her account. Another point of concern is that if an ATM misbehaves, it may collude with a malicious user and issue a large number of coins. Since every coin that is issued by an ATM must be accounted for in a corresponding receipt, the bank will detect if there is a discrepancy in the amounts. One obvious precaution that the bank can implement is ensuring that ATMs can only stock a small amount of coins at a time. However, beyond these measures, purely cryptographic solutions are not enough to handle collusions, because if the ATM ceases all communication with the bank after issuing an arbitrary number of coins to a malicious user and deletes all local state, then the bank cannot learn any new information to identify the malicious users. Therefore, non-cryptographic safeguards are necessary. For example, it is common practice to have cameras inside ATMs to track user misbehavior; we envision similar measures if the ATMs are implemented by physical entities, e.g., edge devices, in our scheme. Of course, physical attacks against such mechanisms are possible but we consider them out of scope.

\subsection{Related Work}
\label{sec:background}

Electronic cash (\ecash) is a digital alternative to physical currency. In typical schemes, users {\it withdraw} digitally signed coins from a bank and {\it spend} them at merchants, who verify the signature and report valid, unspent coins. Anonymity is preserved by preventing the bank from linking withdrawals to spending. E-cash is \textit{online} when the bank participates in transactions~\cite{chaum1983blind}, and \textit{offline} when it only detects (but cannot prevent) double spending~\cite{chaum1990untraceable}. Numerous variants address specific needs~\cite{camenisch2007endorsed,canard2015divisible,camenisch2005compact,shi2007conditional,baldimtsi2015anonymous,brands1994untraceable,sander1999auditable}. We refer readers to broader surveys for more details, and focus on schemes with multiple issuers (ATMs).


One class of scheme is delegated \ecash, which uses proxy blind signatures~\cite{liu2013proxy,liu2015delegated,tan2011off,zhang2003efficient} to let a central bank authorize branches to issue coins. However, these schemes assume full trust in branches and lack oversight mechanisms such as \textit{rate-limiting} to prevent over-issuance. Another option is transferable \ecash~\cite{baldimtsi2015anonymous,canard2008anonymity,bauer2021transferable}, where coins can be passed between users arbitrarily. While meeting our requirements, such schemes are inefficient in practice and support more general transferability than needed as discussed before.

\mysection{Model}
\label{sec:model}


\mysubsection{Threat Model}
\label{sec:threat_model}
%

Our \ecash system has three types of participants.

\myparagraph{Bank} 
The bank issues currency and maintains user accounts. The bank also certifies users  
that are registered to the system, e.g., by signing their public keys. The bank is {\it honest but curious}.


	
\myparagraph{ATM} 
Similar to the real-world analogue of physical currency,  \rsus serve as cash dispensing 
points. Users can withdraw coins at the \rsus, \emph{but deposits must be directly at the bank}. This 
is because an account credit can occur only after a coin(s) has been checked for double-spending, which cannot be performed by an \rsu. Some \rsus are malicious, however, we assume that adversarial behaviors such as denial-of-service (DoS) attacks or reneging on a promise to dispense cash are considered beyond the scope of this study. 



\myparagraph{Users}
Users withdraw and spend coins either by interacting directly with the bank or via \rsus. A subset of 
the users are also merchants who provide services in exchange for coins. Users can behave maliciously, 
i.e., they may violate the protocol to overdraw their balance or double-spend coins.

\mysubsection{Syntax}
We define the syntax of our \ecash scheme (see \figref{fig:overview}), starting from the items that are exchanged 
in the scheme.

\begin{itemize}

\item {\bf Coin:} A coin is the currency exchanged in the scheme. A coin can be only issued by a bank, directly to a user or to an \rsu. Each such coin is unique (with overwhelming probability). To spend a coin, it needs to be first included in a voucher (described next), which ties the coin to a user identity (necessary for double spending detection). When a user directly withdraws a coin from a bank, the voucher is created by the bank. Otherwise, it is created by the \rsu from where the user withdraws the coin.

\item {\bf Voucher}: A voucher ties a specific coin to a user's identity, e.g., public key, when the user withdraws that coin. The rationale behind having a voucher (in addition to coins) is that unlike traditional online  \ecash, where the user must directly interact with the bank to withdraw a coin, 
in our scheme the user can be issued coins that have been acquired \textit{a priori} by an \rsu without the user's identity being known. Consequently, mechanisms that require the user's identity, e.g., double-spending detection, cannot be directly applied to the coin. A voucher enables these additional mechanisms. Vouchers are not transferable between users. We also assume for simplicity that a voucher is tied to a single coin, although this can be relaxed with additional mechanisms.


\item {\bf Transaction:} A transaction records the spending 
of a coin. It uniquely identifies the coin, and the
merchant that received the coin for a service. Each transaction has a
unique, random transaction ID,and may also include other
fields that are application-specific. Additionally, the identity of the user
spending the coin is embedded in a transaction, {\it but is only
available in the clear to the bank when the coin in the transaction is
double spent in another transaction}. Once the merchant verifies the
coin and the voucher, it deposits the transaction to the bank, which updates balances
and records the coin in a log.

\item {\bf Receipt:} A receipt records an interaction between a user and an \rsu. In particular, this is a message signed by user's private key indicating the amount that has been withdrawn and other relevant information. Receipts act as a safeguard against misbehavior by malicious \rsus.

\end{itemize}

\begin{figure}[t]
	\centering \includegraphics[width=\textwidth]{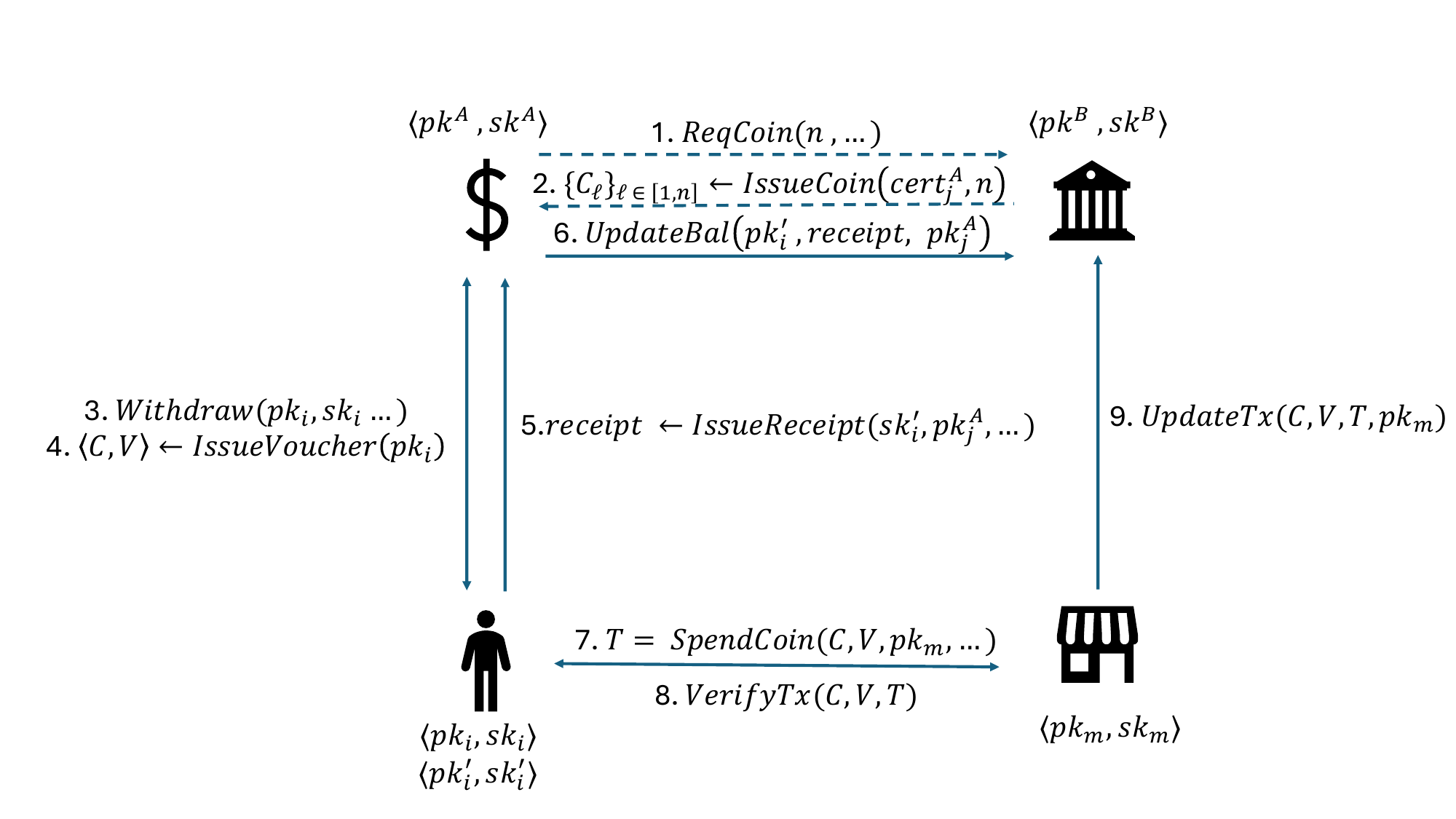} \caption{Formal interfaces in a multi-issuer \ecash scheme. Each withdrawal from an ATM produces a coin and a linked voucher. After withdrawal, the user provides a signed receipt to the ATM as proof of withdrawal that is used to debit the user's account. }  \label{fig:overview} \end{figure}

\subsection{Algorithms}
\medskip
Our scheme has the following algorithms (\figref{fig:overview}).

\subsubsection{Bank:} The bank implements the following algorithms. 

\begin{itemize}
  
	\item $\bankKeyPair \assign  \serverInit(1^{\secParam})$: When this interface is invoked, the bank
  generates the secrets necessary to issue coins, and authenticate
  users/\rsus. This includes a key pair for a blind signature
  scheme. It also initializes a key-value store \vehicleIDList where
  the values are user accounts (or derived information) that will be
  used in the scheme to track double spending, as well as a list of
  already deposited coins \coinStore, initially empty. The bank tracks 
  account balances for the users, which is updated during withdrawals and deposits. 


 \item $\{ \genericCoin_\keyIdx \}_{\keyIdx \in \nats[\numkeys]}/ \bot \assign   \serverIssueCoin(\rsuCert{\rsuID}, \numkeys)$: This interface takes as input the identity of an \rsu, e.g., in the form of a certification \rsuCert{\rsuID}, the number of coins requested \numkeys (and other auxiliary information), and either issues the specified number of coins, or aborts. If the request is directly from a user, the bank also issues a voucher tying the coin to the user's identity.

\item $0/1 \assign \serverUpdateWithdrawal(\vPubKey{\vID}', \userReceipt, \rsuPubKey{\rsuID})$: This interface takes as input the a receipt signed by the user who has requested
a withdrawal from \genRSU.  This invocation is made by \genRSU to update the balance of the indicated user after a withdrawal. The interface returns a 0 to update the balance was updated correctly, or returns a 1 if the invocation fails, e.g., the account is overdrawn.

\item $\vPubKey{\vID}/\rsuPubKey{\rsuID}/0 \assign \serverVerifyTx(\genericCoin, \genericVoucher, \genericTx, \vPubKey{\genDepRSUID})$: This interface takes as input the identity of the merchant where a transaction has taken place, and all the artifacts related to the transaction, which includes the coin \genericCoin, the corresponding voucher, \genericVoucher, and the transaction record, \genericTx, linking the coin to the merchant. The algorithm checks whether the coin is valid; the voucher records the assignment of this coin to a
  user; and that the transaction record pertains to the spending of
  this coin (by that user).  If the coin is not valid, then the interface aborts.
  Otherwise, the invocation first checks whether the coin is already
  recorded in the log, in which case it triggers a double-spending
  detection algorithm and \emph{outputs the identity (a public key) of the user or the ATM who double-spent/double-issued the coin}.  If there is no double-spending,
  the invocation adds the coin to the log, updates the balance of
  the merchant and outputs $0$ to signify that the transaction has
  been recorded.
  
\end{itemize}

\subsubsection{\rsu:} Each \rsu implements the following algorithms.
\begin{itemize}[nosep,leftmargin=1em,labelwidth=*,align=left]

  \item $\genericRSUIDKeyPair, \genericRSUSigKeyPair \assign \rsuInit(1^{\secParam})$: This interface sets up the secrets that will be used
by the \rsu to dispense coins to users. In our scheme, this corresponds to an identity key pair, \genericRSUIDKeyPair, where the private key is blindly signed by the bank, and a key pair for a digital signature \genericRSUSigKeyPair. 

\item $\{\genericCoin_\keyIdx \}_{\keyIdx \in \nats[\numkeys]}/ \bot \assign \rsuWithdrawCoin(\numkeys, \dots)$: This interface
requests $\numkeys$ coins from
the bank. The bank may decline the request if, e.g., the
bank would like to \textit{rate-limit} the \rsu.

\item $\genericCoin , \genericVoucher \assign \rsuAssignCoin(\vPubKey{\vID})$: This interface takes as input a user $\genCar$'s information, and outputs a voucher that is (unforgeably) linked to a coin; we indicate that a coin is linked to a voucher with the notation $\genericCoin \coinVoucherLink \genericVoucher$. The voucher has a serial number that tied to the user's identity; we do not allow transferring of vouchers between users. The voucher also includes relevant information to verify authenticity.


\end{itemize}
	
\subsubsection{Users} Each user implements the following
algorithms:
\begin{itemize}[nosep,leftmargin=1em,labelwidth=*,align=left]

 \item $\genericUserIDKeyPair, \genericUserSigKeyPair \assign \carInit(1^{\secParam})$: This interface generates a key pair to establish \genericUserIDKeyPair and a signing key pair \genericUserSigKeyPair for a digital signature scheme. The private key \vPrivKey{\vID} is blindly signed by the bank. The interface may also generate other secrets which will be described later. 
 
\item $\genericCoin, \genericVoucher \assign \carGetCoin(\pubKey{\vID}, \privKey{\vID}, \dots)$: This interface takes as input the user's identity key pair, $\langle \pubKey{\vID},\privKey{\vID} \rangle$, and 
other auxiliary information, and requests a coin(s) withdrawal from an \rsu or from the bank.  

\item $\genericTx \assign \carDepositCoin(\genericCoin, \genericVoucher, \vPubKey{\genDepRSUID} , \dots)$: This interface takes as input a coin $\genericCoin$, the
linked voucher $\genericVoucher$, the identity of a merchant $\genDepRSU$ and other auxiliary information, and produces a transaction signifying that the coin
has been spent at that merchant.

\item $0/\bot \assign \rsuVerifyCoin(\genericCoin, \genericVoucher, \genericTx)$: This interface takes as input a transaction
that has taken place between a user and a merchant, and outputs 0 if
the transaction is valid (defined later), otherwise it aborts. The
interface is invoked by a merchant to verify the coin and the voucher, 
and other relevant information, before depositing it to the bank.  


\item $\userReceipt \assign \carIssueReceipt(\privKey{\vID}', \rsuPubKey{\rsuID} , \dots)$: This interface takes as input a signing key $\privKey{\vID}'$, the identity of an ATM $\genRSU$ from where a withdrawal has taken place, and other auxiliary information, and produces a signed (using the user's private key) message indicating that the user withdrew a coin(s) from the \rsu at a specified time. 

\end{itemize}

\mysubsection{Security}
\label{sec:model:security_main}
We will describe the security guarantees provided by
our \ecash scheme. We note that the following properties of unforgeability, anonymity, authenticity 
of balance and identification of double-spenders is provided by all \ecash schemes. Our definitions adapt the model to multiple issuers.


\myparagraph{Unforgeability}
In our scheme, only the bank \bank
can mint new coins. Formally, we define this property with a security 
experiment $\Experiment{\unforgeLabel}{\repCoinScheme}$, 
where the adversary represents the malicious users and \rsus, and 
is able to invoke the bank, the honest \rsus and the honest users as oracles.
It eventually produces a coin \genericCoin, a voucher \genericVoucher, possibly linked to the coin, and a transaction \genericTx that spends the coin.  The adversary wins if the bank accepts the transaction, and the coin was not issued by the bank.

\myparagraph{Anonymity}
We will ensure that the coins spent at a merchant and the transactions
submitted to the bank do not reveal any information about the spender,
unless the coin has been double spent. We define this property with an
indistinguishability experiment $\Experiment{\anonLabel[\anonBit]}{\repCoinScheme}(\anonAdversary[1], 
    \anonAdversary[2])$ in \appref{sec:model:security}, 
where a distinguisher -- signifying the
bank, a merchant or an ATM -- receives transactions deposited by two
honest users and attempts to distinguish between them.

\myparagraph{Authenticity of balance}
We will ensure that an adversary, representing dishonest users and ATMs, cannot dishonestly withdraw from an honest user's account. They also cannot forge unauthorized transactions to credit their own accounts using coins that have been withdrawn by honest users. (This also ensures that an honest user cannot be falsely implicated for double spending). We formally define this property with the experiment $\Experiment{\frameLabel}{\repCoinScheme}(\frameAdversary)$ in \appref{sec:model:security}.

\myparagraph{Fair exchange}
We will ensure that a user that receives a coin during a withdrawal must provide a receipt to the ATM otherwise it is unable to spend the coin without being detected by the bank. Combined with authenticity of balance where a user does not provide a receipt until it receives a coin from the ATM, this property implies a fair exchange enabled through the bank acting as the trusted third party\footnote{It is known that a fair exchange without a TTP is impossible~\cite{pagnia1999impossibility,sandholm2002possibility}}. We define this property using the experiment $\Experiment{\fairexchangeLabel}{\repCoinScheme}(\fairexchangeComb)$ in \appref{sec:model:security}. In our protocol, the bank will play the role of the trusted third party.

\myparagraph{Identification of double spenders}
We will ensure that two valid transactions double spending the same coin will identify the user responsible. Alternatively, if the same coin is issued to two different users by a malicious ATM, then the identity of the offending ATM will be disclosed to the bank. We formally define this with the experiment $\Experiment{\dblSpendLabel}{\repCoinScheme}(\dblSpendAdversary[1], \dblSpendAdversary[2])$ in \appref{sec:model:security}.

\myparagraph{Untraceability}
We will introduce a new security property that is important for schemes such as ours that can dispense cash through intermediaries. If a voucher includes information about the \rsu that issued it, e.g., in the form of a signature, then the voucher will divulge information regarding the whereabouts of the user to the merchant and the bank. \ecash systems typically do not need to protect this information since they only allow withdrawals directly from the bank. However, as we are assuming that \rsus can be physical entities, protecting the user's locations from untrusted merchants becomes critical.

We will ensure that information regarding the honest \rsu where an honest 
user withdrew a coin is not available to the merchant where it deposits coins and to the bank. We 
formalize this using an indistinguishability experiment  $\Experiment{\locPrivacyLabel[\locPrivacyBit]}{\repCoinScheme}(\locPrivacyAdversary[1],
     \locPrivacyAdversary[2])$ in \appref{sec:model:security} as an inability to distinguish
a coin + voucher dispensed by one of two honest \rsus to an 
honest user of the adversary's choice

\myparagraph{Unforgeable receipts} We will ensure that the receipts can be used by 
\rsus and users if there are conflicts regarding account updates. Thus, receipts should 
be \textit{unforgeable}. We omit a formal definition since receipts are essentially a 
signature on all the relevant information (i.e., the identities of the users, \rsus, 
timestamp, etc.) 
which can be implemented using a standard digital signature scheme ensuring unforgeability under chosen message attacks.

\subsubsection{Non-goals}
The system does not address denial-of-service attacks where a malicious 
\rsu does not provide well-formed vouchers to honest users, or overloads the bank by 
flooding withdrawal/deposit requests.

\subsection{Cryptographic Tools}
\label{sec:defn}

\subsubsection{Pseudorandom function}
We will use a pseudorandom function (PRF) family. A
uniformly sampled PRF produces outputs that are
indistinguishable from a random function.
We recall the definition of a PRF using the formal experiment 
in \appref{}. In our constructions, we particularly use the
function family $\dyPRFDefn$ over the group \grp with order $\grpOrd$, due to \citet{dodis2005verifiable}. Given a key $\prfKey[1] \sample \integersModArg{\grpOrd}$ 
and an input $\genericFnInput \in \integersModArg{\grpOrd}$, the PRF computes $\dyPRF[\prfKey[1]](\genericFnInput) = \grpGen^{\frac{1}{1 + \prfKey[1] + \genericFnInput}}$, where $\grpGen$ is a generator of \grp.

\subsubsection{Commitment scheme}
We will use a cryptographic commitment scheme $\commSchemeDefn$, where \grp is a group with order \grpOrd.  The algorithm
$\commFn(\plaintext; \genericRand{})$ is used to commit to a set of messages
$\plaintext[1], \dots, \plaintext[\numkeys] \in \integersModArg{\grpOrd}$ with a random coin $\genericRand{} \sample \integersModArg{\grpOrd}$. Wherever it is not necessary, we will drop the random
coin in the notation for brevity. We will require
the scheme provides computational hiding and binding
properties. Intuitively, the hiding
property implies that a commitment to a message \plaintext hides the
message against an adversary. The binding property
ensures that an adversary
cannot produce two messages that produce the same commitment. We recall these
properties formally in \appref{}. 

We will use generalized Pedersen commitments~\cite{pedersen1991non}. The public parameters include \grp, and $\numkeys+1$ generators $\grpGen[1], \dots, \grpGen[\numkeys], \grpGen'$. In order to commit to a set of values $\plaintext[1], \dots, \plaintext[\numkeys]\in \integersModArg{\grpOrd}$, $\commFn(\plaintext[1], \dots, \plaintext[\numkeys]; \genericRand{}) = \grpGen[1]^{\plaintext[1] } \dots \grpGen[\numkeys]^{\plaintext[\numkeys]} (\grpGen')^{\genericRand{}}$, where $\genericRand{} \sample \integersModArg{\grpOrd}$. 

\subsubsection{Digital signature}

We will use a digital signature scheme $\sigScheme = \langle\signKeyGen, \sign{},
\vrfy{}\rangle$ where \signKeyGen generates a public-private key pair
$\langle \pubKey, \privKey\rangle$.  The signing algorithm \dsign{} on
input \plaintext produces a signature $\genericSig \assign
\dsign{\privKey}(\msg)$.  The verification algorithm
$\vrfy{\pubKey}(\msg, \genericSig)$ outputs 1 if the input is a valid
signature under $\privKey$ and outputs 0 otherwise. For correctness we
require that $\vrfy{\pubKey}(\dsign{\privKey}(\msg), \msg) = 1$ for any
$\langle \pubKey, \privKey\rangle$ output from \signKeyGen.  For security, we
require unforgeability, which implies that an adversary 
cannot produce a signature on a message of its own choice under
\privKey, for which it has not queried a signing oracle before.
We recall the definition in \appref{}

\subsubsection{Blind signature}
A blind signature scheme is a signature scheme in which the signer can create signatures on unknown messages~\cite{chaum1983blind}. The signature scheme is defined by a tuple of algorithms 
$\blindSigSchemeDefn$. Here, $\bkeygen(1^\secParam)$ generates a public-private key pair, $\bblind(\plaintext; \genericRand{})$ blinds the message \plaintext to be signed using a random blind \genericRand, $\bsign{\privKey}(\bblind(\plaintext; \genericRand{}))$ computes a signature $\genericSig$ on the blinded message, $\bunblind(\genericSig; \genericRand{})$ unblinds the signature to obtain a signature on the original message, and $\bvrfy{\pubKey}(\plaintext, \genericSig)$ verifies a signature.  We will use a standard RSA blind signature~\cite{pointcheval1996provably}.

\subsubsection{Blind signature with proofs of knowledge}
We will use a blind signature scheme where the holder can prove in zero-knowledge that it has a signature on a committed value. Specifically, such a signature scheme is defined by the tuple of algorithms \blindSigSchemeZKPDefn. The algorithm \zkeygen creates a public-private key pair. The algorithm $\genericSig \assign \zsign[\privKey](\genericCommitment)$ takes as input a commitment to a value, and produces a signature on the value that has been committed to. The algorithm $\zkp \assign \zprove{\pubKey}(\genericCommitment, \genericSig)$ takes as input a commitment to a value and a signature on the committed value, and produces a NIZK proof of knowledge of a signature on the opening of the commitment. Finally, $\zvrfy{\pubKey}(\genericCommitment, \zkp)$ takes as input a commitment and a NIZK proof produced by \zprove, and returns 1 if the proof shows that the party knows a signature on the committed value. 
Several schemes provide these properties, such as CL signatures~\cite{camenisch2003signature} and BBS
signatures~\cite{camenisch2004signature}.

\subsubsection{Non-interactive zero-knowledge proofs of knowledge}
\label{sec:background:zkp}
Our protocol leverages a noninteractive zero-knowledge (NIZK) proof of
knowledge $\zkpName = \langle \zkpGen$, \zkpVerify, \zkpSim,
$\zkpExtract\rangle$ in the random oracle
model~\cite{bellare1993oracles}.  Let \relation be the witness
relation for language $\lang \subseteq \{0, 1\}^{\ast}$, and let
\randomOracles denote the set of all functions from $\{0, 1\}^{\ast}$
to $\{0,1\}^{\secParam}$.  On input $\langle \statement, \witness\rangle
\in \relation$ and with access to a random oracle $\randomOracle
\sample \randomOracles$,
$\zkpGen[\witness]^{\randomOracle}(\statement)$ produces a proof \zkp
(using the witness \witness) that $\statement \in \lang$, so that if
$\zkp \assign \zkpGen[\witness]^{\randomOracle}(\statement)$ then
$\zkpVerify^{\randomOracle}(\statement, \zkp)$ returns true.  As is
typical, we require two properties from the proofs of knowledge: i)
the existence of a knowledge extractor \zkpExtract, and ii) a
proof simulator \zkpSim. We define the properties formally in \appref{}.

\mysection{Protocol Design}
\label{sec:scheme}

\begin{table}[th!]
\small
\caption{Table of notations\label{tab:notation}}
\begin{tabular}{|l|l|}
	\hline
     \multicolumn{2}{c}{\textbf{Cryptographic schemes}} \\
    \hline
    $\sigScheme$ & standard digital signature scheme \\
    $\blindSigScheme$ & blind signature scheme \\
    $\blindSigSchemeZKP$ & blind signature scheme with proofs of knowledge (e.g., CL-signature~\cite{camenisch2003signature})\\
     $\dyPRF$ & Dodis-Yampolinsky PRF scheme over group \grp with generator $\grpGen$ \\
	$\commFn$ & generalized Pedersen commitment over group \grp\\
    \hline 
    \multicolumn{2}{c}{\textbf{Keys}} \\
    \hline
       $\langle \sPubKey, \sPrivKey\rangle$ & bank's key pair for a digital signature $\sigScheme$  \\
	$\langle \sPubKeyBlind, \sPrivKeyBlind \rangle$ & bank's key pair for a blind signature $\blindSigScheme$\\
	$\langle \sPubKeyBlindZKP, \sPrivKeyBlindZKP \rangle$ & bank's key pair for a blind signature with proofs of knowledge \blindSigSchemeZKP\\
	$\langle \grpGen^{\rsuPrivKey{\rsuID}}, \rsuPrivKey{\rsuID} \rangle$ & \genRSU's identity key pair; $\rsuPrivKey{\rsuID} \sample \integersModArg{\grpOrd}$ \\
	$\langle \grpGen^{\privKey{\vID}}, \privKey{\vID} \rangle$ & user \genCar's identity key pair; $\privKey{\vID} \sample \integersModArg{\grpOrd}$ \\
   $\userPRFKey$ & User $\genCar$'s spending PRF key; $\userPRFKey \sample \integersModArg{\grpOrd}$\\
   $\genericSig[\privKey{\vID}]$ & signature under $\sPrivKeyBlindZKP$ on $\langle \privKey{\vID}, \userPRFKey\rangle$ \\
$\genericSig[\rsuPrivKey{\rsuID}]$ & signature under $\sPrivKeyBlindZKP$ on $\rsuPrivKey{\rsuID}$ \\
   \hline 
   \multicolumn{2}{c}{\textbf{Components of a coin}}
   \\
   \hline
	$\keyCommitment[1]$ & commitment to PRF key $\prfKey[1]$ sampled by ATM  \\
	$\keyCommitment[2]$ & commitment to PRF key $\prfKey[2]$ sampled by ATM  \\
	$\rsuPrivKeyCommitment$ & commitment to ATM's private identity key $\rsuPrivKey{\rsuID}$ \\
	$\langle \crHash(\keyCommitment[1], \keyCommitment[2], \rsuPrivKeyCommitment), \signedCoin \rangle$ & coin format; $\signedCoin$ is a signature on  
$\crHash(\keyCommitment[1], \keyCommitment[2], \rsuPrivKeyCommitment)$ under $\sPrivKeyBlind$\\
\hline
 \multicolumn{2}{c}{\textbf{Components of vouchers and transactions (except NIZK proofs)}}\\
\hline
$\privKeyCommitment$ & commitment to user $\genCar$'s secrets $\langle \privKey{\vID}, \userPRFKey \rangle$\\
$\voucherPRFEval$ & unique unforgeable identifier tying a specific coin to user's identity\\
$\voucherDblSpendToken$ & double-issuing token to identity ATMs who double-issue coins\\
 $\txPRFEval $ & double-spending token to identify users who double spend coins\\

\hline

\end{tabular}
\end{table}

In this section, we will describe two versions of our \ecash protocol. The first protocol \noncompactProtocol sacrifices compactness for faster withdrawals and deposits. The protocol \compactProtocol introduces compactness but at the cost of higher withdrawal and spending latencies. In the following, \grp is a prime-order group with order \grpOrd, \grpGen is a random generator of \grp, $\dyPRFDefn$ is the Dodis-Yampolinsky PRF, and $\commFn$ is the generalized Pedersen commitment scheme. For better readability, we summarize the key notations in \tblref{tab:notation}.

\mysubsection{Description of Non-Compact Protocol}

\subsubsection{Initialization} The bank initializes by selecting signing keys for three types of signature schemes: 
a key pair for \sigScheme, a key pair for a blind signature scheme \blindSigScheme and a key pair for a blind signature scheme with proofs of knowledge \blindSigSchemeZKP (see \tblref{tab:notation}). ATMs initialize
by selecting a public-private key for establishing identity $\langle \rsuPrivKey{\rsuID}, \grpGen^{\rsuPrivKey{\rsuID}} \rangle$. The bank signs the private identity key using \blindSigSchemeZKP. The bank stores $\grpGen^{\rsuPrivKey{\rsuID}}$ and other information about the ATM.


Each user initializes by registering with the bank. For this, the user samples a random PRF key $\userPRFKey$, and a key pair $\langle \vPrivKey{\vID}, \grpGen^{\vPrivKey{\vID}} \rangle$ as its identity keys. The bank signs both $\userPRFKey$ and $\vPrivKey{\vID}$ as a pair using $\blindSigSchemeZKP$. During signing, the user also proves in zero-knowledge that the public key $\grpGen^{\vPrivKey{\vID}}$ is correctly derived from the private key. The bank stores $\grpGen^{\vPrivKey{\vID}}$ and other information about the user.

\begin{figure}[t]
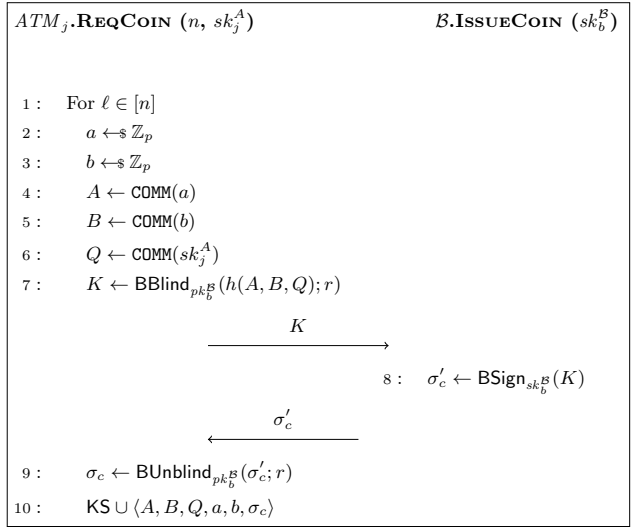

\centering
\resizebox{0.8\textwidth}{!}{\vbox{%
    \pcb{
    \textbf{\genRSU.\rsuWithdrawCoin(\numkeys, \rsuPrivKey{\rsuID})} \hspace{3cm} \textbf{\server.\serverIssueCoin(\sPrivKeyBlind)} \\
    \\[0.05\baselineskip] \\ [-0.3\baselineskip]
	\pcln \text{For} ~\keyIdx \in \nats[\numkeys] \\
	\pcln\label{withdraw:selectKeys1} \t \prfKey[1] \sample \integersModArg{\grpOrd} \\
	\pcln\label{withdraw:selectKeys2} \t \prfKey[2] \sample \integersModArg{\grpOrd} \\
	\pcln\label{withdraw:commitKeys1} \t \keyCommitment[1] \assign  \commFn(\prfKey[1])\\
	\pcln\label{withdraw:commitKeys2} \t \keyCommitment[2] \assign  \commFn(\prfKey[2])\\
	\pcln \t \rsuPrivKeyCommitment \assign \commFn(\rsuPrivKey{\rsuID})\\
    \pcln\label{withdraw:blind} \t \hashedCoin \assign \bblind[\sPubKeyBlind](\hashFn(\keyCommitment[1],\keyCommitment[2], \rsuPrivKeyCommitment); \genericRand{})\\		
    \hspace{3cm} \sendmessageright*[3cm]{\hashedCoin} \\
    \hspace{6cm} \pcln\label{withdraw:bsign} \signedCoin' \assign \bsign{\sPrivKeyBlind}(\hashedCoin)\\
    \hspace{3cm}\sendmessageleft*[2.5cm]{\signedCoin'}\\
    \pcln\label{withdraw:unblind} \t \signedCoin \assign \bunblind[\sPubKeyBlind](\signedCoin'; \genericRand{})\\
    \pcln\label{} \t \rsuKeyStore \cup \langle \keyCommitment[1], \keyCommitment[2], \rsuPrivKeyCommitment, \prfKey[1], \prfKey[2], \signedCoin \rangle 
    }
}}
\caption{\noncompactProtocol: Coin requests by ATM\label{fig:coin_req}}
\end{figure}

\myparagraph{\rsu coin requests}
An \rsu periodically requests new coins from the bank using the interface $	\rsuWithdrawCoin(\numkeys)$ interface (\figref{fig:coin_req}). This interface requires as input the number of coins $\numkeys$. (We implicitly assume that the bank can verify that the request is from an authorized ATM, e.g, using certificates).

The ATM randomly selects $\numkeys$ keys $\langle \prfKey[1], \prfKey[2] \rangle_{\keyIdx \in \nats[\numkeys]}$ for $\dyPRF$ and commits to them (\stepref{withdraw:selectKeys1}). The ATM also commits to its private identity key. The bank blindly signs the hash of the committed keys (\stepsref{withdraw:blind}{withdraw:unblind}), and \emph{each pair of signed commitments + blind signature constitutes a coin}. The ATM stores the coins and the corresponding PRF keys in a local key store \coinStore. These coins are later dispensed to the users. The bank may refuse the \rsu's request, e.g., if the bank wants to rate-limit the ATM from issuing more coins.


\begin{figure}[t!]
\resizebox{0.79\textwidth}{!}{\vbox{%
    \pcb{
   \underline{\genCar.\carGetCoin(\privKey{\vID}, \privKey{\vID}', \userPRFKey,  \userCLSig{\privKey{\vID}})} \hspace{6cm} \underline{\genRSU.\rsuAssignCoin    (\rsuPrivKey{\rsuID}, \prfKey[1],  \prfKey[2], \userCLSig{\rsuPrivKey{\rsuID} })} \\
    \\[0.05\baselineskip] \\ [-0.3\baselineskip]
    \pcln\label{getcoin:comm} \privKeyCommitment \assign \commFn(\vPrivKey{\vID}, \userPRFKey)\\
    \pcln\label{getcoin:zkp}\zkp{\privKey{\vID}} \assign \zprove{\sPubKeyBlindZKP}(\privKeyCommitment, \userCLSig{\privKey{\vID}}) \\
    \hspace{2cm} \sendmessageright*[2cm]{\vPubKey{\vID}, \privKeyCommitment, \zkp{\privKey{\vID}}} \\
    \hspace{4cm} \pclinecomment{Check user balance before \stepsref{assign:vrfyCLSig}{assign:computeZKP}} \\
    \hspace{6cm} \pcln\label{assign:vrfyCLSig} \text{If}~\zvrfy{\sPubKeyBlindZKP}(\privKeyCommitment, \zkp{\privKey{\vID}}) \neq 0, \text{then} ~\abort\\
	\hspace{6cm}\pcln \coinRand \assign \crHash(\privKeyCommitment)\\
    \hspace{6cm}\pcln\label{assign:computeID} \coinID \assign \coinRand + 1 \bmod \grpOrd \\
    \hspace{6cm}\pcln\label{assign:getKey} \langle \keyCommitment[1], \keyCommitment[2], \rsuPrivKeyCommitment, \prfKey[1], \prfKey[2], \signedCoin \rangle \assign \rsuKeyStore.\rsuGetKey()\\
 \hspace{6cm}\pcln\label{assign:computeVoucherPRF} \voucherPRFEval \assign \dyPRF[\prfKey[1]](\coinID)  \quad \pclinecomment{tie coin to user identity} \\
\hspace{6cm}\pcln\label{assign:computeVoucherDblToken} \voucherDblSpendToken \assign \rsuPubKey{\rsuID} \cdot \dyPRF[\prfKey[2]](0)^{\coinRand} \quad \pclinecomment{double issuing token}\\
\hspace{6cm} \pcln \zkp{\rsuPrivKey{\rsuID}} \assign \zprove{\sPubKeyBlindZKPAlt}(\rsuPrivKeyCommitment, \genericSig[\rsuPrivKey{\rsuID}])\\
\hspace{6cm} \pclinecomment{tie the secrets in the coin to a user's identity using a NIZK proof}\\
    \hspace{6cm}\pcln\label{assign:computeZKP} \zkp{\coinID} \assign \zkpGen[\alpha_1, \beta_1, \alpha_2, \beta_2, \gamma, \delta](\keyCommitment[1], \keyCommitment[2], \voucherPRFEval, \voucherDblSpendToken, \rsuPrivKeyCommitment, \coinID, \coinRand) \\
    \hspace{6cm}~\quad\{\keyCommitment[1] = \commFn(\alpha_1; \beta_1) ~\land~ \keyCommitment[2] = \commFn(\alpha_2; \beta_2) \\ \hspace{6cm}~\quad \t ~\land~\rsuPrivKeyCommitment = \commFn(\gamma, \delta) \land~ \voucherPRFEval  = \dyPRF[\alpha_1](\coinID) \\
    \hspace{6cm} ~\quad \t \t \land \voucherDblSpendToken = \grpGen^{\gamma}\cdot\dyPRF[\alpha_2](0)^{\coinRand} \} \\
	\hspace{6cm} \pcln \genericCoin \gets \langle \keyCommitment[1],  \keyCommitment[2], \rsuPrivKeyCommitment, \signedCoin\rangle\\
	\hspace{6cm} \pcln \genericVoucher \gets  \langle \privKeyCommitment, \zkp{\privKey{\vID}}, \voucherPRFEval, \voucherDblSpendToken, \zkp{\coinID}, \zkp{\rsuPrivKey{\rsuID}}, \coinRand \rangle\\
	\hspace{6cm} \pcln \nonce \sample \bin^{\secParam} \\
	\hspace{6cm} \pcln \intentMsg \assign \roHash(\genericCoin, \vPubKey{\vID}, \rsuPubKey{\rsuID}) \\
	\hspace{6cm}\label{assign:intent_msg} \pcln \Sigma \assign \dsign{\rsuSigPrivKey{\rsuID}}(\intentMsg, \genericVoucher, \nonce)\\
    \hspace{6cm}\sendmessageleft*[2cm]{\Sigma,\intentMsg, \genericVoucher, \nonce}\\
	\underline{\genCar.\carIssueReceipt(\privKey{\vID}')} \\
	\pcln \text{If} ~\vrfy{\rsuSigPubKey{\rsuID}}(\Sigma, \intentMsg, \genericVoucher, \nonce) \neq 0, ~\text{then} ~\abort\\
	\pcln\label{assign:user_receipt} \userReceipt \assign \dsign{\vSigPrivKey{\vID}}(\langle \mathsf{withdrawal}, \vPubKey{\vID}, \rsuPubKey{\rsuID}, \nonce \rangle)   \\
    \hspace{6cm}\sendmessageright*[1.5cm]{\userReceipt}\\
	\hspace{6cm} \pcln \text{If} ~\vrfy{\vSigPubKey{\vID}}(\userReceipt, \langle \mathsf{withdrawal}, \vPubKey{\vID}, \rsuPubKey{\rsuID}, \nonce \rangle) \neq 0, ~\text{then} ~\abort \\ 
    \hspace{6cm}\sendmessageleft*[1cm]{\genericCoin}\\
	\pcln \label{assign:complain_to_bank} \text{If \genRSU does not send \genericCoin} \vee ~\intentMsg \neq \roHash(\genericCoin, \vPubKey{\vID}, \rsuPubKey{\rsuID}) \\
	\pcln\t \text{Inform $\bank$ by sending message $\langle \abortWithdrawal$ $\Sigma$, $\intentMsg$, $\genericVoucher$, $\nonce$, $\vPubKey{\vID}$, $\rsuPubKey{\rsuID} \rangle$ and \abort} \\  
    \pcln\text{store} ~\genericCoin, \genericVoucher  \\
 \hspace{6cm} \pcln\label{withdrawal:update_bank} \server.\serverUpdateWithdrawal(\vPubKey{\vID}, \userReceipt, \rsuPubKey{\rsuID}, \nonce) \\ 
\\
\pclinecomment{If $\bank$ receives $\langle \abortWithdrawal$ $\Sigma$, $\intentMsg$,$\genericVoucher$, $\nonce$, $\vPubKey{\vID}$, $\rsuPubKey{\rsuID} \rangle$, then it stores all the values in a invalid coin list $\invalidList$}
  }
}}
\caption{\noncompactProtocol: Withdrawing a coin\label{fig:withdrawal}}
\end{figure}

%
\subsubsection{Withdrawing from \rsu}
When a user request a withdrawal from an \rsu, it is issued a coin(s) and a voucher. This is processed using the $\rsuAssignCoin$ interface (\figref{fig:withdrawal}). At a high level, the algorithm spends a coin that has been requested by the ATM before using the \rsuWithdrawCoin interface, where the user acts as the merchant. The transaction record becomes the voucher. 

In more details, the user commits to its identity private key in $\privKeyCommitment$, and proves in zero-knowledge that is has previously obtained a signature from the bank on the key (\stepsref{getcoin:comm}{getcoin:zkp}). After verifying the signature and ensuring that the 
user has enough balance for a withdrawal, the \rsu maps $\privKeyCommitment$ to an element in $\integersModArg{\grpOrd}$ by using a collision-resistant hash function, and computes a voucher ID $\coinID$ derived from the the commitment. 
Then, the \rsu retrieves metadata from \coinStore which includes i) the commitments to two PRF keys and the \rsu's private identity key, $\keyCommitment[1], \keyCommitment[2], \rsuPrivKeyCommitment$ respectively, ii) the PRF keys committed to $\prfKey[1], \prfKey[2]$, and iii) the bank's signature $\signedCoin$. The commitments to the PRF keys and the \rsu's private key along with the bank's signature constitutes a coin that will issued to the user.
The \rsu computes a voucher that is linked to the coin, and primarily includes two tokens \voucherPRFEval, \voucherDblSpendToken that i) tie the user's private key to the coin using $\prfKey[1]$ (\stepref{assign:computeVoucherPRF}), and ii) tie the issuing \rsu's identity to the coin + voucher in a double-issuing token computed using $\prfKey[2]$ (\stepref{assign:computeVoucherDblToken}). Finally, the \rsu computes two NIZK proofs showing that $\rsuPrivKeyCommitment$ is in fact a commitment to the \rsu's private key that has been previously (blindly) signed by the bank, and that the tokens $\voucherPRFEval, \voucherDblSpendToken$ have been correctly computed (\stepref{assign:computeZKP})

\begin{figure}[t!]
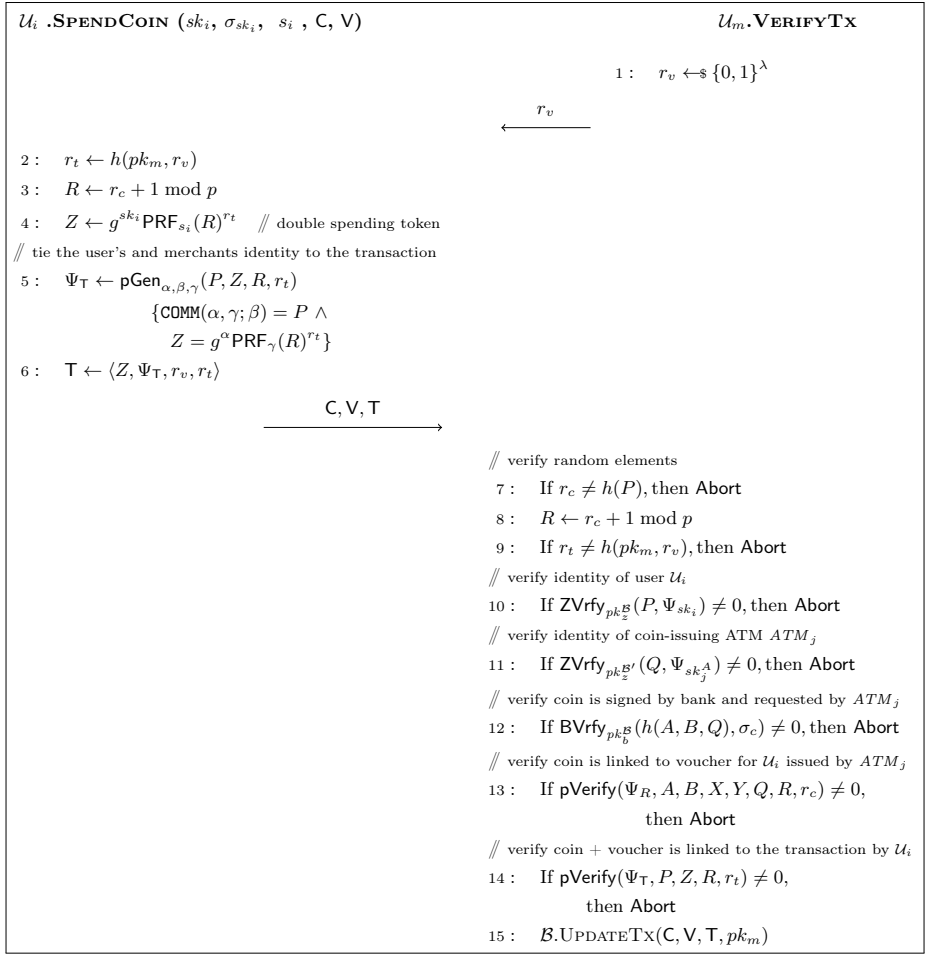

\resizebox{0.79\textwidth}{!}{\vbox{%
    \pcb{
  \textbf{\genCar.\carDepositCoin(\privKey{\vID}, \userCLSig{\privKey{\vID}}, \userPRFKey, \genericCoin, \genericVoucher)} \hspace{6cm} \textbf{\genDepRSU.\rsuVerifyCoin}
\\[0.05\baselineskip] \\ [-0.3\baselineskip]
    \hspace{10cm}\pcln\label{} \coinRandVehicle \sample \bin^{\secParam} \\
\hspace{8cm} \sendmessageleft*[1.5cm]{\coinRandVehicle}\\
    \pcln\label{} \txRand \assign \crHash(\vPubKey{\genDepRSUID}, \coinRandVehicle) \\
	\pcln\label{} \coinID \assign \coinRand + 1 \bmod \grpOrd \\
    \pcln\label{deposit:dblSpendToken} \txPRFEval \assign \grpGen^{\vPrivKey{\vID}} \dyPRF[\userPRFKey](\coinID)^{\txRand}  \quad \pclinecomment{double spending token} \\
\pclinecomment{tie the user's and merchants identity to the transaction}\\
    \pcln\label{deposit:zkpTx} \zkp{\genericTx} \assign \zkpGen[\alpha, \beta, \gamma](\privKeyCommitment, \txPRFEval, \coinID, \txRand) \\
  \hspace{2cm} \t \{ \commFn(\alpha, \gamma; \beta) = \privKeyCommitment  ~\land \\
   \hspace{2cm} \t \t  \txPRFEval = \grpGen^{\alpha} \dyPRF[\gamma](\coinID)^{\txRand} \}\\
  \pcln  \genericTx \assign \langle \txPRFEval, \zkp{\genericTx}, \coinRandVehicle, \txRand \rangle \\
    \hspace{4cm} \sendmessageright*[3cm]{\genericCoin, \genericVoucher, \genericTx} \\
\hspace{8cm}\pclinecomment{verify random elements} \\
\hspace{8cm}\pcln\label{verifyTx:coinRand} \text{If} ~\coinRand \neq \crHash(\privKeyCommitment), \text{then} ~\abort \\
\hspace{8cm}\pcln \coinID \assign \coinRand + 1 \bmod \grpOrd \\
\hspace{8cm}\pcln\label{verifyTx:txID}  \text{If} ~\txRand \neq \crHash(\vPubKey{\genDepRSUID}, 	\coinRandVehicle), \text{then} ~\abort \\
\hspace{8cm}\pclinecomment{verify identity of user $\genCar$} \\
\hspace{8cm} \pcln\label{verifyTx:clSigCommK} \text{If} ~\zvrfy{\sPubKeyBlindZKP}(\privKeyCommitment, \zkp{\privKey{\vID}}) \neq 0, \text{then} ~\abort\\
\hspace{8cm}\pclinecomment{verify identity of coin-issuing ATM $\genRSU$} \\
\hspace{8cm}\pcln\label{verifyTx:clSigCommPrivKeyRSU} \text{If} ~\zvrfy{\sPubKeyBlindZKPAlt}(\rsuPrivKeyCommitment, \zkp{\rsuPrivKey{\rsuID}}) \neq 0, \text{then} ~\abort\\
\hspace{8cm}\pclinecomment{verify coin is signed by bank and requested by $\genRSU$} \\
\hspace{8cm}\pcln\label{verifyTx:coin} \text{If} ~\bvrfy{\sPubKeyBlind}(\crHash(\keyCommitment[1], \keyCommitment[2], \rsuPrivKeyCommitment), \signedCoin) \neq 0, \text{then} ~\abort  \\
\hspace{8cm}\pclinecomment{verify coin is linked to voucher for $\genCar$ issued by $\genRSU$ } \\
\hspace{8cm}\pcln\label{verifyTx:coinID} \text{If} ~ \zkpVerify(\zkp{\coinID}, \keyCommitment[1], \keyCommitment[2], \voucherPRFEval, \voucherDblSpendToken, \rsuPrivKeyCommitment, \coinID, \coinRand) \neq 0,\\
\hspace{9cm}\hspace{1cm} \t \t \text{then} ~\abort 	 \\
\hspace{8cm}\pclinecomment{verify coin + voucher is linked to the transaction by $\genCar$} \\
\hspace{8cm}\pcln\label{verifyTx:txProof} \text{If} ~ \zkpVerify(\zkp{\genericTx}, \privKeyCommitment, \txPRFEval, \coinID, \txRand) \neq 0, \\
\hspace{8cm}\hspace{1cm} \t \t \text{then} ~\abort \\
\hspace{8cm}  \pcln \server.\serverVerifyTx(\genericCoin, \genericVoucher, \genericTx, \vPubKey{\genDepRSUID})   
    }
}}
\caption{\noncompactProtocol: Spending a coin\label{fig:spend}}
\end{figure}

Note that the bank's signature on the (hash of) signed commitments in the coin shows that the voucher has been computed by an ATM that has been authorized to issue the coin by the bank. By including the double-issuing token $\voucherPRFEval$ in the voucher, we ensure that if the same coin is issued by the ATM twice, then with overwhelming probability the bank will learn the identity of the offending ATM. Evaluating the PRF under the committed keys on the user's private key commitment ensures that the coin + voucher is cryptographically linked to the user's identity. This is violated only if the user can open the commitment to multiple value, or modify the tokens in the voucher and forge a NIZK proof without the witnesses, i.e., the PRF keys committed to in $\keyCommitment[1], \keyCommitment[2]$ by the ATM. Both these cases should happen with only negligible probability.

\subsubsection{Fair exchange of coins}
The $\carIssueReceipt$ interface is invoked by the user when receiving a coin from \genRSU to produce a receipt. However, to ensure an optimistic fair exchange, we follow a three-step process. First, the ATM commits to the coin by computing a hash of the coin, \intentMsg, using a random oracle $\ro$, and public keys of the user and the ATM. The ATM signs the hash, the voucher and a random nonce which acts as a unique identifier for the withdrawal and sends the signature to the user (\stepref{assign:intent_msg}). The user verifies the signature and then sends to the ATM a receipt which is essentially a signature on the nonce and the public keys of the user and the ATM (\stepref{assign:user_receipt}). After verifying the receipt, the ATM sends the coin to the user. If the coin/voucher is not well-formed, or the user does not receive the coin from the ATM after sending the receipt, then the user sends all the information received from the ATM to the bank in an \abortWithdrawal message. 

Upon receiving the \abortWithdrawal message, the bank stores the commitment and identities of the user and the ATM in an invalid coin list \invalidList. If a coin is added to the invalid list, then it should not be spent in any transaction later. Thus, for every transaction submitted, the bank checks whether the coin $\genericCoin'$ that is part of the transaction is in the invalid list by checking whether $\intentMsg = \ro(\genericCoin', \vPubKey{\vID}, \rsuPubKey{\genRSUID})$.  If there is a match, the bank can distinguish whether the coin was issued by the ATM to a different user by checking if the transaction has a different voucher than the one in \invalidList. If so, it can verify that both of the vouchers were issued by the same ATM by using the double-issuing token (as will be discussed later). Otherwise, the bank identifies that the user sent a false \abortWithdrawal message, and takes appropriate action.

\subsubsection{Spending a coin}

User $\genCar$ invokes $\carDepositCoin$ to spend a coin at merchant. Intuitively, this process ties the coin to the merchant's identity using a transaction. 
For this, the user first computes a ``double-spending token'' using the PRF key signed by the bank during initialization. In particular, the token ties the voucher ID \coinID 
to the private key of the user and the public key of the merchant (\stepref{deposit:dblSpendToken}). The user also generates proofs of knowledge to show that the private key embedded in the token corresponds to the private key 
in the commitment \privKeyCommitment that is included as part of the voucher. The merchant is provided the coin, the voucher, the double spending token and the proof of knowledge. The merchant verifies the transaction, and if successful, reports the transaction to the bank.

\subsubsection{Identifying double spenders}

\begin{figure}[t!]
\centering
\resizebox{0.8\textwidth}{!}{\vbox{%
    \pcb{
       \textbf{\bank.\serverVerifyTx(\genericCoin, \genericVoucher, \genericTx, \genDepRSU)}\\
	\pclinecomment{follow \stepsref{verifyTx:coinRand}{verifyTx:txProof} of \rsuVerifyCoin. \abort if verification fails} \\
	\pcln \text{If} ~\exists \langle \Sigma, \intentMsg, \genericVoucherAlt, \nonce, \vPubKey{\vID}, \rsuPubKey{\rsuID} \rangle \in \invalidList. ~\land ~\intentMsg = \ro(\genericCoin,  \vPubKey{\vID}, \rsuPubKey{\rsuID}) \\
\pcln \t \text{If} ~\genericVoucher = \genericVoucherAlt \\
\pcln \t \t \pcreturn \vPubKey{\vID} \\
\pcln \t \text{If} ~\genericVoucher = \genericVoucherAlt\\
\pcln \t \t \pclinecomment{check for double-issuing by following steps \stepsref{dbl:atm_double_issue}{dbl:atm_double_issue_output}} \\
\pcln\label{} \text{If} ~\genericCoin \notin \coinStore ~\land ~\genericCoin \notin \invalidList, \\ 
\pcln \t \coinStore \assign \coinStore \cup \langle \genericCoin, \coinID, \voucherDblSpendToken, \txPRFEval, \coinRand, \txRand \rangle; ~\pclinecomment{Update account of $\genDepRSU$}  \\
\pcln \t \pcreturn 0\\
\pclinecomment{Coin has been double-spent with same voucher}\\
\pcln\label{dbl:vehicle_double_spend} \text{If} ~\langle \genericCoin, \coinID, \voucherDblSpendToken, \txPRFEval', \coinRand, \txRandAlt \rangle \in \coinStore\\
    \pcln\label{} \t \text{If} ~\txRand = \txRandAlt, \text{then} ~\abort\\
    \pcln\label{dbl:computeTxPRFEval} \t \dyPRF[\userPRFKey](\coinID) \assign \left(\frac{\txPRFEval}{\txPRFEval'}\right)^{1/(\txRand - \txRandAlt)} \\
    \pcln\label{} \t \vPubKey{\vID} \assign \txPRFEval/\dyPRF[\userPRFKey](\coinID)^{\txRand}\\
    \pcln\label{} \t \vPubKey{\vIDAlt} \assign \txPRFEval'/\dyPRF[\userPRFKey](\coinID)^{\txRandAlt}\\
\pcln \t \text{If} ~\vPubKey{\vIDAlt} \neq \vPubKey{\vID}, \text{then} ~\abort \\
\pcln\label{dbl:vehicle_double_spend_output} \t \pcreturn \pubKey{\vID} ~\qquad\pclinecomment{coin has been double-spent by \genCar}\\
\pclinecomment{Coin has been double-spent with different vouchers or double-issued}\\
\pcln \label{dbl:atm_double_issue} \text{If} ~\langle \genericCoin, \coinID', \voucherDblSpendToken', \txPRFEval', \coinRandAlt, \txRandAlt \rangle \in \coinStore ~\land~ \voucherDblSpendToken \neq \voucherDblSpendToken' \\
\pcln\label{} \t \text{If} ~\coinRand = \coinRandAlt, \text{then follow \stepsref{dbl:vehicle_double_spend}{dbl:vehicle_double_spend_output}}\\
    \pcln\label{dbl:computeVoucherPRFEval} \t \dyPRF[\prfKey[2]](0) \assign \left(\frac{\voucherDblSpendToken}{\voucherDblSpendToken'}\right)^{1/(\coinRand - \coinRandAlt)} \\
    \pcln\label{} \t \rsuPubKey{\rsuID} \assign \voucherDblSpendToken/\dyPRF[\prfKey[2]](0)^{\coinRand}\\
    \pcln\label{} \t \rsuPubKey{\rsuIDAlt} \assign \voucherDblSpendToken'/\dyPRF[\prfKey[2]](0)^{\coinRandAlt}\\
\pcln \t \text{If} ~\rsuPubKey{\rsuIDAlt} \neq \rsuPubKey{\rsuID}, \text{then} ~\abort \\
\pcln\label{dbl:atm_double_issue_output} \t \pcreturn \rsuPubKey{\rsuID} ~\qquad\pclinecomment{coin has been double-issued by \genRSU} 
}
}}
\caption{\noncompactProtocol: Verifying transactions \label{fig:dbl}}
\vspace{-0.3cm}
\end{figure}
Upon being reported a transaction, the bank invokes $\bank.\serverVerifyTx(\genericCoin, \genericVoucher, \genericTx, 
\genCar{\vIDDep})$ and verifies that the coin is valid, the voucher is linked to the coin, and the transaction is valid. If the transaction is valid, the bank checks if the coin is already recorded in the log \coinStore, in which case it invokes the double spending detection algorithm. Specifically, the bank checks for two cases: i) the user has double-spent the same coin, or ii) an ATM has double-issued the coin. The process for the former is nearly identical to the compact \ecash design, and computes $\dyPRF[\userPRFKey](\coinID)$ from the double spending tokens $\txPRFEval$ and $\txPRFEval'$ in the transactions that spend the same coin (\stepsref{dbl:vehicle_double_spend}{dbl:vehicle_double_spend_output}). Then, the bank is able to obtain the public key of the offending user. 
The latter also follows the same process but uses the double-issuing tokens $\voucherDblSpendToken$ and $\voucherDblSpendToken'$ for obtaining the public key of the offending ATM. We show in \appref{appx:proof_dbl} that the algorithm fails if an adversary produces an incorrectly computed voucher/transaction that the bank verifies as valid, which happens with negligible probability.


\section{Description of Protocol with Compactness}
\label{appx_compact_protocol}

\figref{fig:withdrawal_compact} describes the \ecash protocol with compactness. The main differences are that the an ATM can reuse the same PRF key pair $\prfKey[1], \prfKey[2]$ at most $\numkeys$ times. For this, the bank blindly signs these keys with a scheme that supports proof of knowledge of a signature. When the ATM issues a coin to a user it first commits to the keys, and proves that it knows a signature on the openings of the commitments (\stepsref{compact:commit_key1}{compact:prove_key2}). To show that the key hasn't been used more than $\numkeys$ times, the ATM uses the value of a counter \coinSN as a serial number, and evaluates the PRF on the serial number (\stepsref{compact:computeID}{compact:computeVoucherDblToken}). The ATM also proves in zero knowledge that the value of the counter does not exceed $\numkeys$ (\stepref{compact:zkp_range}). Note that the value of the counter is not provided in the clear. 

We remark here the notable differences for the non-compact protocol described in \figref{fig:withdrawal}. First, the two PRF evaluations \voucherPRFEval and \voucherDblSpendToken are on the \emph{same} value, namely the serial number \coinSN. This is because we need to ensure that an adversary cannot reuse the values of \voucherPRFEval and \voucherDblSpendToken as part of different coins. Thus, evaluating the PRF on \coinSN, whose value is not known to the adversary, and tying the evaluations using the NIZK proof \zkp{\coinID} ensures that each pair of evaluations \voucherPRFEval and \voucherDblSpendToken can be used only once and as part of the same coin. This is not necessary in the non-compact version because the commitments to the keys used for evaluating \voucherPRFEval and \voucherDblSpendToken are tied together in a coin using the hash function $\crHash(\keyCommitment[1], \keyCommitment[2], \rsuPrivKeyCommitment)$. Thus, if the adversary provides $\voucherPRFEval$ evaluated under the opening of $\keyCommitment[1]$ in a voucher, then the same voucher must also include $\voucherDblSpendToken$ evaluated under the opening of $\keyCommitment[2]$, unless there is a collision in \crHash. 

Second, in the compact version  $\voucherPRFEval = \dyPRF[\prfKey[1]](\coinSN)$, \emph{independent of the user identity}. In the non-compact version (\figref{fig:withdrawal}), the evaluation $\voucherPRFEval = \dyPRF[\prfKey[1]](\privKeyCommitment)$ is on the commitment to the user's private key; this ensures that coins cannot be transferred between users without sharing keys, or an adversary computing $\voucherPRFEval = \dyPRF[\prfKey[1]](\privKeyCommitment')$ without knowing $\prfKey[1]$. In the compact version, to transfer a coin + voucher, the double spending token  $\voucherDblSpendToken = \dyPRF[\prfKey[2]](\coinSN)^{\coinRand}$ would need to be recomputed by an adversary, which also entails computing the NIZK proof $\zkp{\coinID}$ without the secrets $\prfKey[1], \prfKey[2], \rsuPrivKeyCommitment, \coinSN$. Also note that, similar to the non-compact version, the user still needs to tie the coin to its identity by using the token $\coinID \assign \crHash(\privKeyCommitment, \coinSNCommitment)$ for using the PRF, and proving that its knows the opening of $\privKeyCommitment$ in $\zkp{\genericTx}$.

\begin{figure}[h!]
\resizebox{0.75\textwidth}{!}{\vbox{%
    \pcb{
    \textbf{\genCar.\carGetCoin} \hspace{7cm} \textbf{\genRSU.\rsuAssignCoin} \\
    (\pubKey{\vID}, \privKey{\vID}, \userCLSig{\privKey{\vID}}) \hspace{7cm} (\sPubKeyBlind, \sPubKeyBlindZKP, \genericSig[\rsuPrivKey{\rsuID}], \prfKey[1], \prfKey[2], \genericSig[\prfKey[1]], \genericSig[\prfKey[2]]) \\
    \\[0.05\baselineskip] \\ [-0.3\baselineskip]
    \pcln\label{compact:getcoin:comm} \privKeyCommitment \assign \commFn(\vPrivKey{\vID}, \userPRFKey)\\
    \pcln\label{compact:getcoin:zkp}\zkp{\privKey{\vID}} \assign \zprove{\sPubKeyBlindZKP}(\privKeyCommitment, \userCLSig{\privKey{\vID}}) \\
    \hspace{2cm} \sendmessageright*[2cm]{\vPubKey{\vID}, \privKeyCommitment, \zkp{\privKey{\vID}}} \\
    \hspace{4cm} \pclinecomment{Check user balance before \stepsref{compact:vrfyCLSig}{compact:computeZKP}} \\
    \hspace{4cm} \pcln\label{compact:vrfyCLSig} \text{If}~\zvrfy{\sPubKeyBlindZKP}(\privKeyCommitment, \zkp{\privKey{\vID}}) \neq 0, \text{then} ~\abort\\
	\hspace{4cm} \pcln\label{compact:commit_key1} \textcolor{red}{\keyCommitment[1] \assign \commFn(\prfKey[1])}\\
	\hspace{4cm} \pcln\label{compact:prove_key1}\textcolor{red}{\zkp{\keyCommitment[1]} \assign \zprove{\sPrivKeyBlindZKP}(\prfKey[1], \genericSig[\prfKey[1]])} \quad\pclinecomment{prove key $\prfKey[1]$ has been signed by bank}\\
	\hspace{4cm} \pcln\label{compact:commit_key2} \textcolor{red}{\keyCommitment[2] \assign \commFn(\prfKey[2])}\\
	\hspace{4cm} \pcln\label{compact:prove_key2}\textcolor{red}{\zkp{\keyCommitment[2]} \assign \zprove{\sPrivKeyBlindZKP}(\prfKey[2], \genericSig[\prfKey[2]])} \quad\pclinecomment{prove key $\prfKey[2]$ has been signed by bank}\\
	\hspace{4cm} \pcln \rsuPrivKeyCommitment \assign \commFn(\rsuPrivKey{\rsuID})\\
	\hspace{4cm} \pcln \zkp{\rsuPrivKey{\rsuID}} \assign \zprove{\sPubKeyBlindZKPAlt}(\rsuPrivKeyCommitment, \genericSig[\rsuPrivKey{\rsuID}]) \quad\pclinecomment{prove ATM's private key has been signed by bank}\\
	\hspace{4cm}\pcln \coinRand \assign \crHash(\privKeyCommitment) \\
    \hspace{4cm}\pcln\label{compact:computeID} \coinSN \assign \coinSN + 1 \quad\pclinecomment{track how many coins have been issued by ATM} \\
	\hspace{4cm} \pcln\label{compact:comm_coinSN} \coinSNCommitment \assign \commFn(\coinSN) \\
 \hspace{4cm}\pcln\label{compact:computeVoucherPRF} \voucherPRFEval \assign \dyPRF[\prfKey[1]](\textcolor{red}{\coinSN}) ~\quad\pclinecomment{Unique serial number for the coin to track double spending} \\
\hspace{4cm}\pcln\label{compact:computeVoucherDblToken} \voucherDblSpendToken \assign \rsuPubKey{\rsuID} \cdot \dyPRF[\prfKey[2]](\textcolor{red}{\coinSN})^{\coinRand} \quad\pclinecomment{double issuing token} \\
\hspace{4cm}\pclinecomment{range proof that $\le \numkeys$ coins issued}\\
\hspace{4cm} \pcln\label{compact:zkp_range} \textcolor{red}{\zkp{\coinSN} \assign \zkpGen[\alpha, \beta](\coinSNCommitment)\{ \coinSNCommitment = \commFn(\alpha; \beta) \land \alpha \in [1, \numkeys] \}}  \\ 
\hspace{4cm} \pclinecomment{Tie the secrets in coin with the voucher} \\
    \hspace{4cm}\pcln\label{compact:computeZKP} \zkp{\coinID} \assign \zkpGen[\alpha_1, \beta_1, \alpha_2, \beta_2, \alpha_3, \beta_3, \alpha_4, \beta_4](\keyCommitment[1], \keyCommitment[2],\rsuPrivKeyCommitment, \coinSNCommitment, \voucherPRFEval, \voucherDblSpendToken,\coinRand) \\
    \hspace{4cm}~\quad\{\keyCommitment[1] = \commFn(\alpha_1; \beta_1) ~\land~ \keyCommitment[2] = \commFn(\alpha_2; \beta_2) ~\land~\rsuPrivKeyCommitment = \commFn(\alpha_3, \beta_3) \\ 
\hspace{4cm}~\quad \t \coinSNCommitment = \commFn(\alpha_4; \beta_4) ~\land~ \voucherPRFEval  = \dyPRF[\alpha_1](\alpha_4) 
~\land \voucherDblSpendToken = \grpGen^{\gamma}\cdot\dyPRF[\alpha_2](\alpha_4)^{\coinRand} \} \\
\hspace{4cm} \pcln \genericCoin \assign \langle \keyCommitment[1], \keyCommitment[2], \rsuPrivKeyCommitment,  \zkp{\prfKey[1]}, \zkp{\prfKey[2]}, \zkp{\rsuPrivKey{\rsuID}}, \rangle \\
\hspace{4cm} \pcln \genericVoucher \assign \langle \coinSNCommitment, \privKeyCommitment, \voucherPRFEval, \voucherDblSpendToken, \zkp{\coinSN},\zkp{\coinID} \rangle  \\
    \hspace{2.5cm}\sendmessageleft*[2cm]{\genericCoin, \genericVoucher} \\
    \pcln\text{store} ~\genericCoin, \genericVoucher \\
    \pcln\userReceipt \assign \genCar.\carIssueReceipt \\
    \hspace{2.5cm}\sendmessageright*[2.5cm]{\userReceipt}\\
    \hspace{7cm} \pcln\server.\serverUpdateWithdrawal(\vPubKey{\vID}, \userReceipt) 
    }
}}
\caption{\compactProtocol: Withdrawing a coin\label{fig:withdrawal_compact}}
\end{figure}

\begin{figure}[h!]
\centering
\resizebox{0.8\textwidth}{!}{\vbox{%
    \pcb{
  \textbf{\genCar.\carDepositCoin} \hspace{7cm} \textbf{\genDepRSU.\rsuVerifyCoin}\\
(\pubKey{\vID}, \privKey{\vID}, \userCLSig{\privKey{\vID}}, \userPRFKey, \genericCoin, \genericVoucher) \hspace{6.5cm} (\sPubKeyBlind, \sPubKeyBlindZKP, \sPubKeyBlindZKPAlt) 
       \\[0.05\baselineskip] \\ [-0.3\baselineskip]
    \hspace{9cm}\pcln\label{} \coinRandVehicle \sample \bin^{\secParam} \\
\hspace{3cm} \sendmessageleft*[3cm]{\coinRandVehicle}\\
    \pcln\label{} \txRand \assign \crHash(\rsuPubKey{\vIDDep}, \coinRandVehicle) \\
	\pcln\label{} \coinID \assign \crHash(\privKeyCommitment, \coinSNCommitment) \\
    \pcln \txPRFEval \assign \grpGen^{\privKey{\vID}} \dyPRF[\userPRFKey](\coinID)^{\txRand} \quad\pclinecomment{double spending token} \\
    \pcln \zkp{\genericTx} \assign \zkpGen[\alpha, \beta, \gamma](\privKeyCommitment, \txPRFEval, \coinID, \txRand) \\
  \hspace{2cm} \t \{ \commFn(\alpha, \gamma; \beta) = \privKeyCommitment ~\land \\ 
   \hspace{2cm} \t \t  \txPRFEval = \grpGen^{\alpha} \dyPRF[\gamma](\coinID)^{\txRand} \}\\
  \pcln  \genericTx \assign \langle \txPRFEval, \zkp{\genericTx}, \coinRandVehicle, \txRand \rangle \\
    \hspace{2cm} \sendmessageright*[3cm]{\genericCoin, \genericVoucher, \genericTx} \\
\hspace{5cm}\pclinecomment{verify random elements} \\
\hspace{5cm}\pcln \coinRand =\crHash(\privKeyCommitment)\\
\hspace{5cm}\pcln  \text{If} ~\txRand \neq \crHash(\rsuPubKey{\genDepRSUID}, 	\coinRandVehicle), \text{then} ~\abort \\
\hspace{5cm}\pcln  \text{If} ~\coinID \neq \crHash(\privKeyCommitment,\coinSNCommitment), \text{then} ~\abort \\
\hspace{5cm}\pclinecomment{verify identity of user $\genCar$} \\
\hspace{5cm} \pcln \text{If} ~\zvrfy{\sPubKeyBlindZKP}(\privKeyCommitment, \zkp{\privKey{\vID}}) \neq 0, \text{then} ~\abort\\
\hspace{5cm}\pcln \text{If} ~\zvrfy{\sPubKeyBlindZKP}(\userPRFKeyCommitment, \zkp{\userPRFKey}) \neq 0, \text{then} ~\abort \\
\hspace{5cm}\pclinecomment{verify identity of coin-issuing ATM $\genRSU$} \\
\hspace{5cm}\pcln \text{If} ~\zvrfy{\sPubKeyBlindZKPAlt}(\rsuPrivKeyCommitment, \zkp{\rsuPrivKey{\rsuID}}) \neq 0, \text{then} ~\abort\\
\hspace{5cm}\pclinecomment{verify coin is valid} \\
\hspace{5cm}\pcln \text{If} ~\zvrfy{\sPubKeyBlind}(\keyCommitment[1], \zkp{\prfKey[1]}) \neq 1, \text{then} ~\abort\\
\hspace{5cm}\pcln \text{If} ~\zvrfy{\sPubKeyBlind}(\keyCommitment[2], \zkp{\prfKey[2]}) \neq 1, \text{then} ~\abort\\
\hspace{5cm}\pcln \text{If} ~\zkpVerify(\zkp{\coinSN}, \coinSNCommitment) \neq 1, \text{then} ~\abort\\
\hspace{5cm}\text{If} ~\bvrfy{\sPubKeyBlind}(\crHash(\keyCommitment[1], \keyCommitment[2], \rsuPrivKeyCommitment), \signedCoin) \neq 0, \text{then} ~\abort  \\
\hspace{5cm}\pclinecomment{verify coin is linked to voucher for $\genCar$ issued by $\genRSU$ } \\
\hspace{5cm}\pcln \text{If} ~ \zkpVerify(\zkp{\coinID}, \keyCommitment[1], \keyCommitment[2], \rsuPrivKeyCommitment, \coinSNCommitment, \voucherPRFEval, \voucherDblSpendToken, \coinRand) \neq 0,\\
\hspace{5cm}\hspace{1cm} \t \t \text{then} ~\abort 	 \\
\hspace{5cm}\pclinecomment{verify coin + voucher is linked to the transaction by $\genCar$} \\
\hspace{5cm}\pcln \text{If} ~ \zkpVerify(\zkp{\genericTx}, \privKeyCommitment, \txPRFEval, \coinID, \txRand) \neq 0, \\
\hspace{5cm}\hspace{1cm} \t \t \text{then} ~\abort \\
\hspace{5cm}  \pcln \server.\serverVerifyTx(\genericCoin, \genericVoucher, \genericTx, \genDepRSU)
    }
}}
\caption{\compactProtocol: Spending a coin\label{fig:spending_compact}}
\end{figure}

\section{Fully Offline Withdrawals}
\label{sec:offline_withdraw}

We have assumed in our discussions that the \rsus will always check
(and update) balances from the bank before dispensing coins to a user,
although the coins that are dispensed are withdrawn offline.  We can allow an \rsu to defer checking a user's account balance until later, i.e., off the critical path of dispensing coins to that user,
provided that (i) the user had a sufficiently large balance when the
\rsu last checked, and (ii) users are globally rate-limited in their
abilities to withdraw coins.  

The first condition can be fulfilled if the bank 
broadcasts a set of accounts that do not have the minimum withdrawal 
balance. To make this list concise, the bank will represent this set as a 
Bloom filter. When a user requests cash, the \rsu will first check whether
the user's account number is in the Bloom filter;  if not, it 
will dispense the required cash. The Bloom filter can be tuned for negligible 
false positives~\cite{larisch2017crlite}.

For the second condition, we need to prevent a user from withdrawing 
simultaneously from 
multiple ATMs.  If the ATMs are edge devices and users are 
required to present some unforgeable credentials to access them, then 
implicitly the user may not be able to visit multiple \rsus to overdraw their 
account, unless they are able to forge their credentials. In such cases, standard authentication 
systems that measure physical property, like biometrics can be used to ensure that a user is 
physically present during withdrawals. We will ensure that the time period between two subsequent 
updates from an ATM must be bounded so that a user cannot visit a large number of ATMs 
in that time period.

Alternatively, if the ATMs are digital entities that users connect to, then  
globally rate-limiting a user's coin withdrawals might be accomplished
using a public random beacon (e.g., as implemented by
NIST~\cite{kelsey2019reference}).  If each user holds a private key
for a verifiable random function family
(VRF)~\cite{micali1999verifiable}, for which the bank (and its \rsus)
 holds the corresponding public
key, then an \rsu can permit coin withdrawal by this user only if the
user's VRF, applied to the current beacon value, designates this \rsu
as the one at which the user is currently eligible to withdraw.  In
this way, the user will be rate-limited to withdraw at only one \rsu
per beacon pulse and, then, only the number of coins that \rsu allows.

\section{Microbenchmark}
\label{sec:eval_compute}

\begin{figure}[t]
    \centering
    \includegraphics[width=0.9\textwidth]{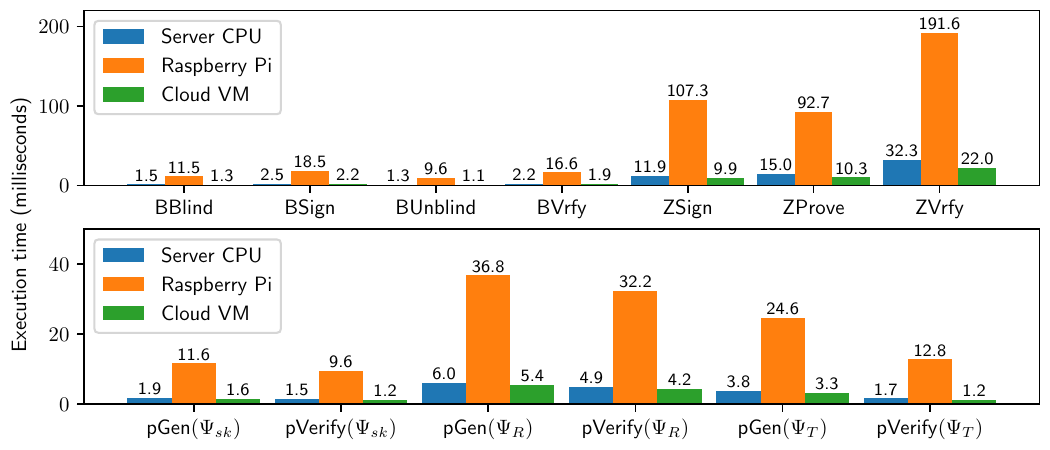}
    \caption{\small Computation overhead of cryptographic functions.}
    \label{fig:computation_crypto}
\end{figure}

We implemented our \ecash scheme with blind signatures on a 2048-bit RSA group, blind signature with proofs of knowledge on the elliptic curve BLS12\_381 (CL signature~\cite{camenisch2004signature}), and non-interactive zero-knowledge proofs of knowledge on a 2048-bit RSA groups.
BLS12\_381 is a pairing-friendly elliptic curve providing 128-bit security, widely adopted in blockchain applications such as ZCash~\cite{Zcash2024} and Ethereum~\cite{Ethereum2024}.
The identities of users and ATMs are managed using a 2048-bit RSA public key system. 
All cryptographic functions utilized are standardized and recognized for their security.

In terms of implementation details, the scheme is implemented in 1,694 lines of C++. The CL signature~\cite{camenisch2004signature} and pseudorandom functions (PRFs)~\cite{dodis2005verifiable} are implemented using the MIRACL library~\cite{zhang2022making}, the NIZK is implemented on ZKPDL~\cite{meiklejohn2010zkpdl}, while the other cryptographic functions are implemented on OpenSSL~\cite{openssl}.

We benchmarked the performance of \sys on a single core of three CPU platforms:
\begin{itemize}[nosep,leftmargin=1em,labelwidth=*,align=left]
    \item Intel Xeon Silver 4110 8-core CPU @ 2.10GHz of around $\$1{,}400$ representing the typical computation power on servers.
    \item Raspberry Pi 4 with quad-core Cortex-A72 64-bit SoC @ 1.5GHz costing $\$40$, as the most affordable choice.
    \item Amazon AWS virtual machine (t2.micro, $\$0.28$ per day) for benchmarking the performance on a low-resource cloud system.
\end{itemize}

\myparagraph{Cryptographic functions}
\figref{fig:computation_crypto} illustrates the execution time of cryptographic functions. Across various hardware platforms, the server CPU and cloud VM exhibit similar performance, whereas the Raspberry Pi is approximately 8$\times$ slower. The most expensive function is the verification of CL signature, while it can be finished in 200 ms on Raspberry Pi 4. Overall, the benchmarking results demonstrate that our \ecash scheme is well-suited for real-time applications, even on resource-constrained hardware.

\begin{table}[t]
\small
\setlength{\tabcolsep}{4pt}
\caption{Size and execution time of each protocol step in the ATM \ecash scheme. 
}
  \label{tab:computation_atm}
  \centering

{\centering \textit{(a) Protocol object sizes}\par}


\vspace{2px}
\begin{tabular}{|l|l|r|}
\noalign{\global\arrayrulewidth1pt}\hline\noalign{\global\arrayrulewidth0.4pt}
Object & \kilobytes \\
\noalign{\global\arrayrulewidth1pt}\hline\noalign{\global\arrayrulewidth0.4pt}
Coin \genericCoin                                  & 1  \\
Voucher \genericVoucher    & 54 \\
Transaction \genericTx                                  & 2  \\
\noalign{\global\arrayrulewidth1pt}\hline\noalign{\global\arrayrulewidth0.4pt}
\end{tabular}

\vspace{6px}
\centering
{\centering \textit{(b) Communication cost per protocol step}\par}
\vspace{2px}
\begin{tabular}{|l|l|}
\noalign{\global\arrayrulewidth1pt}\hline\noalign{\global\arrayrulewidth0.4pt}
Messgage direction & \kilobytes \\
\noalign{\global\arrayrulewidth1pt}\hline\noalign{\global\arrayrulewidth0.4pt}
\multicolumn{2}{|l|}{\textit{Coin request from bank} (\rsuWithdrawCoin/\serverIssueCoin)} \\
\hline
ATM $\to$ bank ($\hashedCoin$) & $0.3$ \\
Bank $\to$ ATM ($\signedCoin'$) & $0.3$ \\
\hline
\multicolumn{2}{|l|}{\textit{Coin withdrawal} (\carGetCoin / \rsuAssignCoin)} \\
\hline
User $\to$ \rsu ~~(\vPubKey{\vID}, \privKeyCommitment, \zkp{\privKey{\vID}}) & 25 \\
\rsu $\to$ User ~~(\genericCoin, \genericVoucher)  & 55 \\
\hline
\multicolumn{2}{|l|}{\textit{Coin spending} (\carDepositCoin / \rsuVerifyCoin)} \\
\hline
User $\to$ Merchant ~~(\genericCoin, \genericVoucher, \genericTx) & 57 \\
\noalign{\global\arrayrulewidth1pt}\hline\noalign{\global\arrayrulewidth0.4pt}
\end{tabular}

\vspace{6px}
\centering
{\centering \textit{(c) Execution time (ms)}\par}
\vspace{2px}
\begin{tabular}{|c|r|r|r|r|}
\noalign{\global\arrayrulewidth1pt}\hline\noalign{\global\arrayrulewidth0.4pt}
\multirow{2}{*}{CPU} & \multicolumn{2}{c|}{\carGetCoin / \rsuAssignCoin} & \multicolumn{2}{c|}{\carDepositCoin / \rsuVerifyCoin} \\
\cline{2-5}
& User & \rsu & User & Merchant \\
\noalign{\global\arrayrulewidth1pt}\hline\noalign{\global\arrayrulewidth0.4pt}
Server CPU     & 15.52 & 58.57  & 4.33  & 67.77  \\
Raspberry Pi   & 94.56 & 414.13 & 26.46 & 397.62 \\
Cloud VM       & 10.80 & 46.86  & 3.80  & 64.55  \\
\noalign{\global\arrayrulewidth1pt}\hline\noalign{\global\arrayrulewidth0.4pt}
\end{tabular}
\vspace{2px}
{\\ \centering \scriptsize Execution time in milliseconds.}
\end{table}

\myparagraph{Object sizes \& communication}
\tblref{tab:computation_atm} presents the concrete sizes of the protocol objects alongside the execution time of each protocol step across the three hardware platforms for the non-compact version of the protocol. 
The coin \genericCoin consists of three Pedersen commitments and one blind signature, giving a size of 1\kilobytes.
The voucher \genericVoucher is dominated by two CL signature proofs ($\zkp{\privKey{\vID}}$ and $\zkp{\rsuPrivKey{\rsuID}}$), which together account for $\approx$93\% of its 55\kilobytes size.
The transaction \genericTx is roughly 2\kilobytes, containing the double-spending token, a NIZK proof with 3 witnesses, and two random elements.
Importantly, the coin size is \emph{constant} regardless of the number of users, ATMs, or prior transactions --- only the voucher carries the zero-knowledge proofs needed for verification.

We note that 
introducing compactness increases the latency for withdrawal since the voucher now includes an additional zero-knowledge range proof. The spending time also slightly increases due to the additional proof verification time. While the compact scheme can be used when storage resources are limited, we believe that for most deployment scenarios, ATMs will have enough storage to stock coins without compactness, due to the small size of the coins in our protocol. Without compactness, the ATMs will also need more interaction with the bank to fetch coins (invoking \rsuWithdrawCoin / \serverIssueCoin), but the total communication per coin is $\approx$0.6\kilobytes. Thus, fetching around 100,000 coins in a batch requires roughly 66\megabytes of total communication, and storing them requires roughly 100MB of storage.

\myparagraph{End-to-end compute latency}
Coin withdrawal from an ATM (which invokes $\carGetCoin$ and  $\rsuAssignCoin$) is the most expensive online step for the user, as its proves in zero-knowledge the possession of a valid CL signature on its private identity key. During the withdrawal, the ATM computes the voucher for the user. On the server CPU, user-side computation takes 15.52\,ms and ATM-side computation takes 58.57\,ms. On the Raspberry Pi, these increase to 94.56\,ms and 414.13\,ms respectively, approximately 7$\times$ slower, consistent with the per-function overhead observed in \figref{fig:computation_crypto}.

Spending a coin (\carDepositCoin) is lightweight on the user side: 4.33\,ms on the server CPU and 26.46\,ms even on the Raspberry Pi, as it only requires computing the double-spending token and the transaction proof $\zkp{\genericTx}$.
Verification at the merchant (\rsuVerifyCoin) takes 67.77\,ms on the server CPU and 397.62\,ms on the Raspberry Pi, dominated by the CL signature verifications of user and ATM credentials, the blind signature check (\bvrfy{}) on the coin, and the NIZK proof verifications. The cloud VM performs comparably to the server CPU.
Overall, the total latency for the online withdrawal path (user + ATM) is 74.09\,ms on the server CPU and 508.69\,ms on the Raspberry Pi, while the spending path (user + merchant) completes in 72.10\,ms and 424.08\,ms respectively. These results confirm that our ATM \ecash protocol is practical for real-time applications across all tested hardware platforms.



\mysection{Conclusion}
We presented a new cryptographic bearer token design for offline \ecash systems, enabling anonymous, unforgeable, and untraceable withdrawals from decentralized ATMs. Our construction leverages an unforgeable and \emph{doubly-anonymous} voucher that allows a one-time transfer of coins between an ATM and a user, while hiding their identities from parties not involved in the transaction.

\section{Acknowledgements}

This work was supported in part through NSF award 2112562, 2425891 and 2425892.We thank our shepherd and the anonymous ACNS reviewers for their excellent suggestions and feedback. The views and conclusions in this document are those of
the authors and should not be interpreted as representing the
official policies, either expressed or implied, of the National
Science Foundation.


\bibliographystyle{IEEEtranSN}
{\footnotesize
\bibliography{bib.bib}
}

\appendix

\iffullversion
\section{Cryptographic definitions}
\label{app:defn}

\subsection{Discrete log problem}
Given a multiplicative group \grp of order \grpOrd and group generator \grpGen, the discrete log 
problem can be formally defined with the following experiment

\subsection{IND-CPA security}
The security assumption for encryption to which we reduce is IND-CPA
security~\cite{bellare1998relations}, defined as follows.

 Consider
the following experiment, parameterized by a bit $\indcpaBit \in
\{0,1\}$, where \indcpaAdversary is an adversary algorithm with
knowledge of the encryption scheme $\langle \keyGen, \encrypt,
\decrypt\rangle$.
\begin{center}
	\begin{minipage}{0.5\columnwidth}
		\begin{tabbing}
			*** \= \kill
			Experiment $\Experiment{\indcpaLabel[\indcpaBit]}{\langle \keyGen, \encrypt, 
			\decrypt\rangle}(\indcpaAdversary)$ \\
			\> $\symmKey \gets \keyGen()$ \\
			\> $\indcpaGuess \gets 
			\indcpaAdversary^{\indcpaOracle{\symmKey}{\cdot}{\cdot}{\indcpaBit}}()$ \\
			\> return \indcpaGuess
		\end{tabbing}
	\end{minipage}
\end{center}

Here, \indcpaOracle is an oracle to which the bit \indcpaBit and the
key \symmKey are fixed inputs.  In each invocation to it, the
adversary \indcpaAdversary provides two inputs $\plaintext[0],
\plaintext[1] \in \plaintextSpace$ of the same length, to which the
oracle invocation
\indcpaOracle{\symmKey}{\plaintext[0]}{\plaintext[1]}{\indcpaBit}
returns $\encrypt{\symmKey}(\plaintext[\indcpaBit])$.  The adversary's
advantage is defined as

\begin{align*}
	\Advantage{\indcpaLabel}{\langle \keyGen, \encrypt, \decrypt\rangle}(\indcpaAdversary)
	& = \prob{\Experiment{\indcpaLabel[1]}{\langle \keyGen, \encrypt, 
	\decrypt\rangle}(\indcpaAdversary) = 1} \\
	& \qquad -~\prob{\Experiment{\indcpaLabel[0]}{\langle \keyGen, \encrypt, 
	\decrypt\rangle}(\indcpaAdversary) = 1} \\
	\Advantage{\indcpaLabel}{\langle \keyGen, \encrypt, \decrypt\rangle}(\timeBound, \oracleQueries) 
	&
	= \max_{\indcpaAdversary} \Advantage{\indcpaLabel}{\langle \keyGen, \encrypt, 
	\decrypt\rangle}(\indcpaAdversary)
\end{align*}

\noindent
where the maximum is taken over all adversaries \indcpaAdversary running
in time at most \timeBound and making at most \oracleQueries queries. 

\subsection{Pseudorandom functions}
We recall the
standard security definition of a PRF below

\begin{defn}[Pseudorandom function]
	\label{dfn:prp}
	Let $\genericFnFamily: \genericFnKeyspace \times \genericFnDomain
	\rightarrow \genericFnDomain$ be a function family,
	$\allFuncs{\genericFnDomain}$ be the set of all functions with domain
	and range \genericFnDomain, and \prfAdversary be an algorithm that
	takes an oracle of type $\genericFnDomain \rightarrow
	\genericFnDomain$ and returns a bit.  Let
	
	\smallskip
	\begin{minipage}[t]{0.4\columnwidth}
		\begin{tabbing}
			***\=***\=\kill
			Experiment $\Experiment{\prfLabel[0]}{\genericFnFamily}(\prfAdversary)$ \\
			\> $\genericFn \getsr \allFuncs{\genericFnDomain}$ \\
			\> $\prfGuess \leftarrow \prfAdversary^{\genericFn(\cdot)}()$ \\
			\> return $\prfGuess$
		\end{tabbing}
	\end{minipage}
	\hfill
	\begin{minipage}[t]{0.4\columnwidth}
		\begin{tabbing}
			***\=***\=\kill
			Experiment $\Experiment{\prfLabel[1]}{\genericFnFamily}(\prfAdversary)$ \\
			\> $\genericFnKey \getsr \genericFnKeyspace$, $\genericFn \gets 
			\genericFnFamily[\genericFnKey]$ \\
			\> $\prfGuess \leftarrow \prfAdversary^{\genericFn(\cdot)}()$ \\
			\> return $\prfGuess$
		\end{tabbing}
	\end{minipage}
	
	\smallskip
	\noindent Then,
	
	\begin{align*}
		\Advantage{\prfLabel}{\genericFnFamily}(\prfAdversary)
		& = \left|\prob{\Experiment{\prfLabel[1]}{\genericFnFamily}(\prfAdversary) = 1} - 
		\prob{\Experiment{\prfLabel[0]}{\genericFnFamily}(\prfAdversary) = 1} \right| \\
		\Advantage{\prfLabel}{\genericFnFamily}(\timeBound, \oracleQueries)
		& = \max_{\prfAdversary} \Advantage{\prfLabel}{\genericFnFamily}(\prfAdversary)
	\end{align*}
	
	\noindent
	where the maximum is taken over all adversaries \prfAdversary that run in
	time at most \timeBound and make at most \forwardOracleQueries queries
	to \genericFn.
\end{defn}

\subsection{Commitment scheme}
We will require
for security that the commitment scheme provides computational hiding and binding
properties. These properties can be defined by the experiments below.

\myparagraph{Hiding}
Intuitively, the property implies that a commitment to a message
\plaintext hides the message against an adversary.

\begin{center}
	\begin{minipage}{0.5\columnwidth}
		\begin{tabbing}
			*** \= \kill
			Experiment $\Experiment{\hidingLabel[\hidingBit]}{\commFn}(\hidingAdversary)$ \\
			\> $\hidingGuess \gets \hidingAdversary^{\commFn(\cdot, \cdot, \hidingBit)}()$ \\
			\> return \hidingGuess
		\end{tabbing}
	\end{minipage}
\end{center}

Here, the adversary is given oracle access to $\commFn$ to which the bit \hidingBit is a
fixed input.  In each invocation to it, the adversary \hidingAdversary
provides two inputs $\plaintext[0], \plaintext[1] \in \plaintextSpace$
of the same length, to which the oracle invocation
\commFn(\plaintext[0], \plaintext[1], \hidingBit) returns
$\commFn(\plaintext[\hidingBit])$.  We will define the adversary's
advantage as

\begin{align*}
	\Advantage{\hidingLabel}{\commFn}(\hidingAdversary)
	& = \left| \prob{\Experiment{\hidingLabel[0]}{\commFn}(\hidingAdversary) = 1} \right.\\
	& \left. \qquad -~\prob{\Experiment{\hidingLabel[1]}{\commFn}(\hidingAdversary) = 1} \right| \\
	\Advantage{\hidingLabel}{\commFn}(\timeBound, \oracleQueries) &
	= \max_{\hidingAdversary} \Advantage{\hidingLabel}{\commFn}(\hidingAdversary)
\end{align*}

\noindent
where the maximum is taken over all adversaries \hidingAdversary
running in time at most \timeBound and making at most \oracleQueries
queries to its oracle.

\myparagraph{Binding} Intuitively, this property implies that a
commitment uniquely identifies the message for which it is a
commitment.  In other words, an adversary cannot produce two messages
that produce the same commitment.

\begin{center}
	\begin{minipage}{0.5\columnwidth}
		\begin{tabbing}
			*** \= *** \= \kill
			Experiment $\Experiment{\bindingLabel}{\commFn}(\bindingAdversary)$ \\
			\> $\langle \plaintext[0], \plaintext[1], \genericRand{0}, \genericRand{1} \rangle \gets 
			\bindingAdversary^{\commFn(\cdot)}()$ \\
			\> If $\commFn(\plaintext[0]; \genericRand{0}) = \commFn(\plaintext[1]; \genericRand{1})$ \\
			\> \> then output 1 \\
			\> \> else output 0
		\end{tabbing}
	\end{minipage}
\end{center}

We define the advantage of the adversary as 

\begin{align*}
	\Advantage{\bindingLabel}{\commFn}(\bindingAdversary)
	& = \prob{\Experiment{\bindingLabel}{\commFn}(\bindingAdversary) = 1} \\
	\Advantage{\bindingLabel}{\commFn}(\timeBound, \oracleQueries) 
	& = \max_{\bindingAdversary} \Advantage{\bindingLabel}{\commFn}(\bindingAdversary)
\end{align*}

\noindent
where the maximum is taken over all adversaries \bindingAdversary running in
time at most \timeBound.

\subsection{Digital signatures}

For security we require unforgeability under a chosen message
attack for the digital signature scheme. We recall the definition here.

\begin{center}
	\begin{minipage}{0.95\columnwidth}
		\begin{tabbing}
			*** \= *** \= \kill
			Experiment $\Experiment{\ufcmaLabel}{\sigScheme}(\ufcmaAdversary)$ \\
			\> $\langle \pubKey, \privKey \rangle \gets \signKeyGen()$ \\
			\> $\langle \msg, \genericSig \rangle \gets 
			\ufcmaAdversary^{\sign{\privKey}(\cdot)}(\pubKey)$  \\
			\> if $\vrfy{\pubKey}(\msg, \genericSig) = 1 \land \mbox{\msg was not queried to 
			$\sign{\privKey}(\cdot)$}$ \\
			\> \> then output 1 \\
			\> \> else output 0 
		\end{tabbing}
	\end{minipage}
\end{center}

We define

\begin{align*}
	\Advantage{\ufcmaLabel}{\sigScheme}(\ufcmaAdversary) & = 
	\prob{\Experiment{\ufcmaLabel}{\sigScheme}(\ufcmaAdversary) = 1} \\
	\Advantage{\ufcmaLabel}{\sigScheme}(\timeBound, \oracleQueries) & = \max_{\ufcmaAdversary} 
	\Advantage{\ufcmaLabel}{\sigScheme}(\ufcmaAdversary) 
\end{align*}

\noindent
where the maximum is taken over all adversary algorithms
\ufcmaAdversary that runs in total time \timeBound and makes
\oracleQueries oracle queries.

\subsection{Non-interactive zero-knowledge proofs of knowledge}
We
reduce the security of our protocol to the following adversary
advantages against \zkpName.  \zkpSim offers two interfaces, denoted
\zkpSimHash and \zkpSimProve, that share state between them.
Similarly, \zkpExtract offers two interfaces, denoted \zkpExtractHash
and \zkpExtractGen, that share state between them.  In addition,
\zkpExtractGen has oracle access to an oracle (which, in turn, can
query \zkpExtractHash) that \zkpExtractGen can reset and have it
re-execute on the same random tape.

\subsubsection{Witness extraction}
\label{sec:background:zkp:soundness}

We adopt the standard soundness definition for noninteractive proofs
of knowledge.  We define a proving adversary \zkpProverAdversary that
participates in the experiment defined below.  \zkpProverAdversary is
any algorithm that can produce proofs that \zkpVerify will accept with
some probability.

\smallskip
\noindent
\begin{minipage}[t]{0.4\columnwidth}
	\begin{tabbing}
		** \= ** \= \kill
		Experiment $\Experiment{\soundnessLabel[0]}{\zkpName}(\zkpProverAdversary, \langle 
		\statement, \witness \rangle)$ \\
		\> $\randomOracle \getsr \randomOracles$ \\
		\> $\zkp \gets \zkpProverAdversary[\witness]^{\randomOracle}(\statement)$ \\
		\> if $\zkpVerify(\statement, \zkp)$ \\
		\> \> then return 1 \\
		\> \> else return 0
	\end{tabbing}
\end{minipage}
\hfill
\begin{minipage}[t]{0.45\columnwidth}
	\begin{tabbing}
		** \= ** \= \kill
		Experiment $\Experiment{\soundnessLabel[1]}{\zkpName}(\zkpProverAdversary, \statement)$ \\
		\> $\witness \gets 
		\zkpExtractGen^{\zkpProverAdversary^{\zkpExtractHash}(\statement)}(\statement)$ \\
		\> if $(\statement, \witness) \in \relation$\\
		\> \> then return 1 \\
		\> \> else return 0
	\end{tabbing}
\end{minipage}

\smallskip
We define

\begin{align*}
	\Advantage{\soundnessLabel}{\zkpName}(\zkpProverAdversary)
	& = \max_{\langle \statement,\witness\rangle \in \relation}
	\left(\begin{array}[m]{@{}r@{}}
		\prob{\Experiment{\soundnessLabel[0]}{\zkpName}(\zkpProverAdversary, \langle \statement, 
		\witness\rangle) = 1} \\
		-~\prob{\Experiment{\soundnessLabel[1]}{\zkpName}(\zkpProverAdversary, \statement) = 1}
	\end{array}\right) \\
	\Advantage{\soundnessLabel}{\zkpName}(\timeBound, \oracleQueries)
	& = \max_{\zkpProverAdversary} \Advantage{\soundnessLabel}{\zkpName}(\zkpProverAdversary)
\end{align*}

\noindent
where the maximum is taken over all adversaries \zkpProverAdversary
running in total time \timeBound and making at most \oracleQueries
oracle queries. 

\subsubsection{Zero knowledge}
\label{sec:background:zkp:zeroknowledge}
We define a distinguishing adversary to be an algorithm
\zkpVerifierAdversary that can participate in either of the
experiments \Experiment{\zkLabel[\zkpVerifierAdversaryBit]}{\zkpName}
defined below:

\smallskip
\noindent
\begin{minipage}[t]{0.5\columnwidth}
	\begin{tabbing}
		\hspace{1em} \= \hspace{1em} \= \kill
		Experiment $\Experiment{\zkLabel[0]}{\zkpName}(\zkpVerifierAdversary)$ \\
		\> $(\statement, \witness) \getsr \relation$ \\
		\> $\randomOracle \getsr \randomOracles$ \\
		\> $\zkp \gets \zkpGen[\witness]^{\randomOracle}(\statement)$ \\
		\> $\zkpVerifierAdversaryBit \gets \zkpVerifierAdversary^{\randomOracle}(\statement, \zkp)$ \\
		\> return \zkpVerifierAdversaryBit
	\end{tabbing}
\end{minipage}
\hfill
\begin{minipage}[t]{0.5\columnwidth}
	\begin{tabbing}
		\hspace{1em} \= \hspace{1em} \= \kill
		Experiment $\Experiment{\zkLabel[1]}{\zkpName}(\zkpVerifierAdversary)$ \\
		\> $(\statement, \witness) \getsr \relation$ \\
		\> $\zkp \gets \zkpSimProve(\statement)$ \\
		\> $\zkpVerifierAdversaryGuess \gets \zkpVerifierAdversary^{\zkpSimHash}(\statement, \zkp)$ \\
		\> return \zkpVerifierAdversaryBit
	\end{tabbing}
\end{minipage}

\smallskip

In words, the adversary \zkpVerifierAdversary must distinguish between
a real proof output from
$\zkpGen[\witness]^{\randomOracle}(\statement)$ and a proof output
from the simulator $\zkpSimProve(\statement)$ without knowledge of the
witness \witness but with the ability to implement the hash function
\zkpSimHash.  This permits \zkpSim to leverage the standard technique
of ``backpatching'' the random oracle outputs on inputs that
\zkpVerifierAdversary has not yet queried.  We define

\begin{align*}
	\Advantage{\zkLabel}{\zkpName}(\zkpVerifierAdversary)
	& = \prob{\Experiment{\zkLabel[1]}{\zkpName}(\zkpVerifierAdversary) = 1} - 
	\prob{\Experiment{\zkLabel[0]}{\zkpName}(\zkpVerifierAdversary) = 1} \\
	\Advantage{\zkLabel}{\zkpName}(\timeBound, \oracleQueries)
	& = \max_{\zkpVerifierAdversary} \Advantage{\zkLabel}{\zkpName}(\zkpVerifierAdversary)
\end{align*}

\noindent
where the maximum is taken over all adversaries \zkpVerifierAdversary
making \oracleQueries random oracle queries and running in time
\timeBound.

\subsection{Blind signatures with proofs of knowledge}

The security property
for blinding is defined in the following way, where \blindSigAdversary
is a signing adversary.

\begin{center}
	\begin{minipage}{0.5\columnwidth}
		\begin{tabbing}
			*** \= *** \kill
			Experiment 
			$\Experiment{\blindSigLabel[\blindSigBit]}{\blindSigScheme}(\blindSigAdversary)$ \\
			\> $\langle \pubKey, \privKey\rangle \gets \bkeygen()$ \\
			\> $\blindSigGuess \gets 
			\blindSigAdversary^{\blindSigOracle{\pubKey}{\cdot}{\cdot}{\blindSigBit}}(\pubKey, 
			\privKey)$ \\
			\> return \blindSigGuess
		\end{tabbing}
	\end{minipage}
\end{center}

Here, \blindSigOracle is an oracle to which the public key \pubKey and
the bit \blindSigBit are fixed inputs.  In each invocation to it, the
adversary \blindSigAdversary provides two inputs $\msg[0], \msg[1]$,
to which the oracle invocation
\blindSigOracle{\pubKey}{\msg[0]}{\msg[1]}{\blindSigBit} returns
$\bblind[\pubKey](\msg[\blindSigBit])$.  We will define the
adversary's advantage as

\begin{align*}
	\Advantage{\blindSigLabel}{\blindSigScheme}(\blindSigAdversary)
	& = \left| \prob{\Experiment{\blindSigLabel[0]}{\blindSigScheme}(\blindSigAdversary) = 1} \right.\\
	& \left. \qquad -~\prob{\Experiment{\blindSigLabel[1]}{\blindSigScheme}(\blindSigAdversary) = 1} 
	\right| \\
	\Advantage{\blindSigLabel}{\blindSigScheme}(\timeBound, \oracleQueries) &
	= \max_{\blindSigAdversary} 
	\Advantage{\blindSigLabel}{\blindSigScheme}(\blindSigAdversary)
\end{align*}

\noindent
where the maximum is taken over all adversaries \blindSigAdversary
running in time at most \timeBound and making at most \oracleQueries
queries to its oracle.

The algorithm
\bvrfy{} simply verifies \zkp, returning 1 if it is valid and 0
otherwise. The security property we require is defined by the following experiment 

\begin{center}
	\begin{minipage}{0.95\columnwidth}
		\begin{tabbing}
			*** \= *** \=*** \= \kill
			Experiment $\Experiment{\ufcmaBlindLabel}{\blindSigScheme}(\ufcmaBlindAdversary)$ \\
			\> $\langle \pubKey, \privKey \rangle \gets \bkeygen()$ \\
			\> $\langle \genericCommitment, \zkp \rangle \gets 
			\ufcmaBlindAdversary^{\bsign{\privKey}(\cdot)}(\pubKey)$  \\
			\> if $\vrfy{\pubKey}(\genericCommitment, \zkp) = 1 \land \genericCommitment = 
			\commScheme(\msg, \dots)$\\
			\>\> $\land ~\mbox{\msg was not queried to $\sign{\privKey}(\cdot)$}$ \\
			\> \> \>then output 1 \\
			\> \>\> else output 0 
		\end{tabbing}
	\end{minipage}
\end{center}

We define

\begin{align*}
	\Advantage{\ufcmaBlindLabel}{\blindSigScheme}(\ufcmaBlindAdversary) & = 
	\prob{\Experiment{\ufcmaBlindLabel}{\blindSigScheme}(\ufcmaBlindAdversary) = 1} \\
	\Advantage{\ufcmaBlindLabel}{\blindSigScheme}(\timeBound, \oracleQueries) & = 
	\max_{\ufcmaBlindAdversary} 
	\Advantage{\ufcmaBlindLabel}{\blindSigScheme}(\ufcmaBlindAdversary) 
\end{align*}

\noindent
where the maximum is taken over all adversary algorithms
\ufcmaAdversary that runs in total time \timeBound and makes
\oracleQueries oracle queries.

\subsection{Hash function}

\begin{defn}[Collision resistance]
	
	Let $\genericFnFamily: \genericFnKeyspace \times
	\genericFnDomain \rightarrow \genericFnRange$ be a family of
	functions, and let \Adv{} be an algorithm that returns a pair
	of elements in \genericFnDomain. Let
	
	\indent\begin{minipage}[t]{0.5\textwidth}
		\begin{tabbing}
			****\=****\=\kill
			Experiment $\Experiment{\CRLabel}{\genericFn}(\crAdversary)$ \\
			\> $\genericFnKey \getsr \genericFnKeyspace$ \\
			\> $\plaintext[1], \plaintext[2] \leftarrow \crAdversary(\genericFnKey)$ \\
			\> if $\plaintext[1] \neq \plaintext[2] \wedge \genericFnFamily[\genericFnKey](\plaintext[1]) = 
			\genericFnFamily[\genericFnKey](\plaintext[2])$ \\
			\> \> then return $1$ \\
			\> \> else return $0$
		\end{tabbing}
	\end{minipage}
	
	\noindent Then,
	\begin{align*}
		& \Advantage{\CRLabel}{\genericFnFamily}{\crAdversary} = 
		\prob{ \Experiment{\CRLabel}{\genericFn}(\crAdversary) = 1} \\
		& \Advantage{\CRLabel}{\genericFnFamily}(\timeBound) = 
		\max_{\Adv{}} \Advantage{\CRLabel}{\genericFnFamily}{\crAdversary}
	\end{align*}
	where the maximum is taken over all algorithms \Adv{} that run in time
	at most \timeBound.
\end{defn}

\section{Security Definitions}
\label{app:sec_defn}

For our definitions, we will assume that
$\numHonestRSU$ \rsus are honest out of a total of $\numRSUs$ \rsus,
and $\numHonestVehicles$ users are honest out of a total of
$\numVehicles$ users.

\subsection{Unforgeability of coins}
\label{app:defn:unforgeCoin}
We define the unforgeability property with the following
experiment, where \unforgeAdversary is an adversary representing the
malicious vehicles, able to invoke both the honest RSUs and honest
vehicles as oracles.

\begin{center}
	\begin{minipage}{0.75\columnwidth}
		\begin{tabbing}
			*** \= *** \= *** \= \kill
			Experiment $\Experiment{\unforgeLabel}{\repCoinScheme}(\unforgeAdversary)$ 
			\\
			\> $\langle \sPubKey, \sPrivKey \rangle \gets \bank.\serverInit()$ \\
			\> $\langle \genericCoin, \genericTx \rangle~\gets$ \\
			\> \> $\unforgeAdversary^{\langle \genRSU \rangle_{\rsuID \in 
			\nats[\numHonestRSU]}, \langle 
				\genCar{\vID} \rangle_{\vID \in \nats[\numHonestVehicles]}, \bank} (\langle 
				\vPubKey{\vID'}, 
			\vPrivKey{\vID'} \rangle_{\vID' \in [\numHonestVehicles+1, \numVehicles]})$\\
			\> if $\mbox{\genericCoin is not an output of \rsuWithdrawCoin()}$\\ 
			\>\>  $\land~\bank.\serverVerifyTx(\genericTx, \genericCoin) = 1$ \\
			\>\>\> then return 1 \\
			\>\>\> else return 0  
		\end{tabbing}
	\end{minipage}
\end{center}

We define
\begin{align*}
	\Advantage{\unforgeLabel}{\repCoinScheme}(\unforgeAdversary) & = 
	\prob{\Experiment{\unforgeLabel}{\repCoinScheme}(\unforgeAdversary) = 1} \\
	\Advantage{\unforgeLabel}{\repCoinScheme}(\timeBound, \oracleQueries) & = 
	\max_{\unforgeAdversary}	
	\Advantage{\unforgeLabel}{\repCoinScheme}(\unforgeAdversary) 
\end{align*}
where the maximum is taken over all adversary algorithms
\unforgeAdversary that run in total time \timeBound and such that
\unforgeAdversary makes at most \oracleQueries oracle queries.


Intuitively, the adversary \unforgeAdversary can make queries to the bank, and each 
honest \rsu as 
an oracle with inputs of its choice to obtain coins and vouchers, and it can also invoke 
honest users 
with inputs of its choice to
deposit vouchers.  It eventually produces a coin \genericCoin and a transaction 
\genericTx.  The adversary wins if the bank accepts a 
transaction that spends a coin that was not issued by the bank.

\newcommand{\unforgeVoucherExp}{\Experiment{\unforgeVoucherLabel}{\repCoinScheme}(\unforgeAdversary[1],
 \unforgeAdversary[2])\xspace}
 \newcommand{\unforgeVoucherAdv}{	
 \Advantage{\unforgeVoucherLabel}{\repCoinScheme}(\timeBound, 
 	\oracleQueries)}
 
\subsection{Unforgeability of vouchers}
\label{app:defn:unforgeVoucher}
We define the unforgeability property with the following
experiment, where \unforgeAdversary is an adversary representing the
malicious vehicles and \rsus, able to invoke both the honest RSUs and honest
vehicles as oracles.

\begin{center}
	\begin{minipage}{0.75\columnwidth}
		\begin{tabbing}
			*** \= *** \= *** \= \kill
			Experiment 
			$\Experiment{\unforgeVoucherLabel}{\repCoinScheme}(\unforgeAdversary)$
			 
			\\
			\> $\langle \sPubKey, \sPrivKey \rangle \gets \bank.\serverInit()$ \\
			\> for $\rsuID \in \nats[\numRSUs]$ $\langle \genericSig[\rsuID], 
			\genericPRFKey[\rsuID] 
			\rangle \gets \genRSU.\rsuInit()$  \\
			\> for $\vID \in \nats[\numVehicles]$: $\langle \vPubKey{\vID}, 
			\vPrivKey{\vID} 
			\rangle \gets 
			\genCar.\carInit()$ \\
			\> $\langle \genRSUID, \vID, \vIDDep, \advState \rangle \gets 
			\unforgeAdversary[1](\langle 
			\vPubKey{\vID'}, 
			\vPrivKey{\vID'} \rangle_{\vID' \in [\numHonestVehicles+1, \numVehicles]})$ \\
			\> If $\genRSUID \notin \nats[\numHonestRSU] \lor \vID \notin 
			\nats[\numVehicles] \lor \vIDDep \notin \nats[\numHonestVehicles]$ \\
			\> \> then return 0 \\
			\> $\langle \genericCoin, \genericVoucher \rangle\gets 
			\genRSU.\rsuAssignCoin(\pubKey{\vID, \dots})$\\
			\> $\langle \genericVoucherAlt, \txAlt \rangle~\gets 
			\unforgeAdversary[2]^{\langle 
			\genRSU \rangle_{\rsuID \in 
					\nats[\numHonestRSU]}, \langle 
				\genCar{\vID} \rangle_{\vID \in \nats[\numHonestVehicles]}, \bank} 
				(\advState, \langle 
			\vPubKey{\vID'}, 
			\vPrivKey{\vID'} \rangle_{\vID' \in [\numHonestVehicles+1, \numVehicles]})$\\
			\> If $\genCar{\vIDDep}.\rsuVerifyCoin(\pubKey{\vIDAlt}, \genericCoin, 
			\genericVoucherAlt, \txAlt,  \dots ) = 1 \land ~\pubKey{\vID} \neq 
			\pubKey{\vIDAlt}$\\
			\>\>\> then return 1 \\
			\>\>\>else return 0  
		\end{tabbing}
	\end{minipage}
\end{center}

We define
\begin{align*}
	\Advantage{\unforgeVoucherLabel}{\repCoinScheme}(\unforgeAdversary) & = 
	\prob{\Experiment{\unforgeVoucherLabel}{\repCoinScheme}(\unforgeAdversary) 
	= 1} 
	\\
	\Advantage{\unforgeVoucherLabel}{\repCoinScheme}(\timeBound, 
	\oracleQueries) & 
	= 
	\max_{\unforgeAdversary}	
	\Advantage{\unforgeVoucherLabel}{\repCoinScheme}(\unforgeAdversary) 
\end{align*}
where the maximum is taken over all adversary algorithms
\unforgeAdversary that run in total time \timeBound and such that
\unforgeAdversary makes at most \oracleQueries oracle queries.


Intuitively, the adversary \unforgeAdversary[1] given the public keys of all the 
honest 
\rsus, produces the identity of an honest \rsu \genRSUID. This \rsu produces a coin 
and 
a voucher that is linked to that coin.  \unforgeAdversary[2] can make queries to the 
bank, and each honest \rsu as an oracle with inputs of its choice to obtain coins and 
vouchers, and it can also invoke honest users with inputs of its choice to
deposit vouchers.  It eventually produces a coin \genericCoin, and a voucher 
$\genericVoucher' \neq \genericVoucher$. The adversary wins if the voucher is 
linked to the coin and is issued to a user different to the one the coin was issued 
to by the honest \rsu. 

\subsection{Untraceability of coins and vouchers}
\label{app:defn:loc_privacy}

We define the untraceability of \repCoins as an indistinguishability
experiment between oracle calls to honest RSUs, and a simulator
simulating the RSUs.

\begin{minipage}[t]{0.4\columnwidth}
	\begin{tabbing}
		*** \= *** \= \kill
		Experiment 
		$\Experiment{\locPrivacyLabel[\locPrivacyBit]}{\repCoinScheme}(\locPrivacyAdversary[1],
		 \locPrivacyAdversary[2])$ \\
		\> for $\rsuID \in \nats[\numRSUs]$: $\langle \genericSig[\rsuID], 
		\genericPRFKey[\rsuID] \rangle 
		\gets \genRSU.\rsuInit()$  \\
		\> for $\vID \in \nats[\numVehicles]$: $\langle \vPubKey{\vID}, \vPrivKey{\vID} 
		\rangle \gets 
		\genCar.\carInit()$ \\
		\> $\langle \genRSUID[0], \genRSUID[1], \vID, \locPrivacyAdvState \rangle \gets$ \\
		\> \> $\locPrivacyAdversary[1]^{\langle \genRSU 
		\rangle_{\rsuID\in\nats[\numHonestRSU]}, 
			\langle \genCar{\vIDAlt} \rangle_{\vIDAlt \in \nats[\numHonestVehicles]}} 
		\left(\begin{array}{@{}l@{}}\langle \genericSig[\rsuID], \genericPRFKey[\rsuID] 
		\rangle_{\rsuID 
				\in \nats[\numHonestRSU+1][\numRSUs]},\\ \langle \vPubKey{\vIDAlt}, 
				\vPrivKey{\vIDAlt} 
			\rangle_{\vIDAlt \in [\numHonestVehicles+1, \numVehicles]}\end{array}\right)$ \\
		\> if $\genRSUID[0] \not\in \nats[\numHonestRSU] \vee \genRSUID[1] \not\in 
		\nats[\numHonestRSU] \vee \genRSUID[0] = \genRSUID[1] \vee \vID 
		\not\in\nats[\numHonestVehicles]$ \\
		\> \> return $0$ \\
		\> $\langle \genericCoin, \genericVoucher \rangle \gets 
		\genRSU{\genRSUID[\locPrivacyBit]}.\rsuAssignCoin(\pubKey{\vID}, 
		\ldots)$ \\
		\> $\locPrivacyGuess \gets \locPrivacyAdversary[2]^{\langle \genRSU 
			\rangle_{\rsuID\in\nats[\numHonestRSU]}, \genCar}(\genericCoin, 
			\genericVoucher, 
			\locPrivacyAdvState)$ \\
		\> return \locPrivacyGuess
	\end{tabbing}
\end{minipage}

We then define

\begin{align*}
	\Advantage{\locPrivacyLabel}{\repCoinScheme}(\locPrivacyAdversary[1], 
	\locPrivacyAdversary[2])
	& = 
	\prob{\Experiment{\locPrivacyLabel[1]}{\repCoinScheme}(\locPrivacyAdversary[1], 
		\locPrivacyAdversary[2]) = 1} \\
	& \hspace{1em} 
	-~\prob{\Experiment{\locPrivacyLabel[0]}{\repCoinScheme}(\locPrivacyAdversary[1], 
		\locPrivacyAdversary[2]) = 1} \\
	\Advantage{\locPrivacyLabel}{\repCoinScheme}(\timeBound, \oracleQueries)
	& = \max_{\locPrivacyAdversary[1], \locPrivacyAdversary[2]} 
	\Advantage{\locPrivacyLabel}{\repCoinScheme}(\locPrivacyAdversary[1], 
	\locPrivacyAdversary[2])
\end{align*}
where the maximum is taken over all adversaries
\locPrivacyAdversary[1], \locPrivacyAdversary[2] that run in time at
most \timeBound and make at most \oracleQueries queries to their
oracles. Intuitively, a small advantage implies that the adversary
\locPrivacyAdversary[1], \locPrivacyAdversary[2] is not able to
distinguish between a coin+voucher issued by one of two honest RSUs
of its own choosing, and for an honest user of its choice.

\subsection{Unforgeability of transactions}
\label{app:defn:frame}

\begin{minipage}{0.8\columnwidth}
	\begin{tabbing}
		*** \= *** \= \kill
		Experiment $\Experiment{\frameLabel}{\repCoinScheme}(\frameAdversary[1], 
		\frameAdversary[2])$ \\
		\> $\langle \sPubKey, \sPrivKey \rangle \gets \server.\serverInit()$ \\
		\> for $\rsuID \in \nats[\numRSUs]$: $\langle \genericSig[\rsuID], 
		\genericPRFKey[\rsuID] \rangle 
		\gets \genRSU.\rsuInit()$  \\
		\> for $\vID \in \nats[\numVehicles]$: $\langle  \pubKey{\vID}, \privKey{\vID} \rangle 
		\gets 
		\genCar.\carInit()$ \\
		\> $\langle \vID, \vIDAlt, \genericCoin, \genericVoucher, \advState \rangle~\gets$ \\ 
		\> \> $\frameAdversary[1]^{\langle \genRSU{\rsuIDAlt} \rangle_{\rsuIDAlt \in 
				\nats[\numHonestRSU]}, \langle \genCar{\vIDAltAlt} \rangle_{\vIDAltAlt \in 
				\nats[\numHonestVehicles]}, \bank} \left(\begin{array}{@{}l@{}}\langle 
				\genericSig[\rsuID], 
			\genericPRFKey[\rsuID] \rangle_{\rsuID \in 
			\nats[\numHonestRSU+1][\numRSUs]},\\ \langle 
			\vPubKey{\vID}, \vPrivKey{\vID} \rangle_{\vID \in [\numHonestVehicles+1, 
				\numVehicles]} \end{array}\right)$ \\
		\> if $\vID \not\in \nats[\numHonestVehicles] \vee \vIDAlt \not\in 
		\nats[\numVehicles] \vee \rsuID 
		\not\in \nats[\numRSUs] \vee \genericCoin \not \coinVoucherLink \genericVoucher$ 
		\\
		\> \> return $0$ \\
		\> $\genericTx \gets \genCar{\vID}.\carDepositCoin(\genericCoin, \genericVoucher, 
		\genCar{\vIDAlt}, \dots)$ \\
		\> \bank.\serverVerifyTx(\genericCoin, \genericTx, \genCar{\vIDAlt}) \\
		\> $\genericTx' \gets \frameAdversary[2](\genericCoin, \genericVoucher, 
		\genericTx, 
		\advState)$ \\
		\> if $\genericTx \neq \genericTx' \wedge \bank.\serverVerifyTx(\genericTx', \ldots) 
		= 1 \land 
		\genericCoin \coinVoucherLink \genericTx'$ \\
		\>\> then output 1 \\
		\>\> else return 0
	\end{tabbing}
\end{minipage}

We define

\begin{align*}
	\Advantage{\frameLabel}{\repCoinScheme}(\frameAdversary[1], \frameAdversary[2])
	& = \prob{\Experiment{\frameLabel}{\repCoinScheme}(\frameAdversary[1], 
	\frameAdversary[2]) = 
		1} \\
	\Advantage{\frameLabel}{\repCoinScheme}(\timeBound, \oracleQueries)
	& = \max_{\frameAdversary[1], \frameAdversary[2]} 
	\Advantage{\frameLabel}{\repCoinScheme}(\frameAdversary[1], \frameAdversary[2]) 
\end{align*}

\noindent
where the maximum is taken over all adversary algorithms
\frameAdversary[1], \frameAdversary[2] that run in total time
\timeBound and such that \frameAdversary[1] makes at most
\oracleQueries oracle queries.

Intuitively, the adversary \Adv{1} outputs the identities of two users $\genCar{\vID}, 
\genCar{\vIDAlt}$ and a voucher \genericVoucher that is linked to a coin \genericCoin. 
The coin is 
used in a transaction \genericTx by $\genCar{\vID}$ with $\genCar{\vIDAlt}$ that acts as 
the 
merchant. Then, the adversary \Adv{2} wins if given the coin, voucher and the 
transaction 
\genericTx, produces another transaction $\genericTx' \neq \genericTx$ that uses the 
same coin.

\subsection{Anonymity against bank}
\label{app:defn:anon}

\begin{center}
	\begin{minipage}{0.5\columnwidth}
		\begin{tabbing}
			*** \= *** \= \kill
			Experiment 
			$\Experiment{\anonLabel[\anonBit]}{\repCoinScheme}(\anonAdversary[1], 
			\anonAdversary[2])$ \\
			\> $\langle \sPubKey, \sPrivKey \rangle \gets \bank.\serverInit()$ \\
			\> for $\rsuID \in \nats[\numRSUs]$: $\langle \genericSig[\rsuID], 
			\genericPRFKey[\rsuID] 
			\rangle \gets \genRSU.\rsuInit()$  \\
			\> for $\vID \in \nats[\numVehicles]$: $\langle \vPubKey{\vID}, \vPrivKey{\vID} 
			\rangle \gets 
			\genCar.\carInit()$ \\
			\>  $\langle \vID[0], \vID[1], \rsuID, \rsuIDAlt, \anonAdversaryState \rangle~\gets$ 
			\\
			\> \> $\anonAdversary[1]^{\langle \genRSU{\rsuIDAltAlt} \rangle_{\rsuIDAltAlt \in 
					\nats[\numHonestRSU]}, \langle \genCar{\vID} \rangle_{\vID \in 
					\nats[\numHonestVehicles]}, \bank} 
			\left(\begin{array}{@{}l@{}}\langle \genericSig[\rsuIDAltAlt], 
				\genericPRFKey[\rsuIDAltAlt] 
				\rangle_{\rsuIDAltAlt \in \nats[\numHonestRSU+1][\numRSUs]},\\ \langle 
				\vPubKey{\vID}, 
				\vPrivKey{\vID} \rangle_{\vID \in [\numHonestVehicles+1, 
					\numVehicles]}\end{array}\right)$ 
			\\
			\> if $\vID[0] \not\in \nats[\numHonestVehicles] \vee \vID[1] \not\in 
			\nats[\numHonestVehicles] 
			\vee \vID[0] = \vID[1] \vee \rsuID \not\in \nats[\numHonestRSU] \vee
			\rsuIDAlt \not\in \nats[\numHonestRSU]$ \\
			\> \> return $0$ \\
			\> $\langle \genericCoin, \genericVoucher \rangle \gets 
			\genRSU.\rsuAssignCoin(\pubKey{\vID[\anonBit]}, \ldots)$ \\
			\> If $\genericCoin \coinVoucherLink \genericVoucher$ \\
			\> \> return $0$ \\
			\> $\genericTx \gets \genCar{\vID[\anonBit]}.\carDepositCoin(\genericCoin, 
			\genericVoucher, 
			\genRSU{\rsuIDAlt}, 
			\dots)$ \\
			\> $\anonGuess \gets \anonAdversary[2](\genericCoin, \genericTx, \sPubKey, 
			\sPrivKey, 
			\anonAdversaryState)$ \\
			\> return \anonGuess
		\end{tabbing}
	\end{minipage}
\end{center}

The adversary's advantage is defined as

\begin{align*}
	\Advantage{\anonLabel}{\repCoinScheme}(\anonAdversary[1], \anonAdversary[2])
	& = \prob{\Experiment{\anonLabel[0]}{\repCoinScheme}(\anonAdversary[1], 
		\anonAdversary[2]) = 
		1} \\
	& \qquad -~\prob{\Experiment{\anonLabel[1]}{\repCoinScheme}(\anonAdversary[1], 
		\anonAdversary[2])= 1} \\
	\Advantage{\anonLabel}{\repCoinScheme}(\timeBound, \oracleQueries)  
	& = \max_{\anonAdversary{1}, \anonAdversary{2}} 
	\Advantage{\anonLabel}{\repCoinScheme}(\anonAdversary[1], \anonAdversary[2])
\end{align*}

\noindent
where the maximum is taken over all adversaries $(\anonAdversary[1],
\anonAdversary[2])$ running in time at most \timeBound and making at
most \oracleQueries oracle queries.

In the experiment, the adversary \anonAdversary{1} is given oracle access
to the bank, the honest \rsus and users, and produces the identity of two
distinct, honest users \vID[0], \vID[1] along with an \rsu \genRSU
from which to receive a coin and a voucher \genericVoucher and an honest \rsu
\rsuIDAlt at which to deposit it. The adversary \anonAdversary{2} then
guesses \anonBit. Intuitively, if the adversary is unable to
distinguish between a transaction transcript produced by
$\genCar{\vID[0]}$ and $\genCar{\vID[1]}$, then this implies the
transaction transcript does not reveal the identity of the user.

\subsection{Double spending detection}
\label{app:defn:dbl}
%

\noindent
\begin{minipage}{\columnwidth}
	\begin{tabbing}
		*** \= *** \= \kill
		Experiment $\Experiment{\dblSpendLabel}{\repCoinScheme}(\dblSpendAdversary[1], 
		\dblSpendAdversary[2])$ \\
		\> $\langle \sPubKey, \sPrivKey \rangle \gets \server.\serverInit()$ \\
		\> for $\rsuID \in \nats[\numRSUs]$: $\langle \genericSig[\rsuID], \genericPRFKey[\rsuID] \rangle 
		\gets \genRSU.\rsuInit()$  \\
		\> for $\vID \in \nats[\numVehicles]$: $\langle \vPubKey{\vID}, \vPrivKey{\vID} \rangle \gets 
		\genCar.\carInit()$ \\
		\> $\langle \vID, \rsuID, \dblSpendAdversaryState \rangle~\gets$ \\
		\>\> $\dblSpendAdversary[1]^{\langle \genRSU{\rsuIDAlt} \rangle_{\rsuIDAlt \in 
		\nats[\numHonestRSU]}, \langle \genCar{\vIDAlt} \rangle_{\vIDAlt \in 
		\nats[\numHonestVehicles]}} \left(\begin{array}[m]{@{}l@{}} \langle \genericSig[\rsuIDAlt], 
		\genericPRFKey[\rsuIDAlt] \rangle_{\rsuIDAlt \in [\numHonestRSU+1, \numRSUs]}, \\ \langle 
		\vPubKey{\vIDAlt}, \vPrivKey{\vIDAlt} \rangle_{\vIDAlt \in [\numHonestVehicles+1, 
		\numVehicles]}, \sPubKey\end{array}\right)$ \\
		\> if $\vID \not\in \nats[\numVehicles] \vee \rsuID \not\in\nats[\numHonestRSU]$ \\
		\>\> return $0$ \\  
		\> $\langle \genericCoin, \genericVoucher \rangle \gets 
		\genRSU{\rsuID}.\rsuAssignCoin(\genCar{\vID}, \dots)$\\
		\> $\langle \tx, \txAlt, \rsuIDAlt , \rsuIDAltAlt \rangle \gets \dblSpendAdversary[2]^{\langle 
		\genRSU{\rsuIDAltAlt} \rangle_{\rsuIDAltAlt \in \nats[\numHonestRSU]}, \langle \genCar{\vIDAlt} 
		\rangle_{\vIDAlt \in \nats[\numHonestVehicles]}} (\genericCoin, 
		\dblSpendAdversaryState)$ \\
		\> if \= $\rsuIDAlt \in \nats[\numRSUs] \wedge \rsuIDAltAlt \in \nats[\numRSUs] \wedge 
		\tx.\txCoinLabel = \txAlt.\txCoinLabel = \genericCoin  \wedge \tx \neq \txAlt$ \\
		\>\> $\land~\server.\serverVerifyTx(\tx, \genRSU{\rsuIDAlt}, \dots) = 1$ \\
		\>\> $\land~\server.\serverVerifyTx(\txAlt, \genRSU{\rsuIDAltAlt}, \dots) = 1$ \\
		\>\> $\land~\server.\serverDblSpend(\tx, \txAlt) = 0$ \\
		\> *** \= \kill
		\>\> then return 1 \\
		\>\> else return 0
	\end{tabbing}
\end{minipage}

We define

\begin{align*}
	\Advantage{\dblSpendLabel}{\repCoinScheme}(\dblSpendAdversary[1], \dblSpendAdversary[2]) & 
	= \prob{\Experiment{\dblSpendLabel}{\repCoinScheme}(\dblSpendAdversary[1], 
	\dblSpendAdversary[2]) = 1} \\
	\Advantage{\dblSpendLabel}{\repCoinScheme}(\timeBound, \oracleQueries) & = 
	\max_{\dblSpendAdversary[1], \dblSpendAdversary[2]} 
	\Advantage{\dblSpendLabel}{\repCoinScheme}(\dblSpendAdversary[1], \dblSpendAdversary[2]) 
\end{align*}

\noindent
where the maximum is taken over all adversary algorithms
(\dblSpendAdversary[1], \dblSpendAdversary[2]) that run in total time \timeBound and such that
(\dblSpendAdversary[1], \dblSpendAdversary[2]) make at most \oracleQueries oracle queries.

\input{app_security_proof}
\section{Optimizing efficient of withdrawal in \sys}
\label{app:optimal_withdrawal}

\begin{mdframed}[frametitle={Coin Issuing Mechanism in \sys}, font=\small]
	
	\noindent
	\textbf{Parameters:} Commitment scheme \commSchemeDefn;
        non-interactive zero-knowledge proof scheme $\zkpName =
        \langle \zkpGen$, \zkpVerify, \zkpSim, $\zkpExtract\rangle$
        with random oracle \randomOracle, a collision resistant hash function $\roHash: \{0,1\}^{\ast} \rightarrow \grp$

	
	\smallskip\noindent\underline{$\genRSU.\rsuAssignCoin(\genCar,\coinRandVehicle, \vCert{\vID})$}
	
  \textit{/* Let $\signedComm[1][\rsuID]$ and $\signedComm[2][\rsuID]$ be the signature by the server on the PRF keys $\prfKey[1][\rsuID]$ and $\prfKey[2][\rsuID]$ that have been signed during initialization. */}

	\begin{enumerate}
		\item $\coinRand \getsr \integersModArg{\grpOrd}$
		\item \label{assign:commitPRF1} $\keyCommitment[1][\rsuID] \gets \commScheme(\prfKey[1][\rsuID]; \genericRand{1})$
		\item \label{assign:commitPRF2} $ \keyCommitment[2][\rsuID] \gets \commScheme(\prfKey[2][\rsuID]; \genericRand{2})$
		\item \label{assign:zkp1} $\zkpSig[1][\rsuID] \gets \zprove(\keyCommitment[1][\rsuID],  \sPubKeyBlind, \signedComm[1][\rsuID], \genericRand{1}, \prfKey[1][\rsuID])$
		\item \label{assign:zkp2} $\zkpSig[2][\rsuID] \gets \zprove(\keyCommitment[2][\rsuID], \sPubKeyBlind, \signedComm[2][\rsuID], \genericRand{2}, \prfKey[2][\rsuID])$
		\item $\coinID{} \gets \roHash(\pubKey{\vID}, \coinRandVehicle, \coinRand, \auxInfo)$
		\item $\prfEval[1][\coinID{}] \gets \dyPRF[\prfKey[1][\rsuID]](\coinID{})$
		
		\item $ \prfEval[2][\coinID{}] \gets   \genCarID^{\coinRand} \cdot \dyPRF[\prfKey[2][\rsuID]](\coinID{})$ \\
		
		\item $\zkpPRF[\coinID{}] \gets \zkpGen[\langle \alpha_1, \alpha_2, \beta_1, \beta_2 \rangle]^{\randomOracle}(\keyCommitment[1][\rsuID], \keyCommitment[2][\rsuID], \prfEval[1][\coinID{}], \prfEval[2][\coinID{}], \coinID{}) \\ 
		\{\keyCommitment[1][\rsuID]= \commScheme(\alpha_1; \beta_1)  \land \keyCommitment[2][\rsuID] = \commScheme(\alpha_2; \beta_2)  
		\\ \land \prfEval[1][\coinID{}] = \dyPRF[\alpha_1](\coinID{})  \land \prfEval[2][\coinID{}] = \genCarID^{\coinRand} \cdot \dyPRF[\alpha_2](\coinID{})  \}$
    \item $\genericCoin \assign \langle \zkpSig[1][\rsuID], \zkpSig[2][\rsuID], \signedComm[1][\rsuID], \signedComm[2][\rsuID] \rangle$ 
    \item $\genericVoucher \assign \genericVoucherDefn$ 
    \item Output $\repCoinAbbrv \assign \langle \genericCoin, \genericVoucher \rangle$
	\end{enumerate}
\end{mdframed}

The rest of the interfaces are almost identical in \sys to our \ecash scheme, and therefore we omit the formal descriptions. 
\fi

\mysubsection{Security}
\label{sec:model:security}
We will describe the security guarantees provided by
our \ecash scheme. We note that the following properties of unforgeability, anonymity, authenticity of balance and identification of double-spenders is provided by all \ecash schemes. Our definitions adapt the model to multiple issuers. In all our definitions we assume that there are \numRSUs ATMs out which \numHonestRSU are honest, there are \numVehicles users out of which \numHonestVehicles are honest. For notational simplicity, we assume that $\rsu_1, \dots, \rsu_{\numHonestRSU}$ are honest, and $\rsu_{\numHonestRSU + 1}, \dots, \rsu_{\numRSUs}$ are dishonest. Similarly, we assume that $\user{1}, \dots, \user{\numHonestVehicles}$ are honest, and  $\user{\numHonestVehicles + 1}, \dots, \user{\numVehicles}$ are dishonest. 

\subsubsection{Unforgeability}
In our scheme, only the bank \bank
can mint new coins. Formally, we define this property with a security 
experiment $\Experiment{\unforgeLabel}{\repCoinScheme}$, 
where the adversary represents the malicious users and \rsus, and 
is able to invoke the bank, the honest \rsus and the honest users as oracles.
It eventually produces a coin \genericCoin, a voucher \genericVoucher, possibly linked to the coin, and a transaction \genericTx that spends the coin.  The adversary wins if the bank accepts the transaction, and the coin was not issued by the bank.

\begin{center}
    \begin{tabbing}
        ** \= ** \= ** \= \kill
        Experiment $\Experiment{\unforgeLabel}{\repCoinScheme}(\unforgeAdversary)$ 
        \\
        \> $\langle \sPubKey, \sPrivKey \rangle \assign \bank.\serverInit()$ \\
 \> for $\rsuID \in \nats[\numRSUs]$: $\langle \rsuPubKey{\rsuID}, \rsuPrivKey{\rsuID} 
    \rangle \assign \genRSU.\rsuInit()$  \\
    \> for $\vID \in \nats[\numVehicles]$: $\langle \vPubKey{\vID}, \vPrivKey{\vID} 
    \rangle \assign \genCar.\carInit()$ \\
        \> $\langle \genericCoin, \genericVoucher, \genericTx \rangle \assign
        \unforgeAdversary^{\oracle(\cdot)}\left(\adversaryInputKeyPairs\right)$\\
        \> Let $\setOfWithdrawnCoins$ be the set of outputs from $\server.\serverIssueCoin()$\\
        \> If $\bank.\serverVerifyTx(\genericCoin, \genericVoucher, \genericTx) = 0 \land \genericCoin \notin \setOfWithdrawnCoins$ \\
        \>\> then return 1 \\
        \>\> else return 0  
    \end{tabbing}
\end{center}

,  where \oracle is an oracle answering queries to the honest ATMs $\langle \genRSU \rangle_{\rsuID \in \nats[\numHonestRSU]}$, the honest vehicles $\langle 
 \genCar \rangle_{\vID \in \nats[\numHonestVehicles]}$, and the bank $\bank$

\medskip
\noindent
The adversary's advantage is defined as
\begin{align*}
	\Advantage{\unforgeLabel}{\repCoinScheme}(\unforgeAdversary) & = 
	\prob{\Experiment{\unforgeLabel}{\repCoinScheme}(\unforgeAdversary) = 1} \\
	\Advantage{\unforgeLabel}{\repCoinScheme}(\timeBound, \oracleQueries) & = 
	\max_{\unforgeAdversary}	
	\Advantage{\unforgeLabel}{\repCoinScheme}(\unforgeAdversary) 
\end{align*}
where the maximum is taken over all adversary algorithms
\unforgeAdversary that run in total time \timeBound and such that
\unforgeAdversary makes at most \oracleQueries oracle queries.

\subsubsection{Anonymity}
We will ensure that the coins spent at a merchant and the transactions
submitted to the bank do not reveal any information about the spender,
unless the coin has been double spent. We define this property with an
indistinguishability experiment $\Experiment{\anonLabel[\anonBit]}{\repCoinScheme}(\anonAdversary[1], 
    \anonAdversary[2])$, 
where a distinguisher -- signifying the
bank, a merchant or an ATM -- receives transactions deposited by two
honest users and attempts to distinguish between them.

Intuitively, \anonAdversary[1] selects two honest users \genCar{\vID[0]} and \genCar{\vID[1]} with public keys $\vPubKey{\vID[0]}$ and $\vPubKey{\vID[1]}$ respectively, and two \rsus $\genRSU{\rsuID}$ and $\genRSU{\rsuIDAlt}$. $\genRSU{\rsuID}$ issues a coin and a voucher for $\genCar{\vID[\anonBit]}$. Then, \genCar{\vID[\anonBit]} produces a transaction spending the coin at $\genRSU{\rsuIDAlt}$; $\genRSU{\rsuIDAlt}$ can be malicious as per the experiment but $\genRSU{\rsuID}$ must be honest because otherwise they may collude and trivially win the experiment. \anonAdversary[2] is provided the coin, 
the voucher and the transaction and guesses the bit \anonBit.

\begin{tabbing}
    ** \= ** \= \kill
    Experiment 
    $\Experiment{\anonLabel[\anonBit]}{\repCoinScheme}(\anonAdversary[1], 
    \anonAdversary[2])$ \\
    \> $\langle \sPubKey, \sPrivKey \rangle \assign \bank.\serverInit()$ \\
    \> for $\rsuID \in \nats[\numRSUs]$: $\langle \rsuPubKey{\rsuID}, \rsuPrivKey{\rsuID} 
    \rangle \assign \genRSU.\rsuInit()$  \\
    \> for $\vID \in \nats[\numVehicles]$: $\langle \vPubKey{\vID}, \vPrivKey{\vID} 
    \rangle \assign \genCar.\carInit()$ \\
    \>  $\langle \vID[0], \vID[1], \rsuID, \genDepRSUID, \anonAdversaryState \rangle \assign 
    \anonAdversary[1]^{\oracle} 
    \mathopen{}\left(\adversaryInputKeyPairs \right)\mathclose{}$ 
    \\
    \> if $\vID[0] \not\in \nats[\numHonestVehicles] \vee \vID[1] \not\in 
    \nats[\numHonestVehicles] 
    \vee \vID[0] = \vID[1] \vee \rsuID \not\in \nats[\numHonestRSU] \vee
    \genDepRSUID \not\in \nats[\numVehicles]$ \\
    \> \> return $0$ \\
    \> $\langle \genericCoin, \genericVoucher_{\anonBit} \rangle \assign 
    \genRSU.\rsuAssignCoin(\pubKey{\vID[\anonBit]}, \ldots)$ \\
    \> If $\genericCoin \not \coinVoucherLink \genericVoucher_{\anonBit}$\\
    \> \> return $0$ \\
    \> $\genericTx_{\anonBit} \assign \genCar{\vID[\anonBit]}.\carDepositCoin(\genericCoin, 
    \genericVoucher_\anonBit, \vPubKey{\genDepRSUID}, 
    \dots)$ \\
    \> $\anonGuess \assign \anonAdversary[2]^{\oracle(\cdot)}(\genericCoin, \genericVoucher_{\anonBit}, \genericTx_{\anonBit}, \sPubKey, 
    \sPrivKey, 
    \anonAdversaryState)$ \\
    \> return \anonGuess
\end{tabbing}

, where \oracle is an oracle answering queries to the honest ATMs $\langle \genRSU \rangle_{\rsuID \in \nats[\numHonestRSU]}$, the honest vehicles $\langle 
 \genCar \rangle_{\vID \in \nats[\numHonestVehicles]}$, and the bank $\bank$

\noindent
The adversary's advantage is defined as
\begin{align*}
	\Advantage{\anonLabel}{\repCoinScheme}(\anonAdversary[1], \anonAdversary[2])
	& = \prob{\Experiment{\anonLabel[0]}{\repCoinScheme}(\anonAdversary[1], 
		\anonAdversary[2]) = 1}
             - \prob{\Experiment{\anonLabel[1]}{\repCoinScheme}(\anonAdversary[1], 
		\anonAdversary[2])= 1} \\
	\Advantage{\anonLabel}{\repCoinScheme}(\timeBound, \oracleQueries)  
	& = \max_{\anonAdversary{1}, \anonAdversary{2}} 
	\Advantage{\anonLabel}{\repCoinScheme}(\anonAdversary[1], \anonAdversary[2])
\end{align*}
where the maximum is taken over all adversaries $(\anonAdversary[1],
\anonAdversary[2])$ running in time at most \timeBound and making at
most \oracleQueries oracle queries.

\subsubsection{Authenticity of balance}
We will ensure that an adversary, representing dishonest users and ATMs, cannot dishonestly withdraw from an honest user's account. They also cannot forge unauthorized transactions to credit their own accounts using coins that have been withdrawn by honest users. (This also ensures that an honest user cannot be falsely implicated for double spending). We formally define this property with the experiment $\Experiment{\frameLabel}{\repCoinScheme}(\frameAdversary)$.

Intuitively, the adversary $\frameAdversary[1]$ is given oracle access to the honest parties, and outputs the identity of an honest user $\genCar{\vID}$, the identity of an ATM $\genRSU$ which may be malicious, and a receipt of a withdrawal \userReceipt. Then, \genRSU produces a coin and voucher intended for $\genCar{\vID}$. The adversary $\frameAdversary[2]$ produces a transaction and wins the experiments if the bank accepts the withdrawal using the receipt, but $\genCar{\vID}$ has not performed the withdrawal, or if the bank accepts the transaction 
$\genericTx$ spending the coin \genericCoin but no honest user has performed the transaction.

\noindent
\begin{tabbing}
    ** \= ** \= ** \= \kill	
    Experiment $\Experiment{\frameLabel}{\repCoinScheme}(\frameAdversary)$ \\
\> $\langle \sPubKey, \sPrivKey \rangle \assign \bank.\serverInit()$ \\
    \> for $\rsuID \in \nats[\numRSUs]$: $\langle \rsuPubKey{\rsuID}, \rsuPrivKey{\rsuID} \rangle 
    \gets \genRSU.\rsuInit()$  \\
    \> for $\vID \in \nats[\numVehicles]$: $\langle  \pubKey{\vID}, \privKey{\vID} \rangle 
    \gets 
    \genCar.\carInit()$ \\
    \> $\langle  \vID, \genDepRSUID, \rsuID, \frameAdversaryState \rangle \gets 
       \frameAdversary[1]^{\oracle} \mathopen{}\left(\adversaryInputKeyPairs \right)\mathclose{}$ \\
\> If $\vID \notin \nats[\numHonestVehicles]  \vee \rsuID \notin \nats[\numRSUs]$ \\
\> \> then output 0\\
\> $\langle \genericCoin, \genericVoucher \rangle \assign \genRSU.\rsuAssignCoin(\vPubKey{\vID}, \dots)$\\
\> $\langle \userReceipt, \genericVoucherAlt, \genericTx \rangle \assign \frameAdversary[2]^{\oracle(\cdot)}(\genericCoin, \genericVoucher, \frameAdversaryState)$ \\ 
\> Let $\setOfValidReceipts$ be the set of valid receipts \\
\> \> returned by invocations to 
$\langle \genCar{\vID}.\carGetCoin \rangle_{\vID \in 
\nats[\numHonestVehicles]}$\\
\> If $\server.\serverUpdateWithdrawal(\vPubKey{\vID}, \userReceipt, \rsuPubKey{\rsuID})~\land~\userReceipt \notin \setOfValidReceipts$ \\ 
\> \> $~\vee \bank.\serverVerifyTx(\genericCoin, \genericVoucherAlt, \genericTx) = 0$\\
    \>\>\> then output 1 \\
    \>\>\> else return 0
\end{tabbing}

, where \oracle is an oracle answering queries to the honest ATMs $\langle \genRSU \rangle_{\rsuID \in \nats[\numHonestRSU]}$, the honest vehicles $\langle 
 \genCar \rangle_{\vID \in \nats[\numHonestVehicles]}$, and the $\bank$

\noindent
The adversary's advantage is defined as
\begin{align*}
	\Advantage{\frameLabel}{\repCoinScheme}(\frameAdversary[1], \frameAdversary[2])
	& = \prob{\Experiment{\frameLabel}{\repCoinScheme}(\frameAdversary[1], \frameAdversary[2]) = 
		1} \\
	\Advantage{\frameLabel}{\repCoinScheme}(\timeBound, \oracleQueries)
	& = \max_{\frameAdversary[1], \frameAdversary[2]} 
	\Advantage{\frameLabel}{\repCoinScheme}(\frameAdversary[1], \frameAdversary[2]) 
\end{align*}
where the maximum is taken over all adversary algorithms $(\frameAdversary[1], \frameAdversary[2])$, that run in total time
\timeBound and such that $\frameAdversary[1]$ makes at most
\oracleQueries oracle queries.

\subsubsection{Fair exchange}
We will ensure that a user that receives a coin during a withdrawal must provide a receipt to the ATM otherwise it is unable to spend the coin without being detected by the bank. Combined with authenticity of balance where a user does not provide a receipt until it receives a coin from the ATM, this property implies a fair exchange enabled through the bank acting as the trusted third party\footnote{It is known that a fair exchange without a TTP is impossible}. We define this property using the experiment $\Experiment{\fairexchangeLabel}{\repCoinScheme}(\fairexchangeComb)$. 

Intuitively, in the experiment the adversary \fairexchangeAdversary[1] is given oracle access to all the honest parties, and outputs the identity a user and an honest ATM. The honest ATM issues a coin and a corresponding voucher for the user $\genCar$. The adversary \fairexchangeAdversary[2] wins if given the coin and the voucher it can produce a transaction that is accepted by the bank, but the ATM 
\genRSU did not receive a valid receipt for \genericCoin.

\noindent
\begin{tabbing}
    ** \= ** \= ** \= \kill	
    Experiment $\Experiment{\fairexchangeLabel}{\repCoinScheme}(\fairexchangeComb)$ \\
\> $\langle \sPubKey, \sPrivKey \rangle \assign \bank.\serverInit()$ \\
    \> for $\rsuID \in \nats[\numRSUs]$: $\langle \rsuPubKey{\rsuID}, \rsuPrivKey{\rsuID} \rangle 
    \gets \genRSU.\rsuInit()$  \\
    \> for $\vID \in \nats[\numVehicles]$: $\langle  \pubKey{\vID}, \privKey{\vID} \rangle 
    \gets 
    \genCar.\carInit()$ \\
    \> $\langle \vID, \rsuID, \frameAdversaryState \rangle \gets 
       \frameAdversary[1]^{\oracle} \mathopen{}\left(\adversaryInputKeyPairs \right)\mathclose{}$ \\
\> If $\vID \notin \nats[\numVehicles]  \vee \rsuID \notin \nats[\numHonestRSU]$ \\
\> \> then output 0\\
\> $\langle \genericCoin, \genericVoucher \rangle \assign \genRSU.\rsuAssignCoin(\vPubKey{\vID}, \dots)$\\
\> $\genericTx \assign \frameAdversary[2]^{\oracle(\cdot)}(\genericCoin, \genericVoucher, \frameAdversaryState)$ \\ 
\> Let $\setOfValidReceipts$ be the set of valid receipts \\
\> \> returned by invocations to 
$\langle \genCar{\vID}.\carGetCoin \rangle_{\vID \in 
\nats[\numHonestVehicles]}$\\
\> If $(\bank.\serverVerifyTx(\genericCoin, \genericVoucher, \genericTx) = 0 ~\land ~\text{\genRSU does not have a valid receipt for \genericCoin}) $ \\ 
    \>\> then output 1 \\
    \>\> else return 0
\end{tabbing}

, where \oracle is an oracle answering queries to the honest ATMs $\langle \genRSU \rangle_{\rsuID \in \nats[\numHonestRSU]}$, the honest vehicles $\langle 
 \genCar \rangle_{\vID \in \nats[\numHonestVehicles]}$, and the $\bank$

\noindent
The adversary's advantage is defined as
\begin{align*}
	\Advantage{\fairexchangeLabel}{\repCoinScheme}(\fairexchangeComb)
	& = \prob{\Experiment{\fairexchangeLabel}{\repCoinScheme}(\fairexchangeComb) = 
		1} \\
	\Advantage{\fairexchangeLabel}{\repCoinScheme}(\timeBound, \oracleQueries)
	& = \max_{\fairexchangeComb} 
	\Advantage{\fairexchangeLabel}{\repCoinScheme}(\fairexchangeComb) 
\end{align*}
where the maximum is taken over all adversary algorithms $(\fairexchangeComb)$, that run in total time
\timeBound and such that $\fairexchangeAdversary[1]$ makes at most
\oracleQueries oracle queries.

\subsubsection{Identification of double spenders}
We will ensure that two valid transactions double spending the same coin will identify the user responsible. Alternatively, if the same coin is issued to two different users by a malicious ATM, then the identity of the offending ATM will be disclosed to the bank. We formally define this with the experiment $\Experiment{\dblSpendLabel}{\repCoinScheme}(\dblSpendAdversary[1], \dblSpendAdversary[2])$.

Intuitively, \dblSpendAdversary[1] selects a user $\genCar{\vID}$ and an honest \rsu \genRSU. The ATM issues a coin and a voucher for $\genCar{\vID}$. \dblSpendAdversary[2] then either double issues the coin by creating another voucher \genericVoucherAlt, or double-spends the coin by creating two transactions \genericTx and \genericTx'. If the bank is unable to identify (by outputting the public key) of the double issuing ATM \genRSU or the double spending user  \genCar{\vID} given the transaction, then the adversary wins the experiment. Note that the adversary also wins the experiment if the bank outputs the public key of $\genCar{\vIDAlt} \neq \genCar{\vID}$ or $\rsu_\genRSUIDAlt \neq \genRSU$

\label{appx:proof_dbl}
\noindent
\begin{tabbing}
    ** \= ** \= ** \= \kill
    Experiment $\Experiment{\dblSpendLabel}{\repCoinScheme}(\dblSpendAdversary[1], 
    \dblSpendAdversary[2])$ \\
\> $\langle \sPubKey, \sPrivKey \rangle \assign \bank.\serverInit()$ \\
        \> for $\rsuID \in \nats[\numRSUs]$: $\langle \rsuPubKey{\rsuID}, \rsuPrivKey{\rsuID} 
        \rangle \assign \genRSU.\rsuInit()$  \\
        \> for $\vID \in \nats[\numVehicles]$: $\langle \vPubKey{\vID}, \vPrivKey{\vID} 
        \rangle \assign \genCar.\carInit()$ \\
    \> $\langle \vID, \rsuID, \dblSpendAdversaryState \rangle \assign
    \dblSpendAdversary[1]^{\oracle} \mathopen{}\left(\adversaryInputKeyPairs \right)\mathclose{}$ \\
    \> if $\vID \not\in \nats[\numVehicles] \vee \rsuID \not\in\nats[\numRSUs]$ \\
    \>\> return $0$ \\
	\> $\genericCoin \assign \genRSU.\rsuWithdrawCoin(\dots)$\\
    \> $\langle \genericCoin, \genericVoucher \rangle \assign 
    \genRSU{\rsuID}.\rsuAssignCoin(\genCar{\vID}, \dots)$\\
    \> $\langle \genericVoucherAlt, \genericTx, \genericTxAlt \rangle \assign \dblSpendAdversary[2]^{\oracle} (\genericCoin, \genericVoucher,  
    \dblSpendAdversaryState)$ \\
    \> $\returnVar \gets \server.\serverVerifyTx(\genericCoin, \genericVoucher, \genericTx)$ \\
            \> $\returnVarAlt \gets \server.\serverVerifyTx(\genericCoin, \genericVoucherAlt, \genericTxAlt)$ \\
\> if $\returnVar \neq 0$\\ 
\> \> then return $0$ \\
 \> if $\returnVarAlt \neq \pubKey{\vID} \land \returnVarAlt \neq \rsuPubKey{\rsuID}$ \\
            \> \> then return 1 \\
            \> \> else return 0
\end{tabbing}

, where \oracle is an oracle answering queries to the honest ATMs $\langle \genRSU \rangle_{\rsuID \in \nats[\numHonestRSU]}$, the honest vehicles $\langle 
 \genCar \rangle_{\vID \in \nats[\numHonestVehicles]}$, and the bank $\bank$

\medskip\noindent
We define
\begin{align*}
	\Advantage{\dblSpendLabel}{\repCoinScheme}(\dblSpendAdversary[1], \dblSpendAdversary[2]) & 
	= \prob{\Experiment{\dblSpendLabel}{\repCoinScheme}(\dblSpendAdversary[1], 
	\dblSpendAdversary[2]) = 1} \\
	\Advantage{\dblSpendLabel}{\repCoinScheme}(\timeBound, \oracleQueries) & = 
	\max_{\dblSpendAdversary[1], \dblSpendAdversary[2]} 
	\Advantage{\dblSpendLabel}{\repCoinScheme}(\dblSpendAdversary[1], \dblSpendAdversary[2]) 
\end{align*}

\noindent
where the maximum is taken over all adversary algorithms
(\dblSpendAdversary[1], \dblSpendAdversary[2]) that run in total time \timeBound and such that
(\dblSpendAdversary[1], \dblSpendAdversary[2]) make at most \oracleQueries oracle queries.


\subsubsection{Untraceability}
We will introduce a new security property that is important for schemes such as ours that can dispense cash through intermediaries. If a voucher includes information about the \rsu that issued it, e.g., in the form of a signature, then the voucher will divulge information regarding the whereabouts of the user to the merchant and the bank. \ecash systems typically do not need to protect this information since they only allow withdrawals directly from the bank. However, as we are assuming that \rsus can be physical entities, protecting the user's locations from untrusted merchants becomes critical.

We will ensure that information regarding the honest \rsu where an honest 
user withdrew a coin is not available to the merchant where it deposits coins and to the bank. We 
formalize this using an indistinguishability experiment  $\Experiment{\locPrivacyLabel[\locPrivacyBit]}{\repCoinScheme}(\locPrivacyAdversary[1],
     \locPrivacyAdversary[2])$ as an inability to distinguish
a coin + voucher dispensed by one of two honest \rsus to an 
honest user of the adversary's choice

\begin{tabbing}
    *** \= *** \= \kill
    Experiment 
    $\Experiment{\locPrivacyLabel[\locPrivacyBit]}{\repCoinScheme}(\locPrivacyAdversary[1],
     \locPrivacyAdversary[2])$ \\
	\> $\langle \sPubKey, \sPrivKey \rangle \assign \bank.\serverInit()$ \\
    \> for $\rsuID \in \nats[\numRSUs]$: $\langle \rsuPubKey{\rsuID}, \rsuPrivKey{\rsuID} 
        \rangle \assign \genRSU.\rsuInit()$  \\
        \> for $\vID \in \nats[\numVehicles]$: $\langle \vPubKey{\vID}, \vPrivKey{\vID} 
        \rangle \assign \genCar.\carInit()$ \\
        \> $\langle \genRSUID[0], \genRSUID[1], \vID, \locPrivacyAdvState \rangle \assign \locPrivacyAdversary[1]^{\oracle} 
    \mathopen{}\left(\adversaryInputKeyPairs \right)\mathclose{}$ \\
    \> if $\genRSUID[0] \not\in \nats[\numHonestRSU] \vee \genRSUID[1] \not\in 
    \nats[\numHonestRSU] \vee \genRSUID[0] = \genRSUID[1] \vee \vID 
    \not\in\nats[\numHonestVehicles]$ \\
    \> \> return $0$ \\
    \> $\langle \genericCoin, \genericVoucher \rangle \assign 
    \genRSU{\genRSUID[\locPrivacyBit]}.\rsuAssignCoin(\pubKey{\vID}, 
    \ldots)$ \\
    \> $\locPrivacyGuess \assign \locPrivacyAdversary[2]^{\oracle}(\genericCoin, 
        \genericVoucher, 
        \locPrivacyAdvState)$ \\
    \> return \locPrivacyGuess
\end{tabbing}

, where \oracle is an oracle answering queries to the honest ATMs $\langle \genRSU \rangle_{\rsuID \in \nats[\numHonestRSU]}$, the honest vehicles $\langle 
 \genCar \rangle_{\vID \in \nats[\numHonestVehicles]}$, and the bank $\bank$

\medskip
\noindent
We then define
\begin{align*}
	\Advantage{\locPrivacyLabel}{\repCoinScheme}(\locPrivacyAdversary[1], 
	\locPrivacyAdversary[2])
	& = 
	\prob{\Experiment{\locPrivacyLabel[1]}{\repCoinScheme}(\locPrivacyAdversary[1], 
		\locPrivacyAdversary[2]) = 1}
	- \prob{\Experiment{\locPrivacyLabel[0]}{\repCoinScheme}(\locPrivacyAdversary[1], 
		\locPrivacyAdversary[2]) = 1} \\
	\Advantage{\locPrivacyLabel}{\repCoinScheme}(\timeBound, \oracleQueries)
	& = \max_{\locPrivacyAdversary[1], \locPrivacyAdversary[2]} 
	\Advantage{\locPrivacyLabel}{\repCoinScheme}(\locPrivacyAdversary[1], 
	\locPrivacyAdversary[2])
\end{align*}
where the maximum is taken over all adversaries
\locPrivacyAdversary[1], \locPrivacyAdversary[2] that run in time at	
most \timeBound and make at most \oracleQueries queries to their
oracles. Intuitively, a small advantage implies that the adversaries is not able to
distinguish between a coin+voucher issued by one of two honest ATMs
of its own choosing, and for an honest user of its choice.

\section{Security Analysis}

We formally prove that our scheme satisfies all the properties outlined in \secref{sec:model:security}

\subsubsection{Proof of unforgeability}

\begin{theorem}
	
	The \ecash scheme described in \secref{sec:scheme} ensures that coins are 
	unforgeable
	
\[	
\unforgeAdv \le \ufcmaBlindAdv + \crAdv
\]

where \unforgeAdversary makes \oracleQueries to the signature oracle. 
	
\end{theorem}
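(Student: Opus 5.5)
The plan is a direct reduction. Any adversary \unforgeAdversary that wins the unforgeability experiment either breaks the blind signature scheme \blindSigScheme or finds a collision in \crHash. The reference point is the coin format of the non-compact protocol: a coin is $\langle \keyCommitment[1], \keyCommitment[2], \rsuPrivKeyCommitment, \signedCoin\rangle$. The bank, in \serverVerifyTx (following \stepref{verifyTx:coin} of \rsuVerifyCoin), accepts only if $\bvrfy{\sPubKeyBlind}(\crHash(\keyCommitment[1], \keyCommitment[2], \rsuPrivKeyCommitment), \signedCoin)$ succeeds. So a winning output $\langle \genericCoin, \genericVoucher, \genericTx\rangle$ with $\genericCoin \notin \setOfWithdrawnCoins$ contains a valid signature on the digest $m^\ast = \crHash(\keyCommitment[1], \keyCommitment[2], \rsuPrivKeyCommitment)$.

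\textbf{Constructing the forger.} I will build \ufcmaBlindAdversary, which receives \sPubKeyBlind from its challenger and runs \unforgeAdversary internally. It simulates everything else honestly. It generates the bank's remaining keys $\langle \sPubKey, \sPrivKey\rangle$ and $\langle \sPubKeyBlindZKP, \sPrivKeyBlindZKP\rangle$. It generates all ATM and user keys and hands the corrupted ones to \unforgeAdversary. Honest ATMs, honest users and all other bank interfaces are run with real secrets. The one exception is the bank's \bsign{\sPrivKeyBlind} step (\stepref{withdraw:bsign}), which is answered by forwarding the blinded message \hashedCoin to the signing oracle. Honest ATMs' \rsuWithdrawCoin calls go through the same oracle. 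As a result, the simulated view is identical to the real experiment.

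For bookkeeping, \ufcmaBlindAdversary records each coin issued by \serverIssueCoin, which defines \setOfWithdrawnCoins. For honest-ATM requests it also records the digests $\crHash(\cdot)$ that were signed. When \unforgeAdversary outputs its transaction, \ufcmaBlindAdversary outputs $\langle m^\ast, \signedCoin\rangle$ as its forgery.

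\textbf{Case split.} Condition on \unforgeAdversary winning. Either $m^\ast$ was never (the underlying message of) a query to the signing oracle, or it was. In the first case the output is a valid \textsf{uf-cma-blind} forgery, contributing at most \ufcmaBlindAdv.

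In the second case, $m^\ast = \crHash(\keyCommitment[1]', \keyCommitment[2]', \rsuPrivKeyCommitment')$ for some triple belonging to an issued coin. Because $\genericCoin \notin \setOfWithdrawnCoins$, the triple $(\keyCommitment[1], \keyCommitment[2], \rsuPrivKeyCommitment)$ must differ from every issued triple. If it were equal to some issued triple, the coin would be an issued coin up to the signature component, and it is valid under a deterministic RSA-FDH-style signature, so it is identified with that issued coin. The two distinct triples with equal digests give a collision for \crHash, which a reduction \crAdversary outputs. This contributes at most \crAdv. A union bound then gives the stated inequality.

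\textbf{Main obstacle.} The hard step is the second case, because the signing oracle is blind. \ufcmaBlindAdversary does not see the preimages of the blinded queries from corrupted ATMs, so it cannot directly locate the colliding triple $(\keyCommitment[1]', \keyCommitment[2]', \rsuPrivKeyCommitment')$. My first attempt is to define ``queried'' in the \textsf{uf-cma-blind} experiment as $q$ oracle queries yielding $q$ valid message--signature pairs, i.e.\ a one-more-forgery style count. Winning then means producing a valid signature on a digest whose preimage triple is new, and the collision reduction only has to run on the honest-ATM digests it records. If that does not work, the fallback is the random-oracle model for \crHash: program \crHash, extract the adversary's hash queries, and find the pair of queried triples mapping to the signed digest.
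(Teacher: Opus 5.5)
Your decomposition (the accepted coin carries a valid signature $\signedCoin$ on $m^\ast=\crHash(\keyCommitment[1],\keyCommitment[2],\rsuPrivKeyCommitment)$, so either that is a signature forgery or $m^\ast$ was signed for a different triple and $\crHash$ collides) is exactly the dichotomy the paper's two-line proof asserts. When every withdrawal goes through an honest ATM, your reduction is sound: it knows every signed triple, and uniqueness of RSA-FDH signatures makes ``same triple'' mean ``same coin''.

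The gap is the obstacle you flag yourself, and neither of your patches closes it. The experiment gives the adversary the keys of the dishonest ATMs and oracle access to \bank, so it can run \serverIssueCoin on its own. In those sessions your reduction sees only $\hashedCoin$. With RSA blinding, $\hashedCoin=\hat m\cdot r^e \bmod N$ is uniform in $\mathbb{Z}_N^\ast$ whatever $\hat m$ is, and a malicious ATM need not blind anything in particular. So ``the message underlying a query'' is not determined by the execution. Consequently you cannot record \setOfWithdrawnCoins, your case split is undefined, and \crAdversary has no second triple to output.

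The one-more patch fails on counting. A one-more forger must output $q+1$ valid pairs after $q$ sessions. You hold only the $q_h$ honest-ATM pairs plus the adversary's single coin, i.e.\ at most $q$ pairs once a single corrupted-ATM session has occurred. Measuring newness only against honest-ATM digests does not rescue this: it yields a signature ``assumption'' that is simply false, since the adversary can have \bank blind-sign any digest through a corrupted ATM. The random-oracle fallback fails for the same reason: programming $\crHash$ reveals which triples were hashed, not which digests were blindly signed.

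What is missing is a change to the statement, not a cleverer reduction. In the concrete protocol the bank's side of \serverIssueCoin only ever produces blinded signatures $\signedCoin'$. So either \setOfWithdrawnCoins contains no coins at all, in which case an adversary that withdraws through a corrupted ATM and spends the coin honestly wins without forging anything, or it is undefined. The paper's proof glosses over this by silently assuming the colliding triple $(\keyCommitment[1]',\keyCommitment[2]',\rsuPrivKeyCommitment')$ is available.

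The standard repair is a one-more formulation on both sides. After $q$ blind-signing sessions with \bank, the adversary must output $q+1$ distinct coins, each accepted by \serverVerifyTx. The signature assumption becomes one-more unforgeability; the paper's ``$m$ was not queried'' notion is equally meaningless for blind queries.

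With that statement your ingredients suffice, and the collision is found among the adversary's \emph{own} outputs. Distinct accepted coins have distinct triples, by uniqueness of signatures. If their $q+1$ digests are pairwise distinct, you have a one-more forgery. Otherwise two of the adversary's triples collide under $\crHash$, and \crAdversary outputs them directly without ever needing a blinded preimage.
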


\begin{proof}
Let $\unforgeAdversary$ produce the coin $\genericCoin = \langle \keyCommitment[1], \keyCommitment[2], \rsuPrivKeyCommitment, \signedCoin \rangle$. Then, 
it must have either i) produced a valid signature on $\hashFn(\keyCommitment[1], \keyCommitment[2], \rsuPrivKeyCommitment)$ without the bank's private key, or ii) produced to collision on the hash function $\hashFn$ such that $\hashFn(\keyCommitment[1], \keyCommitment[2], \rsuPrivKeyCommitment) =\hashFn(\keyCommitment[1]', \keyCommitment[2]', \rsuPrivKeyCommitment')$. In the latter case, $\unforgeAdv$ can use the coin $\langle \keyCommitment[1]', \keyCommitment[2]', \rsuPrivKeyCommitment', \signedCoin \rangle$ to produce $\genericCoin$. This happens with probability 
at most $\ufcmaBlindAdv + \crAdv$. 
\end{proof}

\subsubsection{Proof of anonymity}

\begin{theorem}
The \ecash scheme described in \secref{sec:scheme} $\repCoinScheme$ provides anonymity assuming that the zero-knowledge proof scheme 
satisfies the zero-knowledge property, the commitment scheme produces computationally hiding commitments and the PRF scheme produces outputs indistinguishable from random, i.e., 
	
	\begin{multline*}
		\probDiff{\anonLabel[0]}{\anonLabel[1]}  \\
		\le 4 \cdot \zkpHidingAdv + 2 \cdot \hidingAdv + 2 \cdot \PRFAdv
	\end{multline*}
	

\end{theorem}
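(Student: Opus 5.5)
The proof is a hybrid argument. It starts from $\Experiment{\anonLabel[0]}{\repCoinScheme}$ and ends at a game that is independent of $\anonBit$, then runs the same chain backwards from $\Experiment{\anonLabel[1]}{\repCoinScheme}$. Each side costs $2\cdot\zkpHidingAdv + \hidingAdv + \PRFAdv$, which accounts for the factors of 2 and 4 in the statement.

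First I list what $\anonAdversary[2]$ sees that depends on the user $\genCar{\vID[\anonBit]}$. Note that $\genRSU$ is honest and the coin $\genericCoin$ is produced independently of the user. The user-dependent items are the following:
\begin{itemize}
\item in the voucher: the commitment $\privKeyCommitment = \commFn(\vPrivKey{\vID[\anonBit]},\userPRFKey[{\vID[\anonBit]}])$, the proof $\zkp{\privKey{\vID}}$, and $\coinRand=\crHash(\privKeyCommitment)$ (also $\voucherPRFEval,\voucherDblSpendToken,\zkp{\coinID}$, but these depend on the user only through $\privKeyCommitment$);
\item in the transaction: $\txPRFEval = \grpGen^{\vPrivKey{\vID[\anonBit]}}\dyPRF[\userPRFKey](\coinID)^{\txRand}$ and the proof $\zkp{\genericTx}$.
\end{itemize}

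The hybrids for the $\anonBit=0$ side are as follows.
\begin{enumerate}
\item Replace $\zkp{\privKey{\vID}}$ by the output of $\zkpSimProve$, programming the random oracle through $\zkpSimHash$. This costs $\zkpHidingAdv$. The same reduction must also answer $\anonAdversary[2]$'s oracle queries to the bank, honest ATMs and honest users; it can, since the reduction generates all of those keys itself.
\item Replace $\zkp{\genericTx}$ by a simulated proof, costing another $\zkpHidingAdv$. The other proofs, $\zkp{\coinID}$ and $\zkp{\rsuPrivKey{\rsuID}}$, involve only ATM secrets, so they are identically distributed across $\anonBit$ and need no hybrid.
\item With no real proof using its opening, replace $\privKeyCommitment$ by a commitment to a fixed dummy pair. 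This costs $\hidingAdv$. After this step $\coinRand$ and $\coinID$ no longer depend on the user.
\item Replace $\dyPRF[\userPRFKey](\coinID)$ by a uniform group element $U$, costing $\PRFAdv$. The reduction queries its PRF oracle at $\coinID$ (and at any other points the experiment needs for the target user) and otherwise simulates everything itself. Then $\txPRFEval = \grpGen^{\vPrivKey{\vID[\anonBit]}} U^{\txRand}$ is uniform in $\grp$ whenever $\txRand\neq 0$, which holds except with probability $1/\grpOrd$ (I will absorb this into the PRF term or treat it as negligible).
\end{enumerate}
In the final hybrid, the view of $\anonAdversary[2]$ is the same for both bits. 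The triangle inequality then yields the stated bound.

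I expect step 4 to be the main obstacle. $\anonAdversary[2]$ can make oracle queries to the same honest user $\genCar{\vID[\anonBit]}$, and other transactions of that user evaluate $\dyPRF[\userPRFKey]$ under the same key $\userPRFKey$. Moreover, $\userPRFKey$ sits inside $\privKeyCommitment$ and inside the bank's CL signature. So the reduction must never need $\userPRFKey$ explicitly: it must obtain other evaluations from the PRF oracle and simulate the user's registration proofs and all other proofs.

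I will first attempt this by simulating every proof that involves $\userPRFKey$ (all NIZKs, via the simulator and repeated oracle programming, absorbed into the $\zkpHidingAdv$ terms per the paper's accounting). I will also argue that the blind registration signature can be produced on a commitment without the reduction knowing the opening. A second subtlety is the Dodis--Yampolskiy PRF: it is only known to be a PRF under a $q$-type assumption over a restricted domain. I will simply invoke $\PRFAdv$ as the stated assumption, accepting the reduction's extra queries as within the $\oracleQueries$ budget.
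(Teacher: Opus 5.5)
Your proposal is essentially the paper's own proof. The paper likewise simulates the user's NIZK proofs, replaces $\privKeyCommitment$ by a commitment to zeroes, and then replaces $\dyPRF[\userPRFKey](\coinID)$ inside $\txPRFEval$ by a random group element through the inversion $(\txPRFEval/\pubKey{\vID[0]})^{1/\txRand}$; it charges two zero-knowledge terms, one hiding term and one PRF term per side, and joins the two chains at a common final hybrid. The step-4 difficulty you raise (other evaluations and proofs under the same key $\userPRFKey$ appearing in the honest-user oracle answers) is not treated in the paper's proof either, and handling it rigorously would push past the stated constants, since it needs more simulated proofs and the two chains would then end with different users' PRFs replaced.
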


\setcounter{ctr}{-1}
\begin{proof}

Consider $\anonAdversary[2]$'s view in the experiment $\anonExp[0]$. The coin $\langle \keyCommitment[1], \keyCommitment[2], \rsuPrivKeyCommitment, \signedComm[1] \rangle$ is identically distributed in \anonExp[0] and \anonExp[1], so we do not include it in further discussion. Similarly, the random coins $\coinRandVehicle \sample \{0,1\}^{\secParam}$, and the value $\txRand \assign \crHash(\rsuPubKey{\vIDDep}, \coinRandVehicle)$ can be excluded since they are independent of the user's identity. 

The other components of the view include

	\begin{enumerate}
  	\item $\privKeyCommitment \assign \commFn(\vPrivKey{\vID[0]}, \userPRFKey[0])$

    \item $\zkp{\privKey{\vID}} \assign \zprove{\sPubKeyBlindZKP}(\privKeyCommitment, \userCLSig{\privKey{\vID[0]}})$ 
	\item $\coinRand \assign \crHash(\privKeyCommitment)$
    \item $\coinID \assign \coinRand + 1 \bmod \grpOrd$
    \item $\voucherPRFEval \assign \dyPRF[\prfKey[1]](\coinID)$  
\item $\voucherDblSpendToken \assign \dyPRF[\prfKey[2]](0)^{\coinRand}$
    \item $\zkp{\coinID} \assign \zkpGen[\alpha_1, \beta_1, \alpha_2, \beta_2, \gamma, \delta](\keyCommitment[1], \keyCommitment[2], \voucherPRFEval, \voucherDblSpendToken, \rsuPrivKeyCommitment, \coinID, \coinRand)$
    \item $\txPRFEval \assign \pubKey{\vID[0]} \dyPRF[\userPRFKey[\vID[0]]](\coinID)^{\txRand}$
\item $\zkp{\genericTx} \assign \zkpGen[\alpha, \beta, \gamma](\privKeyCommitment, \txPRFEval, \coinID, \txRand)$
	\end{enumerate}

	Also \anonAdversary[2]'s view in \anonExp[1], excluding the random coins are the following
	
	\begin{enumerate}
	\item $\privKeyCommitment' \assign \commFn(\vPrivKey{\vID[1]}, \userPRFKey[1])$
    \item $\zkp{\privKey{\vID}}' \assign \zprove{\sPubKeyBlindZKP}(\privKeyCommitment', \userCLSig{\privKey{\vID[1]}})$
	\item $\coinRand' \assign \crHash(\privKeyCommitment')$
    \item $\coinID' \assign \coinRand' + 1 \bmod \grpOrd$
    \item $\voucherPRFEval' \assign \dyPRF[\prfKey[1]](\coinID')$  
\item $\voucherDblSpendToken' \assign \dyPRF[\prfKey[2]](0)^{\coinRand'}$
    \item $\zkp{\coinID'} \assign \zkpGen[\alpha_1, \beta_1, \alpha_2, \beta_2, \gamma, \delta](\keyCommitment[1], \keyCommitment[2], \voucherPRFEval', \voucherDblSpendToken', \rsuPrivKeyCommitment, \coinID', \coinRand')$   
    \item $\txPRFEval' \assign \pubKey{\vID[1]} \dyPRF[\userPRFKey[\vID[1]]](\coinID')^{\txRand}$
\item $\zkp{\genericTx} \assign \zkpGen[\alpha, \beta, \gamma](\privKeyCommitment, \txPRFEval, \coinID, \txRand)$
	\end{enumerate}

	Now, consider the following sequence of hybrid experiments

	\medskip\nextHyb: This hybrid is identical to \anonExp[0] except that the commitment $\privKeyCommitment$, is replaced by a commitment to zeroes and the NIZK proof of possession of a signature is generated by invoking the simulation \zkpSim. \anonAdversary[2]'s view in \thisHyb is indistinguishable from its view in \anonExp[0] due to the hiding property of the commitment scheme, and the 
	zero-knowledge property of the NIZK scheme.

\medskip\nextHyb: This hybrid is identical to \prevHyb except that the PRF evaluation $\txPRFEval$ is replaced by a random element in \grp. \anonAdversary[2]'s view in \prevHyb is indistinguishable from its view in \thisHyb because if \anonAdversary[2] can distinguish between the two experiments, then it can distinguish between $\dyPRF[\userPRFKey[0]](\simOp{\coinID}) = (\txPRFEval/\pubKey{\vID[0]})^{\txRand^{-1}}$ and a random element in \grp. Note that \grp has a prime order, and thus $\dyPRF[\userPRFKey[0]](\simOp{\coinID})$ is a generator with overwhelming probability. Consequently, 
\anonAdversary[2] can be invoked as a subroutine by \prfAdversary to win  
the PRF experiment with identical advantage. 


Summing up the advantages across all the experiments gives 

\begin{eqnarray*}
\probDiff{\anonExp[0]}{\thisHyb} \\ 
\le 2 \cdot \zkpHidingAdv + \hidingAdv + \PRFAdv
\end{eqnarray*}

A similar series of hybrid experiments show 

\begin{eqnarray*}
\probDiff{\anonExp[1]}{\thisHyb} \\ 
\le 2 \cdot \zkpHidingAdv + \hidingAdv + \PRFAdv
\end{eqnarray*}

Thus, from the above arguments, we have the result.

\end{proof}

\subsubsection{Authenticity of Balance}

\begin{theorem}
	The \ecash scheme described in \secref{sec:scheme} $\repCoinScheme$ ensures authenticity of balance assuming that the signature scheme \sigScheme is unforgeable under chosen message attack, the commitment scheme produces binding commitments, and the NIZK proof scheme has a polynomial time knowledge extractor 
	
	\begin{multline*}
		\frameAdv \le \ufcmaAdv + \bindingAdv + \\
 \dlogAdv + \zkpSoundnessAdv + \dblSpendAdv
	\end{multline*}
	
\end{theorem}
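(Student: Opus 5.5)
The experiment $\Experiment{\frameLabel}{\repCoinScheme}(\frameAdversary)$ has two winning conditions, and I will bound each by a separate event. The first event, $E_1$, is that \serverUpdateWithdrawal accepts a receipt \userReceipt on behalf of an honest $\genCar{\vID}$ even though $\userReceipt \notin \setOfValidReceipts$. The second event, $E_2$, is that \serverVerifyTx accepts a transaction \genericTx spending the coin \genericCoin that the (possibly malicious) ATM issued to $\genCar{\vID}$, although $\genCar{\vID}$ never produced it. A union bound then gives $\frameAdv \le \Pr[E_1] + \Pr[E_2]$, and the remainder of the proof splits $\Pr[E_2]$ further.

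\textbf{Bounding $E_1$.} A receipt is a signature $\dsign{\vSigPrivKey{\vID}}(\langle \mathsf{withdrawal}, \vPubKey{\vID}, \rsuPubKey{\rsuID}, \nonce \rangle)$ of the form computed in \stepref{assign:user_receipt}. I will build a forger \ufcmaAdversary that receives $\vSigPubKey{\vID}$ from its challenger, guessing the index of the honest user. It plays all other parties itself, and answers every honest invocation of \carIssueReceipt for that user by querying its signing oracle. Any receipt accepted by \serverUpdateWithdrawal that is not in \setOfValidReceipts carries a message never sent to the oracle, since the nonce and the ATM key are fixed inside the signed tuple. This gives $\Pr[E_1] \le \ufcmaAdv$; guessing the user costs a factor $\numHonestVehicles$, which I will absorb or state as folded into the advantage.

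\textbf{Bounding $E_2$.} I split $E_2$ on which voucher accompanies the accepted transaction.

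\emph{Case (a).} The transaction is accepted with a voucher $\genericVoucherAlt$ that differs from the one bound to $\privKeyCommitment$ of $\genCar{\vID}$, and some honest spending of \genericCoin has also occurred. The bank's \serverVerifyTx then runs the double-spending/double-issuing branch, and failure to return the correct identity is exactly the event bounded by \dblSpendAdv. I will invoke that bound directly.

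\emph{Case (b).} The adversary produces a fresh transaction \genericTx, tied to the honest user's commitment $\privKeyCommitment = \commFn(\vPrivKey{\vID}, \userPRFKey)$, with \zkp{\genericTx} verifying. I will run \zkpExtract on the adversary, rewinding it under the random oracle, to obtain a witness $(\alpha,\beta,\gamma)$ with $\commFn(\alpha,\gamma;\beta) = \privKeyCommitment$ and $\txPRFEval = \grpGen^{\alpha}\dyPRF[\gamma](\coinID)^{\txRand}$. Extraction failure contributes \zkpSoundnessAdv. If $(\alpha,\gamma) \neq (\vPrivKey{\vID}, \userPRFKey)$, we have two openings of the same commitment, contributing \bindingAdv. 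Otherwise the adversary has produced $\alpha = \vPrivKey{\vID}$, the discrete log of the honest user's public key $\grpGen^{\vPrivKey{\vID}}$. A reduction \dlogAversary plants a DL challenge as $\pubKey{\vID}$ and simulates the honest user's proofs with \zkpSim. The user's identity key is only ever used inside commitments and NIZKs, so the simulation is possible. This contributes \dlogAdv.

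\textbf{Main obstacle.} I expect the DL reduction in Case (b) to be the hardest step. The simulator must answer the bank's blind CL-signing at registration and every honest withdrawal and spend without knowing $\vPrivKey{\vID}$. That requires simulating $\zkp{\privKey{\vID}}$ and \zkp{\genericTx}, and programming the random oracle consistently with later extraction. Simulated proofs also introduce zero-knowledge loss terms that do not appear in the stated bound. I would first try to argue that these terms can be absorbed: either treat them as part of \zkpSoundnessAdv under a simulation-extractability view, or note that the honest user's commitments are perfectly hiding Pedersen commitments, so only the proofs need simulating.
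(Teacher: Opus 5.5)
Your decomposition matches the paper's. Receipt forgery is charged to $\ufcmaAdv$. A spend under the honest voucher is handled by extraction, binding and discrete log. A spend under a different voucher is charged to $\dblSpendAdv$.

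The gap is that your split of $E_2$ is not exhaustive. Case~(a) presupposes that an honest deposit of $\genericCoin$ under $\genericVoucher$ is already recorded in $\coinStore$, and nothing in the frame experiment supplies one. Suppose $\genericVoucherAlt \neq \genericVoucher$ and no such deposit exists. Then $\serverVerifyTx$ has nothing to compare against: it records the coin and returns $0$, the double-spending/double-issuing branch never runs, and $\dblSpendAdv$ bounds nothing.

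The idea you are missing is the fair-exchange abort, which is exactly what the paper uses for this situation. Suppose the (possibly corrupt) ATM takes the receipt and then withholds $\genericCoin$. The honest user sends $\langle \abortWithdrawal, \Sigma, \intentMsg, \genericVoucher, \nonce, \vPubKey{\vID}, \rsuPubKey{\rsuID} \rangle$, and the bank files it in $\invalidList$. The first check of $\serverVerifyTx$ then matches any later deposit of $\genericCoin$ against the stored value $\intentMsg$, the random-oracle hash of the coin and the two public keys. If the deposited voucher differs from the stored $\genericVoucher$, the bank runs the double-issuing extraction on $\voucherDblSpendToken$ and $\voucherDblSpendToken'$ and outputs $\rsuPubKey{\rsuID}$. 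Only a failure of that extraction is charged to $\dblSpendAdv$; an equal voucher falls back to your case~(b).

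Even with this step, one sub-case remains uncovered: the ATM delivers $\genericCoin$ normally, re-issues it under some $\genericVoucherAlt$, and the corrupt holder deposits first. This leaves no trace in $\invalidList$ or $\coinStore$ at deposit time. The paper's proof is silent on it as well; with a corrupt ATM the experiment as written appears to let the adversary win outright. You should state explicitly what you assume there instead of charging it to $\dblSpendAdv$.

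Your two caveats are correct, and the paper's bound simply drops them. The receipt reduction loses a factor $\eta_H$. The discrete-log reduction must also swap in simulated proofs for the target user. That swap is not absorbed by $\zkpSoundnessAdv$: simulation-extractability strengthens extraction but does not remove the zero-knowledge term. What makes it harmless is that the Fiat--Shamir Sigma proofs used here are simulatable up to a statistical loss from random-oracle programming collisions. Together with the perfectly hiding Pedersen commitments, this adds only a negligible term.
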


\begin{proof}

The adversary wins the experiment if one of the following cases takes place: i) forges a withdrawal receipt on behalf of an honest user $\genCar$, ii) does not provide a coin to the user after receiving the receipt, or iii) spends a coin that has been issued to an honest user. For case (i),  \frameAdversary[2] must forge a signature on the receipt under $\genCar$'s private key. This happens with probability at most \ufcmaAdv. 

For case (ii), the honest uses sends an abort withdrawal message $\langle \abortWithdrawal$ $\Sigma$, $\intentMsg$,$\genericVoucher$, $\nonce$, $\vPubKey{\vID}$, $\rsuPubKey{\rsuID} \rangle$. Consequently, the bank adds $\intentMsg$ and all other information to the invalid list \invalidList. If \frameAdversary[2] produces \genericTx spending \genericCoin, then there are two cases i) $\genericVoucher \neq \genericVoucherAlt$, and iii) $\genericVoucher = \genericVoucherAlt$. For the former case, when invoking \serverVerifyTx, the bank detects the double-issuing and outputs $\rsuPubKey{\rsuID}$, failing with probability at most $\dblSpendAdv$. The latter case is equivalent to 
\frameAdversary[2] spending a coin withdrawn by an honest user, which is described next.

For case (iii), note that $\genericTx$ includes the NIZK proof \zkp{\genericTx} proving that $\txPRFEval$ is computed using \pubKey{\vID}, and the spender knows the private key $\privKey{\vID}$ corresponding to $\pubKey{\vID}$. Thus, if $\frameAdversary[2]$ produces \genericTx using the same voucher, then one of the following conditions is true: i) $\frameAdversary[2]$ can de-commit $\privKeyCommitment$ to multiple distinct values, ii) $\frameAdversary[2]$ learns \privKey{\vID} from \pubKey{\vID}, or iii) $\frameAdversary[2]$ violates the soundness guarantees of the NIZK proof $\zkp{\genericTx}$. The probability of this happening is at most $\bindingAdv + \dlogAdv + \zkpSoundnessAdv$.

\end{proof}

\subsubsection{Proof of double spending detection}

\begin{theorem}
	
	The \ecash scheme described in \secref{sec:scheme} enables 
	double spending detection assuming that the commitment scheme produces binding commitments, the NIZK proof scheme has a polynomial time knowledge extractor and \hashFn is a collision resistance hash function, i.e., 
	
	\begin{multline*}
	\dblSpendAdv \le 3 \cdot \bindingAdv + 
	 2 \cdot \zkpSoundnessAdv + \crAdv
	\end{multline*}
	
\end{theorem}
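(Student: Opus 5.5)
The plan is a case analysis on how the second call to \serverVerifyTx fails to return $\pubKey{\vID}$ or $\rsuPubKey{\rsuID}$. In each case I either extract a binding break or a hash collision, or argue that the algebra of \figref{fig:dbl} returns the correct identity. The winning condition assumes both transactions verify: the first returns $0$, and the second returns neither $\pubKey{\vID}$ nor $\rsuPubKey{\rsuID}$. Hence every proof in both tuples $\langle \genericCoin, \genericVoucher, \genericTx\rangle$ and $\langle \genericCoin, \genericVoucherAlt, \genericTxAlt\rangle$ passes \rsuVerifyCoin. This concerns $\zkp{\privKey{\vID}}$, $\zkp{\coinID}$ and $\zkp{\genericTx}$.

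First, I apply the knowledge extractor \zkpExtract once to $\zkp{\coinID}$ of the adversarially produced voucher $\genericVoucherAlt$ (when it differs from the honest $\genericVoucher$) and once to $\zkp{\genericTx}$ or $\zkp{\genericTxAlt}$, whichever is adversarial. This accounts for the term $2\cdot\zkpSoundnessAdv$. After extraction we hold openings $\alpha_1,\alpha_2,\gamma$ of $\keyCommitment[1],\keyCommitment[2],\rsuPrivKeyCommitment$, and an opening $\langle\alpha,\gamma'\rangle$ of the user commitment $\privKeyCommitment$ (or $\privKeyCommitment'$). Consistent with these openings, the extracted statements give
\[
\voucherDblSpendToken' = \grpGen^{\gamma}\dyPRF[\alpha_2](0)^{\coinRandAlt},\qquad
\txPRFEval' = \grpGen^{\alpha}\dyPRF[\gamma'](\coinID')^{\txRandAlt}.
\]
The honest ATM's voucher already fixes openings $\rsuPrivKey{\rsuID}$, $\prfKey[1]$ and $\prfKey[2]$ for the same coin commitments. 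If any extracted opening differs, we obtain a binding break. This happens for $\keyCommitment[2]$, for $\rsuPrivKeyCommitment$, and for the user commitment, which is three breaks and gives the term $3\cdot\bindingAdv$.

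Conditioned on no binding break, the analysis splits into two cases.

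\emph{Case one: the double-issuing tokens agree, $\voucherDblSpendToken=\voucherDblSpendToken'$.} The PRF key is the same and $\dyPRF[\prfKey[2]](0)$ is a generator, so agreement forces $\coinRand=\coinRandAlt$. Since $\coinRand=\crHash(\privKeyCommitment)$ is checked, either $\privKeyCommitment=\privKeyCommitment'$ or we have a collision in \crHash, which gives the term \crAdv. With equal user commitments, binding fixes $\langle\privKey{\vID},\userPRFKey\rangle$. The two double-spending tokens then share $\dyPRF[\userPRFKey](\coinID)$ and differ only in $\txRand\neq\txRandAlt$. The computation in \stepsref{dbl:vehicle_double_spend}{dbl:vehicle_double_spend_output} therefore recovers $\grpGen^{\privKey{\vID}}=\pubKey{\vID}$, and the check $\vPubKey{\vIDAlt}=\vPubKey{\vID}$ passes.

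\emph{Case two: the double-issuing tokens differ and $\coinRand\neq\coinRandAlt$.} Both tokens have the form $\grpGen^{\rsuPrivKey{\rsuID}}\dyPRF[\prfKey[2]](0)^{r}$ with the same key and base. Lines \stepsref{dbl:atm_double_issue}{dbl:atm_double_issue_output} then solve for $\dyPRF[\prfKey[2]](0)$ and return $\rsuPubKey{\rsuID}$. If instead $\coinRand=\coinRandAlt$, we fall back to Case one.

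The main obstacle is the extraction step. Rewinding \dblSpendAdversary[2] while it uses oracles, and running two extractions in a single reduction, must be justified. It must also be shown that the honest ATM's voucher really fixes the openings, for which I would use the honest ATM's known randomness. The transaction-ID degeneracy $\txRand=\txRandAlt$ causes an abort rather than a loss, because $\coinRandVehicle$ is fresh; I would either absorb it into the \crAdv term or note that a replayed identical transaction is excluded.
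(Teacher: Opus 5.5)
Your decomposition uses the same ingredients as the paper: binding on $\privKeyCommitment$, on $\rsuPrivKeyCommitment$ and on the key commitment behind $\voucherDblSpendToken$, two extractions, and one collision. The paper cases on which token ($\txPRFEval$ or $\voucherDblSpendToken$) is malformed. You instead establish well-formedness first and then trace the branches of \figref{fig:dbl}, which usefully makes explicit that $\voucherDblSpendToken=\voucherDblSpendToken'$ forces $\coinRand=\coinRandAlt$, and hence $\privKeyCommitment=\privKeyCommitment'$ up to a collision.

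\textbf{The gap} is ``whichever is adversarial.'' In $\Experiment{\dblSpendLabel}{\repCoinScheme}$ both $\genericTx$ and $\genericTxAlt$ are outputs of $\dblSpendAdversary[2]$; only $\genericCoin,\genericVoucher$ are produced by the experiment. In the central scenario, a corrupt user spending twice, neither transaction is honest. Your Case one needs \emph{both} $\txPRFEval$ and $\txPRFEval'$ to be well formed for the same opening of $\privKeyCommitment$, and \figref{fig:dbl} gives no backstop:
\begin{itemize}
\item Set $W=(\txPRFEval/\txPRFEval')^{1/(\txRand-\txRandAlt)}$. The recovered keys $\txPRFEval/W^{\txRand}$ and $\txPRFEval'/W^{\txRandAlt}$ have ratio $(\txPRFEval/\txPRFEval')\cdot W^{\txRandAlt-\txRand}=1$ for every pair, so the consistency check never fires.
\item If $\txPRFEval$ is well formed but $\txPRFEval'=\grpGen^{\privKey{\vID}}\dyPRF[\userPRFKey](\coinID)^{\txRandAlt}\cdot\Delta$ with $\Delta\neq 1$, the bank returns $\vPubKey{\vID}\cdot\Delta^{\txRand/(\txRand-\txRandAlt)}$.
\end{itemize}
So you must extract from both $\zkp{\genericTx}$ and $\zkp{\genericTxAlt}$, in addition to $\zkp{\coinID}$ in $\genericVoucherAlt$. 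That gives $3\cdot\zkpSoundnessAdv$, not $2$. The paper's proof undercounts in the same way: it dismisses the $\txPRFEval'$ case as ``identical reasoning'' without charging for it. So the constant $2$ is not what either argument actually delivers.

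Relatedly, $\txRand=\txRandAlt$ cannot be charged to \crAdv. $\dblSpendAdversary[2]$ chooses $\coinRandVehicle$ for both transactions, so equal $\txRand$ needs no collision, and your claim that $\coinRandVehicle$ is fresh does not hold here. Moreover, the literal experiment counts the resulting \abort as a win, since it is neither $\vPubKey{\vID}$ nor $\rsuPubKey{\rsuID}$. Resubmitting $\genericTx$ with $\genericVoucherAlt=\genericVoucher$ already achieves this. Only your second option works: excluding such pairs in the definition, e.g.\ by requiring $\txRand\neq\txRandAlt$. The paper's proof assumes this tacitly.
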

\setcounter{ctr}{-1}
\begin{proof}

The experiment returns 1 if \dblSpendAdversary[2] has produced two valid transaction $\genericTx$ and $\genericTxAlt$, such that either a coin is double spent or double issued, and the algorithm does not return either $\pubKey{\vID}$ or $\rsuPubKey{\rsuID}$. We consider the cases when this happens. Assume that $\privKeyCommitment$ is a commitment to $\privKey{\vID}$, $\keyCommitment[1]$ is a commitment to $\prfKey[1]$ and $\rsuPrivKeyCommitment$ is a commitment to $\rsuPrivKey{\rsuID}$.

\medskip\noindent\underline{Case 1: $\txPRFEval \neq \grpGen^{\privKey{\vID}} \dyPRF[\userPRFKey](\coinRand)^{\txRand}$:}
Assume that $\txPRFEval = \grpGen^{\privKey{\vIDAlt}} \dyPRF[\userPRFKey'](\coinID)^{\txRand}$. For the transaction to be valid, either of the following conditions must have happened: i) \dblSpendAdversary[2] can de-commit \privKeyCommitment to both \privKey{\vID} and \privKey{\vIDAlt}, ii) $\dblSpendAdversary[2]$ produces a collision in 
the hash function such that $\crHash(\privKeyCommitment) = \crHash(\privKeyCommitment')$, or iii) \dblSpendAdversary[2] can violate the soundness guarantees of the NIZK proof $\zkp{\genericTx}$. This happens with probability at most $\bindingAdv + \crAdv + \zkpSoundnessAdv$. 
An identical reasoning exists for the case when $\txPRFEval' \neq \grpGen^{\privKey{\vIDAlt}} \dyPRF[\userPRFKey'](\coinID)^{\txRandAlt} $

\medskip\noindent\underline{Case 2: $\voucherDblSpendToken \neq \grpGen^{\rsuPrivKey{\rsuID}} \dyPRF[\prfKey[1]](0)^{\coinRand}$:}
Let $\voucherDblSpendToken = \grpGen^{\rsuPrivKey{\rsuIDAlt}} \dyPRF[\prfKey[1]'](0)^{\coinRandAlt}$ which means one of the following cases are true i) $\dblSpendAdversary[2]$ must have  de-commited \rsuPrivKeyCommitment to both $\rsuPrivKey{\rsuID}$ and $\rsuPrivKey{\rsuIDAlt}$, ii) \dblSpendAdversary[2] has de-committed $\keyCommitment[1]$ to both $\prfKey[1]$ and $\prfKey[1]'$, or iii) $\dblSpendAdversary[2]$ must have violated the soundness property. of the NIZK proof $\zkp{\coinID}$. This happens with probability at most  $2 \cdot \bindingAdv + \zkpSoundnessAdv$. An identical reasoning exists for the case 
$\voucherDblSpendToken' \neq \grpGen^{\rsuPrivKey{\rsuID}} \dyPRF[\prfKey[1]](0)^{\coinRandAlt}$

\end{proof}
\subsection{Proof of fair exchange}

\begin{theorem}
The Ecash scheme satisfies the fair exchange property defined by experiment $\Experiment{\fairexchangeLabel}{\repCoinScheme}(\fairexchangeComb)$ assuming that $\ro: \bin^{\ast} \rightarrow \bin^{\secParam}$ is a random oracle, \sigScheme satisfies unforgeability under chosen message attacks, and the PRF scheme produces outputs indistinguishable from random, i.e.,

\begin{eqnarray*}
\fairexchangeAdv \le 2 \PRFAdv + 2 \zkpHidingAdv + \ufcmaBlindAdv + \neglProb
\end{eqnarray*}

\end{theorem}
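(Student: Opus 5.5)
The plan is to analyze an adversary $(\fairexchangeComb)$ that wins $\Experiment{\fairexchangeLabel}{\repCoinScheme}$. For such an adversary, $\bank.\serverVerifyTx(\genericCoin,\genericVoucher,\genericTx)=0$ holds, while the honest ATM $\genRSU$ never obtained a valid receipt for $\genericCoin$. Because $\genRSU$ is honest, it follows the protocol of \figref{fig:withdrawal}. It releases $\genericCoin$ only after verifying $\userReceipt$. So if no valid receipt exists, the adversary's view of the withdrawal consists only of $\langle \Sigma, \intentMsg, \genericVoucher, \nonce\rangle$ with $\intentMsg = \roHash(\genericCoin, \vPubKey{\vID}, \rsuPubKey{\rsuID})$. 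The goal is to show that from this view, producing an accepting spend of $\genericCoin$ requires either recovering $\genericCoin$ or forging it. In the experiment $\genericCoin$ is given explicitly to $\fairexchangeAdversary[2]$, but I will read the winning condition as pertaining to the protocol view, in which the coin has not been opened.

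The proof proceeds by a sequence of hybrids over the adversary's view of the withdrawal. In $\hybridArg{1}$, replace $\zkp{\coinID}$ and $\zkp{\rsuPrivKey{\rsuID}}$ in $\genericVoucher$ by simulated proofs from $\zkpSim$, costing $2\cdot\zkpHidingAdv$. In $\hybridArg{2}$ and $\hybridArg{3}$, replace $\voucherPRFEval = \dyPRF[\prfKey[1]](\coinID)$ and $\dyPRF[\prfKey[2]](0)$ inside $\voucherDblSpendToken$ with uniform group elements, costing $2\cdot\PRFAdv$. This step uses that $\prfKey[1],\prfKey[2]$ are sampled by the honest ATM and are never revealed, and that $\genericCoin$ contains only hiding commitments to them. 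After these replacements the voucher is independent of the committed secrets behind $\genericCoin$. The only remaining dependence on $\genericCoin$ is through $\intentMsg$ and the signature $\Sigma$, which signs $\intentMsg$.

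In the final hybrid, I argue in the random oracle model. Recovering $\genericCoin$ from $\intentMsg$ requires querying $\roHash$ on $(\genericCoin,\vPubKey{\vID},\rsuPubKey{\rsuID})$. Since $\genericCoin$ contains the fresh blind signature $\signedCoin$ and fresh commitments with high-entropy randomness, guessing it happens with probability at most $\oracleQueries/2^{\secParam}$, which is absorbed in $\neglProb$. Hence any accepting transaction must use a coin $\genericCoin'$ whose signature $\signedCoin'$ on $\crHash(\keyCommitment[1]',\keyCommitment[2]',\rsuPrivKeyCommitment')$ was never produced by the bank for the adversary. That is a forgery against the blind signature scheme, bounded by $\ufcmaBlindAdv$; I would reuse the reduction from the unforgeability theorem. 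The reduction must embed the challenge verification key and answer $\serverIssueCoin$ queries via the signing oracle. The winning event is verification of the adversary's coin, and it must be shown that its hash is not among the queried messages.

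The main obstacle is reconciling the experiment as literally written, in which $\fairexchangeAdversary[2]$ receives $\genericCoin$, with the intended guarantee. I would first try to formalize the winning condition as the adversary deviating at the receipt step. A winning adversary then never sees the opening, and the bound follows from the hybrid argument. Failing that, the alternative would be the route via \invalidList: the honest ATM's lack of a receipt for $\genericCoin$ is tied to an \abortWithdrawal entry, so $\serverVerifyTx$ on $\genericCoin$ returns a public key rather than $0$ unless $\roHash$ collides. That event is again negligible in the random oracle model.
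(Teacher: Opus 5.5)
Your main argument is the paper's own proof. Because the honest ATM releases $\genericCoin$ only after a valid receipt, the adversary must recover $\genericCoin$ from $\genericVoucher$ or from $\intentMsg$; you dispose of $\genericVoucher$ by simulating its two NIZKs and replacing its two PRF outputs ($2\cdot\mathsf{zk}+2\cdot\mathsf{prf}$), and of $\intentMsg$ by a negligible random-oracle guess or a blind-signature forgery, which is exactly the paper's decomposition and bound. The one piece with no counterpart in the paper, your fallback via \invalidList, would not work: \abortWithdrawal is sent only by a user who has already handed over a receipt, so a user who withholds the receipt never creates an \invalidList entry, and the honest ATM has no step that reports a missing receipt to the bank.
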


\begin{proof}
The adversary (\fairexchangeComb) wins the experiment if it can spend a coin \genericCoin without producing a valid receipt. Since in the invocation of \carGetCoin, the ATM provides the coin \genericCoin only after receiving a receipt from the user, \fairexchangeAdv can obtain the coin without receiving it from the \genRSU only if it can produce the coin from \intentMsg or \genericVoucher. Now, note that the components of $\genericVoucher \gets  \langle \privKeyCommitment, \zkp{\privKey{\vID}}, \voucherPRFEval, \voucherDblSpendToken, \zkp{\coinID}, \zkp{\rsuPrivKey{\rsuID}}, \coinRand \rangle $ that are not already known to \fairexchangeAdv can be simulated without any information of the coin: i) \voucherPRFEval and \voucherDblSpendToken are PRF evaluations under randomly chosen keys and so indistinguishable from random and ii) \zkp{\coinID} and  \zkp{\rsuPrivKey{\rsuID}} are zero-knowledge proofs and can be simulated assuming the zero-knowledge property. Therefore, \fairexchangeAdversary[2] obtains \genericCoin from \genericVoucher with advantage at most $\le 2 \PRFAdv + 2 \zkpHidingAdv$ 

Furthermore, since $\intentMsg = \ro(\genericCoin, \vPubKey{\vID}, \rsuPubKey{\rsuID})$, \fairexchangeAdversary[1] obtains \genericCoin from \intentMsg either if it has randomly queried $\ro$ on $\langle \genericCoin, \vPubKey{\vID}, \rsuPubKey{\rsuID} \rangle$ which happens with probability at most negligible probability in $\secParam$, or \fairexchangeAdversary[1] has produced a valid coin \genericCoin without interacting with the bank, which happens with probability at most \ufcmaBlindAdv. 
\end{proof}

\subsubsection{Proof of untraceability}

\begin{theorem}
	\rsuAssignCoin ensures untraceability of \repCoins assuming that the commitment scheme produces hiding commitments, the PRF scheme produces outputs indistinguishable from random, the NIZK scheme has a polynomial time knowledge extractor and the blind signature scheme produces indistinguishable signatures i.e., 
	
	\begin{multline*}
		\untraceAdv 
		\le 6 \cdot \hidingAdv + 4 \cdot \PRFAdv + \\ 4 \cdot \zkpSoundnessAdv + 2 \cdot \Advantage{\blindSigLabel}{\blindSigScheme}(\blindSigAdversary)
	\end{multline*}
	
\end{theorem}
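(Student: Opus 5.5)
I would prove the bound with a hybrid argument in the style of the anonymity proof. Starting from $\Experiment{\locPrivacyLabel[0]}{\repCoinScheme}(\locPrivacyAdversary[1], \locPrivacyAdversary[2])$ I would move to a hybrid in which the coin and voucher handed to $\locPrivacyAdversary[2]$ carry no information about which honest ATM produced them. I would then run the symmetric chain from $\Experiment{\locPrivacyLabel[1]}{\repCoinScheme}(\locPrivacyAdversary[1], \locPrivacyAdversary[2])$, which accounts for the factors of two in the statement. In each chain I would only need to handle the ATM-dependent components of $\langle \genericCoin, \genericVoucher\rangle$:
\begin{itemize}
\item in the coin, the commitments $\keyCommitment[1], \keyCommitment[2], \rsuPrivKeyCommitment$ and the blind signature $\signedCoin$;
\item in the voucher, the tokens $\voucherPRFEval, \voucherDblSpendToken$ and the proofs $\zkp{\coinID}, \zkp{\rsuPrivKey{\rsuID}}$.
\end{itemize}
The remaining components $\privKeyCommitment$, $\zkp{\privKey{\vID}}$ and $\coinRand$ depend only on the user, so they are identically distributed for $\genRSUID[0]$ and $\genRSUID[1]$ and can be dropped, just as the coin was dropped in the anonymity proof.

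The hybrids for one chain would proceed in this order.
\begin{enumerate}
\item Replace $\keyCommitment[1]$, $\keyCommitment[2]$ and $\rsuPrivKeyCommitment$ by commitments to zero. Each replacement costs $\hidingAdv$, giving three hiding terms per side.
\item Replace $\voucherPRFEval = \dyPRF[\prfKey[1]](\coinID)$ and the factor $\dyPRF[\prfKey[2]](0)$ inside $\voucherDblSpendToken$ by uniform group elements. Each costs $\PRFAdv$, giving two terms per side. Once $\dyPRF[\prfKey[2]](0)^{\coinRand}$ is uniform, it masks $\rsuPubKey{\rsuID}$ and $\voucherDblSpendToken$ becomes uniform.
\item Replace $\signedCoin$ by a blind signature on the changed hash, appealing to the blindness experiment $\Experiment{\blindSigLabel[\blindSigBit]}{\blindSigScheme}$. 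This costs one $\Advantage{\blindSigLabel}{\blindSigScheme}(\blindSigAdversary)$ per side.
\item Handle the proofs $\zkp{\coinID}$ and $\zkp{\rsuPrivKey{\rsuID}}$. These are the two NIZK terms per side.
\end{enumerate}
After these steps both chains reach the same hybrid, and the triangle inequality gives the stated sum.

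The main obstacle is the order in which the hybrids must be applied. The proofs $\zkp{\coinID}$ and $\zkp{\rsuPrivKey{\rsuID}}$ need witnesses consistent with the commitments and PRF values. Replacing the commitments or the PRF outputs first would therefore break the honest prover, so the proofs have to be simulated before steps 1 and 2. The statement, however, charges $\zkpSoundnessAdv$ rather than $\zkpHidingAdv$. I would try to reconcile this by noting that $\locPrivacyAdversary[2]$ keeps oracle access to honest users and ATMs and could submit forged proofs to them. Soundness via $\zkpExtract$ would bound the event that such a forgery is accepted, while simulation of the challenge proofs is implicitly folded in. If that bookkeeping fails, I would add $\zkpHidingAdv$ terms alongside the soundness terms.

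The blind signature step has a similar snag. $\signedCoin$ is obtained by unblinding, and blindness is stated only about the blinded message. I would argue that the coin hash is computed over the now-simulated commitments, so its signature is distributed independently of the ATM. The bank's view at withdrawal is then unlinkable to this hash by the blindness experiment, which is why one blindness term per side suffices.
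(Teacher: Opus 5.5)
Your proposal follows the paper's proof essentially step for step. The paper likewise sets aside $\privKeyCommitment$, $\zkp{\privKey{\vID}}$, $\coinRand$ and $\coinID$ as identically distributed in both experiments. It then uses one hybrid that replaces $\keyCommitment[1], \keyCommitment[2], \rsuPrivKeyCommitment$ by commitments to zero, charging hiding, the NIZK and blindness in that single hop, and a second hybrid that replaces $\voucherPRFEval, \voucherDblSpendToken$ by random group elements. Finally it runs the symmetric chain from the other experiment, giving exactly your per-side count of three hiding, two PRF, two NIZK and one blindness term. Your point that the proofs must be simulated before the commitments and PRF outputs are switched is the right way to sequence what the paper compresses into its first hybrid. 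You are also right to keep the factor $\rsuPubKey{\rsuID}$ in $\voucherDblSpendToken$, which the paper's listing of the view drops.

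The one piece to discard is your attempt to justify charging \zkpSoundnessAdv. Replacing the honestly generated $\zkp{\coinID}$ and $\zkp{\rsuPrivKey{\rsuID}}$ by simulated proofs is a zero-knowledge hop. Nothing in the chain ever extracts a witness from the adversary, and no soundness bound can pay for a simulation step. The paper's own first hybrid in fact appeals to the zero-knowledge property, even though its displayed sum writes \zkpSoundnessAdv. So your fallback is the correct accounting: those four terms should be \zkpHidingAdv, which is the zero-knowledge advantage in this paper's notation. With that substitution your argument and the paper's coincide.
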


\setcounter{ctr}{-1}
\begin{proof}

Consider the view of $\locPrivacyAdversary[2]$ in the 
experiment $\Experiment{\locPrivacyLabel[0]}{\repCoinScheme}(\locPrivacyAdversary[1],
     \locPrivacyAdversary[2])$

\begin{enumerate}
\item $\keyCommitment[1] \assign  \commFn(\prfKey[1][0])$
\item $\keyCommitment[2] \assign  \commFn(\prfKey[2][0])$
\item $\rsuPrivKeyCommitment \assign \commFn(\rsuPrivKey{\genRSUID[0]})$
\item $\signedCoin \assign \bsign(\hashFn(\keyCommitment[1], \keyCommitment[2], \rsuPrivKeyCommitment))$
	\item $\privKeyCommitment \assign \commFn(\vPrivKey{\vID}, \userPRFKey)$
    \item $\zkp{\privKey{\vID}} \assign \zprove{\sPubKeyBlindZKP}(\privKeyCommitment, \userCLSig{\privKey{\vID}})$ 
	\item $\coinRand \assign \crHash(\privKeyCommitment)$
    \item $\coinID \assign \coinRand + 1 \bmod \grpOrd$
    \item $\voucherPRFEval \assign \dyPRF[\prfKey[1][0]](\coinID)$  
\item $\voucherDblSpendToken \assign \dyPRF[\prfKey[2][0]](0)^{\coinRand}$
\item $\zkp{\rsuPrivKey{\genRSUID}} \assign \zprove(\rsuPrivKeyCommitment, \genericSig[\rsuPrivKey{\genRSUID[0]}])$
    \item $\zkp{\coinID} \assign \zkpGen[\alpha_1, \beta_1, \alpha_2, \beta_2, \gamma, \delta](\keyCommitment[1], \keyCommitment[2], \voucherPRFEval, \voucherDblSpendToken, \rsuPrivKeyCommitment, \coinID, \coinRand)$
\end{enumerate}

Also consider the view of $\locPrivacyAdversary[2]$ in the experiment $\Experiment{\locPrivacyLabel[1]}{\repCoinScheme}(\locPrivacyAdversary[1],
     \locPrivacyAdversary[2])$

\begin{enumerate}
\item $\keyCommitment[1]' \assign  \commFn(\prfKey[1][1])$
\item $\keyCommitment[2]' \assign  \commFn(\prfKey[2][1])$
\item $\rsuPrivKeyCommitment' \assign \commFn(\rsuPrivKey{\genRSUID[1]})$
\item $\signedCoin' \assign \bsign(\hashFn(\keyCommitment[1]', \keyCommitment[2]', \rsuPrivKeyCommitment'))$
	\item $\privKeyCommitment \assign \commFn(\vPrivKey{\vID}, \userPRFKey)$
    \item $\zkp{\privKey{\vID}} \assign \zprove{\sPubKeyBlindZKP}(\privKeyCommitment, \userCLSig{\privKey{\vID}})$ 
	\item $\coinRand \assign \crHash(\privKeyCommitment)$
    \item $\coinID \assign \coinRand + 1 \bmod \grpOrd$
    \item $\voucherPRFEval' \assign \dyPRF[\prfKey[1][1]](\coinID)$  
\item $\voucherDblSpendToken' \assign \dyPRF[\prfKey[2][1]](0)^{\coinRand}$
\item $\zkp{\rsuPrivKey{\genRSUID}}' \assign \zprove(\rsuPrivKeyCommitment, \genericSig[\rsuPrivKey{\genRSUID[1]}])$
    \item $\zkp{\coinID}' \assign \zkpGen[\alpha_1, \beta_1, \alpha_2, \beta_2, \gamma, \delta](\keyCommitment[1]', \keyCommitment[2]', \voucherPRFEval', \voucherDblSpendToken', \rsuPrivKeyCommitment', \coinID, \coinRand)$
\end{enumerate}

Note that $\privKeyCommitment, \zkp{\privKey{\vID}}, \coinRand, \coinID$ are identically distributed in both views and they do not provide any advantage to $\locPrivacyAdversary[1], \locPrivacyAdversary[2]$.  

Now consider the following series of experiments.

\medskip\nextHyb: This hybrid is identical to $\Experiment{\locPrivacyLabel[0]}{\repCoinScheme}(\locPrivacyAdversary[1],
     \locPrivacyAdversary[2])$ except that the commitments $\keyCommitment[1], \keyCommitment[2], \rsuPrivKeyCommitment$ are all commitments to 0. \locPrivacyAdversary[2]'s view in \thisHyb is indistinguishable to its view in $\Experiment{\locPrivacyLabel[0]}{\repCoinScheme}(\locPrivacyAdversary[1],
     \locPrivacyAdversary[2])$ assuming that the commitment scheme satisfies the hiding property, the NIZK proof ensures system ensures the zero-knowledge property and the blind signing scheme ensures the blinding property defined in $\Experiment{\blindSigLabel[\blindSigBit]}{\blindSigScheme}(\blindSigAdversary)$

\medskip\nextHyb: This hybrid is identical to \prevHyb, except that the PRF evaluations $\voucherPRFEval, \voucherDblSpendToken$ are replaced by random elements from \grp. \locPrivacyAdversary[2]'s view in \thisHyb is indistinguishable to its view in \prevHyb assuming that \dyPRF produces outputs indistinguishable from random. 

Summing up the advantages across all the hybrids gives 

\begin{eqnarray*}
\probDiff{\Experiment{\locPrivacyLabel[0]}{\repCoinScheme}(\locPrivacyAdversary[1],
     \locPrivacyAdversary[2])}{\thisHyb} \\
\le 3 \cdot \hidingAdv + 2 \cdot \PRFAdv + 2 \cdot \zkpSoundnessAdv + \Advantage{\blindSigLabel}{\blindSigScheme}(\blindSigAdversary)
\end{eqnarray*}

A similar series of experiments can be constructed to show that

\begin{eqnarray*}
\probDiff{\Experiment{\locPrivacyLabel[1]}{\repCoinScheme}(\locPrivacyAdversary[1],
     \locPrivacyAdversary[2])}{\thisHyb} \\
\le 3 \cdot \hidingAdv + 2 \cdot \PRFAdv + 2 \cdot \zkpSoundnessAdv + \Advantage{\blindSigLabel}{\blindSigScheme}(\blindSigAdversary)
\end{eqnarray*}

\end{proof}


\end{document}
w